\theoremstyle{plain}
\newtheorem{theorem}{{Theorem}}[section]
\newtheorem*{theorem*}{{Theorem}}
\newtheorem{proposition}[theorem]{Proposition}
\newtheorem*{proposition*}{Proposition}
\newtheorem{corollary}[theorem]{Corollary}
\newtheorem*{corollary*}{Corollary}
\newtheorem{lemma}[theorem]{Lemma}
\newtheorem*{lemma*}{Lemma}
\theoremstyle{definition}
\newtheorem*{definition*}{Definition}
\theoremstyle{remark}
\newtheorem{remark}[theorem]{Remark}
\newtheorem{notation}[theorem]{Notation}
\definecolor{bleu_sombre}{rgb}{0,0,0.6}
\definecolor{Bl}{rgb}{0,0,0.6}
\definecolor{rouge_sombre}{rgb}{0.8,0,0}
\definecolor{vert_sombre}{rgb}{0,0.6,0}
\definecolor{webblue}{rgb}{0.22,0.45,0.70}
\definecolor{webred}{rgb}{0.5, 0.09, 0.09}
\definecolor{zzttqq}{rgb}{0.6,0.2,0.}
\renewcommand{\leq}{\leqslant}	
\renewcommand{\geq}{\geqslant}
\newcommand{\C}{\mathbb{C}}
\newcommand{\Z}{\mathbb{Z}}
\newcommand{\R}{\mathbb{R}}
\newcommand{\N}{\mathbb{N}}
\newcommand{\dd}{\mathrm{d}}
\numberwithin{equation}{section}
\title[]{Dirac bag model on thin domains:\\ a microlocal viewpoint}
\author[L. Le Treust]{Loïc Le Treust}
\email{loic.le-treust@univ-amu.fr}
\address[L. Le Treust]{Aix Marseille Univ, CNRS, I2M, Marseille, France}
\author[T. Ourmières-Bonafos]{Thomas Ourmières-Bonafos}
\email{thomas.ourmieres-bonafos@univ-amu.fr }
\address[T. Ourmières-Bonafos]{Aix Marseille Univ, CNRS, I2M, Marseille, France}
\author[N. Raymond]{Nicolas Raymond}
\email{nicolas.raymond@univ-angers.fr}
\address[N. Raymond]{Univ Angers, CNRS, LAREMA, Institut Universitaire de France,SFR MATHSTIC, F-49000 Angers, France}
\pgfplotsset{compat=1.16}
\begin{document}
	
\begin{abstract}
	The Dirac operator is considered on a bidimensional domain whose boundary carries the infinite mass boundary condition. The analysis is focused on the existence of discrete spectrum and on its asymptotic description in the thin width limit. We revisit a recent result by using an alternative view point inspired by microlocal analysis.
	
	\end{abstract}	
	\maketitle
	\tableofcontents
	
	\section{Motivation and main theorem}
	
	\subsection{Framework and motivation}
	We consider a smooth simple curve with an arc-length parametrization $ \gamma : s\in I \mapsto\gamma(s)\in\mathbb{R}^2$, where $I = \mathbb{R}$ or $I = \mathbb{R}\big/ \ell \mathbb{Z}$ for $\ell > 0$.  We consider the normal $\mathbf{n}=\gamma'^\perp$ such that $(\gamma',\mathbf{n})$ is a direct orthonormal basis and we consider, for all $(s,t)\in S_\epsilon:=I\times(-\epsilon,\epsilon)$,
	\begin{equation}\Gamma(s,t)=\gamma(s)+t\mathbf{n}(s)=x=(x_1,x_2)\,.
	\label{eqn:Gamma}
	\end{equation}
	We assume that, for $\epsilon$ small enough, $\Gamma$ is injective and we let $\Omega_\epsilon=\Gamma(S_\epsilon)$. When $I = \mathbb{R}$, $\Omega_\epsilon$ is a waveguide of width $2\epsilon$ and when $I  = \mathbb{R}\big/ \ell \mathbb{Z}$, it is the tubular neighborhood of the closed curve $\Gamma$ of width $2\epsilon$ (see Figure \ref{fig:waveguides}). The curvature of $\gamma$, represented by \( \kappa(s) \), is characterized by
\[
	\gamma^{\prime\prime}(s) = \kappa(s)\textbf{n}(s)
\]
for all \(s \in I\). Furthermore, when $\ell < +\infty$, we assume that the parametrization $\gamma$ is anti-clockwise {\it i.e.}, $\int_0^\ell \kappa(s) ds = 2\pi$.
	
	\begin{figure}[htbp]
    \centering
    \includegraphics[width=1\textwidth]{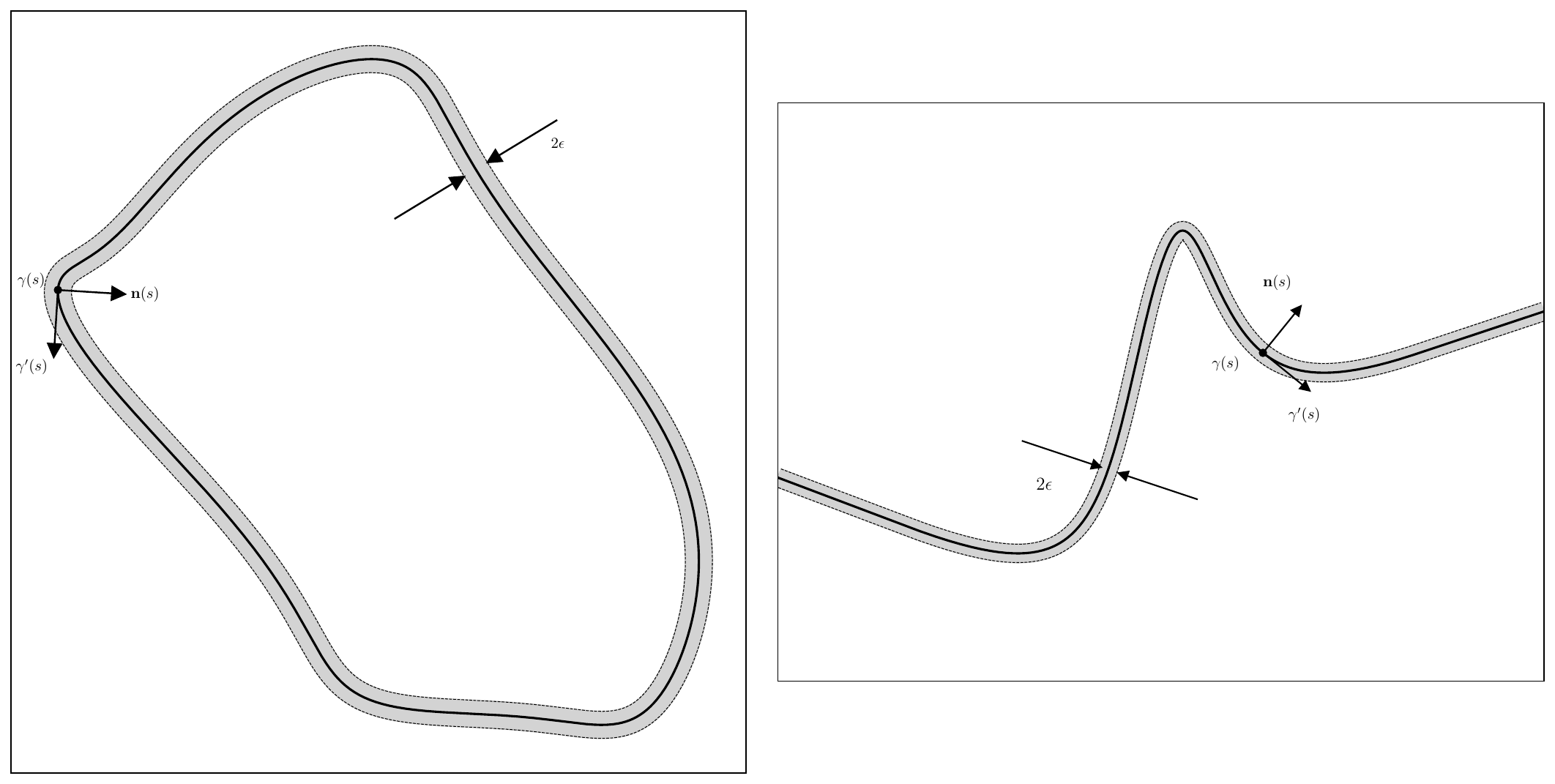}
    \caption{Closed and open waveguides}
    \label{fig:waveguides}
\end{figure}
We are interested in the Dirac operator
\begin{equation}\label{Def.Diracmain}
\mathscr{D}_\epsilon=\sigma_1D_{x_1}+\sigma_2D_{x_2}+m\sigma_3\,, \mbox{ with } D=-i\partial\,,\quad m\in \R\,,
\end{equation}	
equipped with the domain
\[\mathrm{Dom}(\mathscr{D}_\epsilon)=\{\psi\in H^1(\Omega_\epsilon,\mathbb{C}^2) : -i\sigma_3(\sigma\cdot \mathbf{N})\psi=\psi\,,\mbox{ on }\partial\Omega_\epsilon\}\,,\]	
	where $\mathbf{N}$ is the outward pointing normal to the boundary. The boundary conditions are the so-called "infinite mass boundary conditions". Note that the boundary $\partial\Omega_\epsilon$ has two connected components given by $\partial\Omega_\epsilon^\pm=\Gamma(I\times\{\pm\epsilon\})$. In the $+$ case, we have $\mathbf{N}(s)=\mathbf{n}(s)$ and in the $-$ case, we have $\mathbf{N}(s)=-\mathbf{n}(s)$. 
	
	The operator $(\mathscr{D}_\epsilon,\mathrm{Dom}(\mathscr{D}_\epsilon))$ is self-adjoint (see \cite[Ex. 4.20 \& Thm. 4.11]{BB16} or \cite{BFSVdB17,BBKOB22} in this specific context).  The aim is to provide a spectral description of this operator in the regime $\epsilon\to0$. By doing so, one will revisit recent results \cite{BBKOB22, BKOB23,K23} and provide the reader with an alternative and unifying view point, which might be of interest to tackle the spectral theory of Dirac operators. The main results of this article are Theorems \ref{thm.main} \& \ref{thm.main2} below. They have already been proved in \cite{BKOB23, K23} by considering the square of the Dirac operator and by using the well-known strategy of dimensional reduction for Schrödinger operators, see for instance \cite[Chapter 11]{Raymond17}, inspired by the famous Born-Oppenheimer approximation \cite{BO27}. This strategy of reduction to Schrödinger operators is often used to investigate the spectrum of Dirac operators, see, for instance, \cite{ALTMR19}, \cite[Section 8]{BLTRS23}, \cite{MOBP20} or \cite{LOB23}. 
	In the present article, we want to point out another strategy taking advantage of recent theoretical advances in the semiclassical spectral theory of magnetic Schrödinger operators (see, for instance, \cite{BHR22, FLTRVN22} in 2D or \cite{HR22, AAHR23} in 3D for recent applications). As we explain in Section \ref{sec.13}, the effective operators appearing in \cite{BKOB23,K23} are nothing but Taylor approximations of an already known effective pseudodifferential operator obtained in a general context including (non-necessarily selfadjoint) Schrödinger and Dirac operators, see \cite[Chapitre 3]{Keraval}.

	\subsection{Statement of the main results}
	We start with the result concerning a waveguide, that is the case where $s \in I = \mathbb{R}$ in \eqref{eqn:Gamma}. We denote by $N$ the number of negative eigenvalues (with multiplicity) of $D^2_s-\frac{\kappa^2}{\pi^2}$. For $\mu\in \R$, we denote by $\nu_1(0,\mu)$, the first positive eigenvalue of the \emph{transverse} Dirac operator $\sigma_2(-i\partial_t) + \mu\sigma_3$ equipped with the domain
\[\{\psi\in H^1((-1,1),\mathbb{C}^2) : \mp\sigma_1\psi(\pm 1)=\psi(\pm 1)\}\,.\]	

	
	\begin{theorem}\label{thm.main}
		The spectrum of $\mathscr{D}_\epsilon$ is symmetric with respect to $0$. Moreover, we have, for all $\epsilon>0$,
		\[\mathrm{sp}_{\mathrm{ess}}(\mathscr{D}_\epsilon)=(-\infty,-\epsilon^{-1}\nu_1(0,\epsilon m)]\cup[\epsilon^{-1}\nu_1(0,\epsilon m),+\infty)\,.\]	
In addition, for $\epsilon$ small enough, the positive discrete spectrum contains at least $N$ elements, and for all $j\in\{1,\ldots,N\}$, we have
\[\lambda^+_j(\mathscr{D}_\epsilon)=\epsilon^{-1}\nu_1(0,\epsilon m)+\frac{2\epsilon}{\pi}\lambda_j\left(D^2_s-\frac{\kappa^2}{\pi^2}\right)+o(\epsilon)\,.\]		
	\end{theorem}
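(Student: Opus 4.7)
The plan is to flatten the strip to the fixed cylinder $I\times(-1,1)$ via tubular coordinates combined with the rescaling $t=\epsilon\tau$, and then to perform a microlocal reduction in $s$ yielding an effective pseudodifferential operator on $I$ of the type recalled in Section~\ref{sec.13}. In the tubular chart, after conjugating by the local-frame rotation $e^{i\theta(s)\sigma_3/2}$ (with $\gamma'=(\cos\theta,\sin\theta)$) and by $\sqrt{1-\epsilon\tau\kappa(s)}$ to absorb the Jacobian, $\mathscr D_\epsilon$ becomes unitarily equivalent on $L^2(I\times(-1,1),\mathbb{C}^2)$ to
\[
\widetilde{\mathscr D}_\epsilon=\frac{1}{\epsilon}\bigl(\sigma_2 D_\tau+\epsilon m\,\sigma_3\bigr)+\frac{\sigma_1}{1-\epsilon\tau\kappa(s)}\,D_s+\text{subprincipal terms},
\]
subject to the boundary condition $\mp\sigma_1\psi=\psi$ at $\tau=\pm 1$.

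For the symmetry, the antiunitary $\mathcal C=\sigma_1 K$ (possibly composed with a local phase to accommodate the bag domain) anticommutes with $\mathscr D_\epsilon$ while consistently exchanging the two boundary components $\partial\Omega_\epsilon^\pm$, yielding $\mathrm{sp}(\mathscr D_\epsilon)=-\mathrm{sp}(\mathscr D_\epsilon)$. For the essential spectrum I would use a Weyl-sequence/Persson argument to localize outside a compact $s$-set and compare with the straight-strip model; the latter is block-diagonalized by partial Fourier transform in $s$, and each fibre $\sigma_1\xi+\epsilon^{-1}(\sigma_2 D_\tau+\epsilon m\sigma_3)$ on $(-1,1)$ has squared spectrum $\xi^2+\epsilon^{-2}\nu_n(0,\epsilon m)^2$, whose range over $(\xi,n)$ is exactly $[\epsilon^{-2}\nu_1(0,\epsilon m)^2,\infty)$.

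For the discrete spectrum I view $\widetilde{\mathscr D}_\epsilon$ as a semiclassical $\epsilon$-pseudodifferential operator in $s$ with values in self-adjoint operators on $L^2((-1,1),\mathbb{C}^2)$ with bag condition; its principal $s$-symbol at $(s,\xi)\in T^*I$ is the transverse Dirac operator $\mathfrak T(\xi):=\sigma_1\xi+\sigma_2 D_\tau+\epsilon m\sigma_3$, whose positive eigenvalues $\nu_n(\xi,\epsilon m)$ are simple and well separated. Applying the Grushin/Feshbach reduction of \cite[Ch.~3]{Keraval} to the rank-one spectral projector onto $\nu_1(0,\epsilon m)$ produces an effective $\epsilon$-pseudodifferential operator $\mathcal N_\epsilon^{\mathrm{eff}}$ on $L^2(I)$ matching the spectrum of $\widetilde{\mathscr D}_\epsilon$ near $\epsilon^{-1}\nu_1(0,\epsilon m)$ modulo $O(\epsilon^\infty)$. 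A Taylor expansion of its symbol at $\xi=0$, using second-order perturbation theory for $\mathfrak T(\xi)$ and tracking the curvature-induced corrections from $1/(1-\epsilon\tau\kappa(s))$, gives
\[
\mathcal N_\epsilon^{\mathrm{eff}}=\frac{1}{\epsilon}\nu_1(0,\epsilon m)+\frac{2\epsilon}{\pi}\Bigl(D_s^2-\frac{\kappa^2}{\pi^2}\Bigr)+O(\epsilon^2),
\]
after which the min--max principle applied to $\mathcal N_\epsilon^{\mathrm{eff}}$, together with the above spectral matching, yields the announced asymptotics of $\lambda_j^+(\mathscr D_\epsilon)$ and the existence of at least $N$ such eigenvalues below the essential spectrum.

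The main obstacle is verifying that the matrix-valued semiclassical Grushin framework of \cite{Keraval} applies cleanly in the bag-boundary setting: one needs the uniform spectral gap of $\mathfrak T(\xi)$ around $\nu_1(0,\epsilon m)$ for $\xi$ in a compact set, together with a parametrix compatible with the boundary condition $\mp\sigma_1\psi=\psi$. Once these structural points are in place, extracting the coefficients $2/\pi$ and $2/\pi^3$ amounts to explicit integration against the transverse eigenfunctions of the bag problem on $(-1,1)$, and the remainder of the argument reduces to standard semiclassical perturbation estimates.
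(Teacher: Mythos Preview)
Your proposal is correct and follows essentially the paper's own strategy: the normal-form reduction via tubular coordinates and frame rotation (Theorem~\ref{thm.normal}), the straight-strip analysis giving the essential spectrum and the dispersion relation $\nu_j(\xi,\mu)=\sqrt{\xi^2+\nu_j(0,\mu)^2}$ (Proposition~\ref{prop.straightstrip}), and the Grushin reduction from \cite{Keraval} producing the effective one-dimensional operator whose Taylor expansion at $\xi=0$ yields the announced asymptotics (Section~\ref{sec.4}). The one step you undersell as ``standard semiclassical perturbation estimates'' is the a priori microlocalization of eigenfunctions near $\xi=0$ (Lemmas~\ref{lem.xiborne} and~\ref{lem.xiborne0}), which is genuinely needed to control the $S^2$-remainder in the parametrix of Proposition~\ref{prop.parametrix} and is not entirely routine.
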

	
When we are interested in the tubular neighborhood of a closed curve, that is $s \in I = \mathbb{R} \big/\ell \mathbb{Z}$ in \eqref{eqn:Gamma}, the analogue of Theorem \ref{thm.main} reads as follows.
\begin{theorem}\label{thm.main2} For all $j \in \mathbb{N}$, there holds
\[
	\lambda_j^+(\mathscr{D}_\epsilon) =\epsilon^{-1}\nu_1(0,\epsilon m) +\frac2\pi\epsilon\lambda_j\left(\left(D_s + \frac{\pi+2}\ell\right)^2 - \frac{\kappa^2}{\pi^2}\right) + o(\epsilon).
\]
\end{theorem}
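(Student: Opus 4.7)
The strategy is microlocal: pass to tube coordinates, rescale transversely, view the resulting operator as a semiclassical pseudodifferential operator in $s$ with operator-valued symbol, and apply the Grushin reduction of \cite{Keraval} to obtain a scalar effective operator. The novelty compared to Theorem \ref{thm.main} is a geometric phase produced by the closedness of $\gamma$, which is responsible for the shift $(\pi+2)/\ell$.

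\textbf{Step 1 (tube coordinates and spinor rotation).} I first rewrite $\mathscr{D}_\epsilon$ on $L^2((\mathbb{R}/\ell\mathbb{Z})\times(-\epsilon,\epsilon),\mathbb{C}^2)$ via $\Gamma$ and normalize the Jacobian by the unitary multiplication by $\sqrt{1-t\kappa}$. Writing $\gamma'(s) = (\cos\alpha(s),\sin\alpha(s))$ with $\alpha'(s) = \kappa(s)$, I then conjugate by the spinor rotation $U(s) = e^{-i\alpha(s)\sigma_3/2}$, which aligns $\sigma_{\gamma'}$ with $\sigma_1$ and $\sigma_{\mathbf{n}}$ with $\sigma_2$; the derivative $D_s$ becomes $D_s - \tfrac{\kappa(s)}{2}\sigma_3$, and combining with the $\sigma_1$ prefactor produces a connection-like term $\tfrac{i\kappa(s)}{2}\sigma_2$. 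Since $\int_0^\ell \kappa = 2\pi$ forces $U(\ell) = -U(0)$, the transformed spinor is antiperiodic in $s$; I apply the scalar gauge $e^{i\pi s/\ell}$ to restore periodicity, which substitutes $D_s + \pi/\ell$ for $D_s$. After the rescaling $t = \epsilon\tau$, $\tau \in (-1,1)$, the final operator reads
\[
\mathscr{L}_\epsilon = \epsilon^{-1}\bigl(\sigma_2 D_\tau + \epsilon m\sigma_3\bigr) + \bigl(\sigma_1(D_s + \tfrac{\pi}{\ell}) + \tfrac{i\kappa(s)}{2}\sigma_2\bigr)\bigl(1 + \epsilon\tau\kappa(s) + O(\epsilon^2)\bigr) + O(\epsilon),
\]
acting on periodic spinors in $s$ with infinite-mass boundary conditions at $\tau = \pm 1$.

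\textbf{Step 2 (microlocal reduction).} Following \cite{Keraval}, I treat $\mathscr{L}_\epsilon$ as a semiclassical $\Psi$DO in $s$ with operator-valued symbol on $L^2((-1,1),\mathbb{C}^2)$. The transverse symbol $T_\epsilon = \sigma_2 D_\tau + \epsilon m\sigma_3$ has simple positive eigenvalue $\nu_1(0,\epsilon m)$ with real normalized ground state $\chi_0(\tau;\epsilon m)$ that, at $\epsilon m = 0$, equals $\tfrac{1}{\sqrt{2}}(\cos(\pi\tau/4),-\sin(\pi\tau/4))^{T}$. Reality of $\chi_0$ forces $\langle\chi_0,\sigma_1\chi_0\rangle = \langle\chi_0,\sigma_2\chi_0\rangle = 0$, while a direct computation gives $\langle\chi_0,\sigma_3\chi_0\rangle = 2/\pi$. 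Applying the Grushin reduction on the spectral projector $\Pi_0 = |\chi_0\rangle\langle\chi_0|$ yields an effective scalar $\Psi$DO whose principal symbol is $\epsilon^{-1}\nu_1(0,\epsilon m)$; the two orthogonality identities above force the first-order correction to vanish.

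\textbf{Step 3 (second-order expansion and conclusion).} The next order (of size $\epsilon$ in the effective operator after the $\epsilon^{-1}$ prefactor) is a sum of three contributions constructed from $\chi_0$ and the pseudo-inverse $R = (T_\epsilon - \nu_1)^{-1}(I - \Pi_0)$: the Born--Oppenheimer kinetic term $\tfrac{2}{\pi}(D_s + \pi/\ell)^2$ coming from $\sigma_1(D_s + \pi/\ell)$; the Born--Oppenheimer potential $-\tfrac{2}{\pi^{3}}\kappa^{2}$ already present in Theorem \ref{thm.main}; and, crucially, a cross term coupling $\sigma_1(D_s + \pi/\ell)$ to the connection $\tfrac{i\kappa}{2}\sigma_2$ and to the metric correction $\epsilon\tau\kappa$, which produces a linear-in-$D_s$ term whose coefficient is proportional to $\kappa(s)$. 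On $\mathbb{R}/\ell\mathbb{Z}$ this non-constant part can be gauged away by the phase $\exp\bigl(i\int_0^s(\kappa/\pi - 2/\ell)\bigr)$, leaving only its mean $\tfrac{1}{\ell}\int_0^\ell \kappa/\pi\,ds = 2/\ell$. Combined with the antiperiodicity shift $\pi/\ell$, this produces the total shift $(\pi+2)/\ell$ inside $D_s$ in the effective operator $\tfrac{2\epsilon}{\pi}\bigl((D_s + (\pi+2)/\ell)^2 - \kappa^2/\pi^2\bigr)$. Min-max, applied to this 1D operator of compact resolvent on $L^2(\mathbb{R}/\ell\mathbb{Z})$, yields the asymptotics stated in Theorem \ref{thm.main2}.

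The main obstacle is the bookkeeping of Step 3: one must identify the exact coefficient $1/\pi$ of $\kappa(s)$ in the second-order effective symbol (so that its mean gives precisely $2/\ell$), and produce remainder estimates in the Grushin calculus that are uniform in $\epsilon$ and simultaneously cover any prescribed number $j$ of eigenvalues.
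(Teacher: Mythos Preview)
Your overall strategy---tube coordinates, spin gauge introducing the antiperiodicity shift $\pi/\ell$, Grushin reduction on the transverse ground state, then gauging the linear-in-$D_s$ term to its mean $2/\ell$---is essentially the paper's. But there is a concrete error in Step~1: when you conjugate by $\sqrt{1-t\kappa}$, the transverse derivative becomes $D_t-\tfrac{i\kappa}{2(1-t\kappa)}$, which after the spin rotation contributes $-\tfrac{i\kappa}{2g}\sigma_2$. This exactly cancels the $+g^{-1}\tfrac{i\kappa}{2}\sigma_2$ produced by $g^{-1}\sigma_1\cdot(-\tfrac{\kappa}{2}\sigma_3)$. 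Hence the correct normal form (the paper's Theorem~\ref{thm.normal}) has \emph{no} connection term $\tfrac{i\kappa}{2}\sigma_2$: the linear-in-$D_s$ piece of the effective operator comes solely from the metric correction $\epsilon\tau\kappa\,\sigma_1$, via the moment $\langle\chi_0,\tau\sigma_1\chi_0\rangle$ (Lemma~\ref{lem.moment} in the paper). Your Step~3 attributes part of that coefficient to the spurious connection; since you do not actually compute the coefficient but only assert the outcome $\kappa/\pi$, the error is hidden precisely where you yourself flag the ``main obstacle,'' and an honest computation with your $\mathscr{L}_\epsilon$ would yield the wrong shift.

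A structural difference worth flagging: the paper projects onto the $\xi$-\emph{dependent} eigenfunction $\varphi_{\xi,\mu,1}$, so its principal effective symbol is $\nu_1(\xi,\mu)=\sqrt{\xi^2+\nu_1(0,\mu)^2}$, elliptic in $\xi$; this is what drives the microlocalization estimates (Lemmas~\ref{lem.xiborne}--\ref{lem.xiborne0}) controlling the remainders. Your fixed projector onto $\chi_0$ recovers the kinetic term $\tfrac{2}{\pi}(D_s+\pi/\ell)^2$ as a second-order effect through $\sigma_1\chi_0=\varphi_{0,\mu,-1}$; this is legitimate Born--Oppenheimer, but your principal effective symbol is then constant in $\xi$, so the remainder control you need is less direct and would require a separate argument.
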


\begin{remark} In Theorem \ref{thm.main2}, using the asymptotic expansion
\[
	\epsilon^{-1}\nu_1(0,\epsilon m) = \frac{\pi}{4\epsilon} + \frac2\pi m +m^2\big(\frac2\pi - \frac{16}{\pi^3}\big)\epsilon+o(\epsilon),
\]
one recovers the main result of \cite{K23}.
\end{remark}

Theorems \ref{thm.main} \& \ref{thm.main2} are the consequences of the same proof. The only difference is that the effective operator obtained for $I = \mathbb{R}$ and the one obtained for $I = \mathbb{R} \big/\ell \mathbb{Z}$ can be reduced either in $D_s^2 - \frac{\kappa^2}4$ in the first case whereas in the second case, the length of the closed curve appears in a {\it magnetic type} extra term (which vanishes as $\ell \to + \infty$).

\subsection{Strategy of the proof}\label{sec.13}
As we said, the main interest of this article is to point out a new strategy to tackle the spectral analysis of Dirac operators. Let us explain its main lines.
In Section \ref{sec.normalform}, we use the canonical unitary transform associated with the change of variable $\Gamma$ and we reduce the analysis to that of $\mathfrak{D}_\epsilon$, a Dirac operator on a strip when $I = \mathbb{R}$ or a rectangle when $I = \mathbb{R}\big/ \ell \mathbb{Z}$ with homogeneous boundary conditions, see Theorem \ref{thm.normal}. We stress out that this theorem has already been proved in \cite[Section 3.1]{BBKOB22}. We recall its proof for the sake of completeness. Section \ref{sec.sstrip} is devoted to the determination of the essential spectrum and to the study of the dispersion curves of the Dirac operator $\mathscr{D}_\epsilon$ when the curve $\gamma(\mathbb{R})$ is a line, see Proposition \ref{prop.straightstrip}. Once again, this elementary proposition has already been proved, see \cite[Section 2]{BBKOB22}. Notice that our proof highlights the underlying supersymmetric structure of the fibered Dirac operators.

The main novelty of this paper appears in Section \ref{sec.4}. There, we see $\mathfrak{D}_\epsilon$ as a pseudodifferential operator with operator-valued symbol, see Proposition \ref{prop.expansiondepsilon}. The principal operator appearing in the expansion is $\mathfrak{D}_0$ is the operator at infinity studied in Section \ref{sec.sstrip}. The spectral analysis of its principal symbol, denoted by $\mathfrak{d}_0$, has also been completed in Section \ref{sec.sstrip}. At this stage, we are in position to apply a version of a microlocal dimensional reduction known as the Grushin method based on the ideas of Sjöstrand (see \cite{SZ07}) and Martinez (see \cite{Martinez07}) and generalized by Keraval in \cite{Keraval}. In essence, the effective operator derivated in \cite[Theorem 4]{BKOB23} to describe the spectrum of $\mathfrak{D}_\epsilon$ is an approximation of the general effective operator (the \enquote{Schur complement}) involved in the Grushin method, whose expression is already known in a general context. Hopefully, this strategy will inspire further developements in the study of Dirac operators.

The core of this method is the object of Section \ref{sec.parametrix} where we construct a \enquote{parametrix}, that is an approximate inverse of an augmented version of $\mathfrak{D}_\epsilon-z$ (where $z$ lies in the region where we want to describe the specrum), see Proposition \ref{prop.parametrix}. This construction is that given in \cite[Chapitre 3]{Keraval} and is simply based on finding an inverse of the principal symbol $\mathfrak{d}_0-z$ (see Lemma \ref{lem.parametrix}) and on an application of the composition formula for pseudodifferential operators, see \cite[Théorème 2.1.12]{Keraval}.	If we look at the remainders in Corollary \ref{cor.ineq}, we understand that we need to prove some a priori estimates of the eigenfunctions to control their $H^2$-norm with respect to the curvilinear abscissa. This is the aim of Section \ref{sec.microloc} where we prove that the eigenfunctions are essentially microlocalized in a bounded region in $\xi$. This allows to control the remainders in Corollary \ref{cor.ineq}. Finally, Section \ref{sec.spectral} is devoted to the spectral consequences of Corollary \ref{cor.ineq}. In fact, this corollary relates the spectrum of $\mathfrak{D}_\epsilon$ to that of a pseudodifferential operator in one dimension. The symbol of this effective operator is given in \eqref{eq.neff}, whose spectrum can be estimated by classical means (the principal symbol has a unique minimum, which is non-degenerate), see Proposition \ref{prop.lambdajNeff}.

\section{Reduction to a normal form}\label{sec.normalform}

\subsection{The normal form}
In order to perform our spectral analysis, it is convenient to put $\mathscr{D}_\epsilon$ in a normal form. By this, we just mean that we should write the action of operator in the coordinates system $(s,t)$ and conjugate the operator with an appropriate weight to remain in a canonical $L^2$-space.

\begin{theorem}\label{thm.normal}
	Let us consider the Dirac operator acting in the Hilbert space $L^2(S_1,\mathrm{d}s\mathrm{d}t)$ as
\[\mathfrak{D}_\epsilon=\epsilon\frac{\sigma_1}{2}\left[(1-\epsilon t\kappa)^{-1}\left(D_s +\frac{\pi}{\ell}\right)+\left(D_s +\frac{\pi}{\ell}\right)(1-\epsilon t\kappa)^{-1}\right]+\sigma_2D_t+m\epsilon\sigma_3\,,\]	
with domain
\[\mathrm{Dom}(\mathfrak{D}_\epsilon)=\{\psi\in H^1(S_1,\mathbb{C}^2) : \psi_2(s,\pm 1)=\mp\psi_1(s,\pm 1)\}\,,\]
where $\ell$ is the length of the curve $\gamma$ ($\ell = +\infty$ is the strip case). 	
Then, $\mathscr{D}_\epsilon$ is unitarily equivalent to $\epsilon^{-1}\mathfrak{D}_\epsilon$. 
Moreover, the spectrum of $\mathfrak{D}_\epsilon$ is symmetric with respect to $0$.
\end{theorem}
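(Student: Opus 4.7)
The plan is to establish the unitary equivalence through a composition of four canonical transformations, and then to read off the spectral symmetry from an antiunitary chirality of $\mathscr{D}_\epsilon$.

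For the equivalence, I would proceed in four steps. (i) Pull back via the tubular chart $\Gamma$: the map $(U_1\psi)(s,t):=\psi(\Gamma(s,t))$ is unitary from $L^2(\Omega_\epsilon,\mathrm{d}x)$ onto $L^2(S_\epsilon,(1-t\kappa)\,\mathrm{d}s\,\mathrm{d}t)$, the Jacobian being $\det\mathrm{d}\Gamma = 1-t\kappa > 0$ for $\epsilon$ small. (ii) Conjugate by multiplication by $(1-t\kappa)^{1/2}$ to land in the flat $L^2(S_\epsilon,\mathrm{d}s\,\mathrm{d}t)$; combined with the chain-rule expression of $\partial_{x_j}$ in $(s,t)$-coordinates, the commutator $[D_s,(1-t\kappa)^{\pm 1/2}]$ produces the symmetric structure $\tfrac{1}{2}\bigl[(1-t\kappa)^{-1}D_s + D_s(1-t\kappa)^{-1}\bigr]$ appearing in $\mathfrak{D}_\epsilon$. (iii) Perform the pointwise spin rotation $\psi\mapsto e^{-i\theta(s)\sigma_3/2}\psi$, where $\theta(s)$ is the tangent angle of $\gamma$; using $e^{-i\theta\sigma_3/2}\sigma_1 e^{i\theta\sigma_3/2} = \cos\theta\,\sigma_1 + \sin\theta\,\sigma_2$ and its analogue for $\sigma_2$, the Euclidean combination $\sigma_1\partial_{x_1}+\sigma_2\partial_{x_2}$ is brought into the frame form $\sigma_1(1-t\kappa)^{-1}D_s+\sigma_2 D_t$, spin-connection type contributions cancelling against those generated in step (ii). (iv) Dilate via $(U_4\phi)(s,t):=\epsilon^{1/2}\phi(s,\epsilon t)$ from $L^2(S_1)$ onto $L^2(S_\epsilon)$, replacing $D_t$ by $\epsilon^{-1}D_t$ and factoring out the $\epsilon^{-1}$ prefactor. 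Finally, with $\mathbf N = \pm\mathbf n$ on $\partial\Omega_\epsilon^\pm$ and $\sigma_3\sigma_2 = -i\sigma_1$, one has $-i\sigma_3(\sigma\cdot\mathbf N)=\mp\sigma_1$, which turns the infinite-mass condition into $\sigma_1\psi(s,\pm 1) = \mp\psi(s,\pm 1)$, that is $\psi_2(s,\pm 1) = \mp\psi_1(s,\pm 1)$.

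The delicate point, and what I see as the main obstacle, is step (iii) when $\ell<+\infty$. The anticlockwise orientation assumption $\int_0^\ell\kappa\,\mathrm{d}s = 2\pi$ forces $\theta(s+\ell)=\theta(s)+2\pi$, so the $SU(2)$-valued factor $e^{-i\theta(s)\sigma_3/2}$ has monodromy $-I$ and is only antiperiodic in $s$; it therefore does not descend to $\mathbb{R}/\ell\mathbb{Z}$. I would compensate by composing with the scalar antiperiodic multiplication $e^{i\pi s/\ell}$, whose product with $e^{-i\theta(s)\sigma_3/2}$ is genuinely periodic and hence yields a bona fide unitary on $L^2(\mathbb{R}/\ell\mathbb{Z}\times(-1,1),\mathbb{C}^2)$. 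Conjugating $D_s$ by this phase produces precisely the shift $D_s\to D_s+\pi/\ell$ accompanying the $s$-derivative in $\mathfrak{D}_\epsilon$; in the strip case $\ell=+\infty$ the shift vanishes, in agreement with the convention of the statement.

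For the symmetry of the spectrum, I would use the antiunitary $T\psi:=\sigma_1\bar\psi$ on $L^2(\Omega_\epsilon,\mathbb{C}^2)$. Using $\overline{\sigma_2}=-\sigma_2$, one checks $T\sigma_1 T^{-1}=\sigma_1$, $T\sigma_2 T^{-1}=\sigma_2$, $T\sigma_3 T^{-1}=-\sigma_3$, and by antilinearity $TD_{x_j}T^{-1}=-D_{x_j}$, so that $T\mathscr{D}_\epsilon T^{-1} = -\mathscr{D}_\epsilon$. Writing the boundary symbol $A:=-i\sigma_3(\sigma\cdot\mathbf N) = N_1\sigma_2-N_2\sigma_1$, a direct Pauli computation gives $TA=AT$, so $T$ preserves the eigenspace $\{\psi:A\psi=\psi\}$ and hence $\mathrm{Dom}(\mathscr{D}_\epsilon)$. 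Therefore if $\mathscr{D}_\epsilon\psi=\lambda\psi$ then $\mathscr{D}_\epsilon(T\psi) = -\lambda(T\psi)$, showing that $\mathrm{sp}(\mathscr{D}_\epsilon)$ is invariant under $\lambda\mapsto-\lambda$; the unitary equivalence of the first part transports this symmetry to $\mathfrak{D}_\epsilon$.
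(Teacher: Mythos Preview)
Your proposal is correct and follows essentially the same route as the paper: tubular change of variables, conjugation by $(1-t\kappa)^{1/2}$ to flatten the measure, spin rotation by $e^{-i\theta\sigma_3/2}$, and a final dilation. Your handling of the monodromy obstruction when $\ell<+\infty$ via the scalar phase $e^{i\pi s/\ell}$ is exactly the mechanism the paper uses (the paper packages it as conjugation by $e^{-i\big(\sigma_3\frac{K_0+\alpha}{2}+\frac{K_0-\alpha}{2}\big)}$ with $K_0(s)=\int_0^s(\kappa/2-\pi/\ell)$, but unwinding that expression one finds the same $\sigma_3\theta/2$ together with a scalar $-\pi s/\ell$ phase). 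Your observation that the $-\frac{\kappa\sigma_3}{2}$ generated by the spin rotation cancels against the $-\frac{i\kappa}{2g}$ coming from the metric flattening in the normal direction (via $\sigma_1\sigma_3=-i\sigma_2$) is the key algebraic identity, and the paper uses it in the same way.

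One point where you go further than the paper: the paper's written proof does not actually address the spectral symmetry claim. Your argument via the antiunitary $T\psi=\sigma_1\bar\psi$ is correct and clean; the checks $T\sigma_j T^{-1}=\sigma_j$ for $j=1,2$, $T\sigma_3T^{-1}=-\sigma_3$, $TD_{x_j}T^{-1}=-D_{x_j}$, and $TA=AT$ with $A=N_1\sigma_2-N_2\sigma_1$ all go through, so $T$ preserves $\mathrm{Dom}(\mathscr{D}_\epsilon)$ and anticommutes with $\mathscr{D}_\epsilon$.
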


\subsection{Proof of Theorem \ref{thm.normal}}

	Let us describe the action of $\mathscr{D}_\epsilon$ in the tubular coordinates $(s,t)$ given by $x=\Gamma(s,t)=\gamma(s)+t\mathbf{n}(s)$ with $\Gamma$ defined in \eqref{eqn:Gamma}. By using the chain rule, we can write ${\rm Jac}\,\Gamma(s,t)^{-T} = \begin{pmatrix}g^{-1}\gamma ' (s)\;, \textbf{n}(s)\end{pmatrix}$, and
	\[
		\nabla_{x} = \gamma'g^{-1}\partial_s + \textbf{n}\partial_t\,,
	\]
	with $g\colon (s,t) \mapsto (1-t\kappa(s))$.
 We obtain that
	 \[
	 \mathscr{D}_\epsilon = \sigma\cdot D + m\sigma_3
	 	= g^{-1}(\sigma\cdot\gamma')(D_s)
		+(\sigma\cdot\textbf{n})(D_t)
		+m\sigma_3
	 \]
	 which acts on the weighted space $L^2(S_\epsilon,g\mathrm{d}s\mathrm{d}t)$. We flatten the metric thanks to a change of unknown $\tilde u = g^{1/2} u$. More precisely,  by conjugation by $g^{1/2}$, we obtain 
	  \[\begin{split}
		-ig^{1/2}\partial_sg^{-1/2}= -i\partial_s  - \frac{it\kappa'}{2g}
	 	\,,\quad \text{ and }\quad
		-ig^{1/2}\partial_tg^{-1/2}   = -i\partial_t  - \frac{i\kappa}{2g}\,,
	 \end{split}\]
	 and the unitarily equivalent operator 
	 \[\begin{split}
		&g^{-1}(\sigma\cdot\gamma')(-ig^{1/2}\partial_sg^{-1/2})
		+(\sigma\cdot\textbf{n})(-ig^{1/2}\partial_tg^{-1/2} )
		+m\sigma_3\,,
		\\&=
		g^{-1}(\sigma\cdot\gamma')(-i\partial_s - \frac{it\kappa'}{2g})
		+(\sigma\cdot\textbf{n})(-i\partial_t - \frac{i\kappa}{2g})
		+m\sigma_3\,.
	 \end{split}\] 	
	Let us recall that the Dirac equation is covariant. In particular, if $\gamma'(s) = e^{i\theta}$, then $\partial_s\theta(s) = \kappa(s)$, 
	\[
		e^{i\sigma_3\theta/2}\sigma\cdot \gamma'e^{-i\sigma_3\theta/2} = \sigma_1\,,
		\quad
		e^{i\sigma_3\theta/2}\sigma\cdot \textbf{n}e^{-i\sigma_3\theta/2} = \sigma_2\,,
	\]
	and
	\[
		e^{i\sigma_3\theta/2}(-i\partial_s)e^{-i\sigma_3\theta/2}
		=
		(-i\partial_s) 
		-
		\frac{\kappa \sigma_3}{2}\,.
	\]
	Nevertheless, when $I = \mathbb{R}\big/ \ell \mathbb{Z}$, $\left(s\mapsto e^{i\sigma_3\theta(s)/2}\right)$ is not necessarily $\ell$-periodic. Thus, the conjugation by this function does not preserve the Hilbert space $L^2(S_\epsilon,\C^2)$ which by definition encodes $\ell$-periodic conditions in the $s$-variable. An alternative is to consider $K_0\colon I\to \R$, the function defined for $s\in I$ by 
	\[
		K_0(s) = \int_0^s\left(\frac{\kappa}{2} - \frac{\pi}{\ell}\right)\dd s\,.
	\]
	Then, if $\alpha \in \R$ is such that $\gamma'(0)=e^{i\alpha}$, the function  $\left(s\in I\mapsto e^{-i\left(\sigma_3\frac{K_0 + \alpha}{2} + \frac{K_0-\alpha}{2}\right)}\right)$ is $s$-periodic. Moreover, there holds
	\[
	e^{i\left(\sigma_3\frac{K_0 + \alpha}{2} + \frac{K_0-\alpha}{2}\right)}\sigma\cdot \gamma'e^{-i\left(\sigma_3\frac{K_0 + \alpha}{2} + \frac{K_0-\alpha}{2}\right)} = \sigma_1\,,
		\quad
		e^{i\left(\sigma_3\frac{K_0 + \alpha}{2} + \frac{K_0-\alpha}{2}\right)}\sigma\cdot \textbf{n}e^{-i\left(\sigma_3\frac{K_0 + \alpha}{2} + \frac{K_0-\alpha}{2}\right)} = \sigma_2\,,
	\]
	and
	\[\begin{split}
		&e^{i\left(\left(\sigma_3\frac{K_0 + \alpha}{2} + \frac{K_0-\alpha}{2}\right)\right)}(-i\partial_s)e^{-i\left(\sigma_3\frac{K_0 + \alpha}{2} + \frac{K_0-\alpha}{2}\right)}
		=
		(-i\partial_s) 
		-
		\frac{\kappa \sigma_3}{2}
		+
		\frac{\kappa}{2}
		-K_0'
		=
		(-i\partial_s) 
		-
		\frac{\kappa \sigma_3}{2}
		+\frac{\pi}{\ell}
		\,.
	\end{split}\]
	so that a conjugation by $e^{-i\left(\frac{\sigma_3\theta}{2} + \frac{\theta}{2} -K_0\right)}$ leads to
		 \[
			g^{-1}\sigma_1(-i\partial_s  - \frac{it\kappa'}{2g}+\frac{\pi}{\ell})
		+\sigma_2(-i\partial_t )
		+m\sigma_3
	 \]
	 which can be rewritten in an explicitly symmetric form as
	\[
			\frac{\sigma_1}{2}\left(g^{-1}(D_s +\frac{\pi}{\ell}) + (D_s+\frac{\pi}{\ell})g^{-1}\right)
		+\sigma_2(D_t )
		+m\sigma_3\,.
	 \]	 
The conclusion follows by using a dilation.

\section{Dirac operator on the straight strip}\label{sec.sstrip}
The goal of this section is to study the spectrum of $\mathfrak{D}_0$, the operator on the straight strip $S_1 := \mathbb{R} \times (-1, 1)$ defined 
as
\[
    \mathfrak{D}_0 = \epsilon \sigma_1 D_s + \sigma_2 D_t + \mu \sigma_3 \quad \text{with} \quad \mu \in \mathbb{R}\,.
\]
The domain of $\mathfrak{D}_0$ is 
\[ 
    \mathrm{Dom}(\mathfrak{D}_0) = \left\{ \psi \in H^1(S_1, \mathbb{C}^2) : \mp \sigma_1 \psi(\cdot, \pm 1) = \psi(\cdot, \pm 1) \right\}\,.
\]
%
%
Thanks to the Fourier transform, it is sufficient to study the family of one-dimensional Dirac operators defined 
for $\xi \in \mathbb{R}$, as
\[
    \mathfrak{d}_0 = \mathfrak{d}_{0,\xi,\mu} := \xi \sigma_1 + \sigma_2 D_t + \mu \sigma_3\,,
\]
acting on 
\[ 
    \mathrm{Dom}(\mathfrak{d}_{0,\xi,\mu}) = \left\{ \psi \in H^1((-1,1), \mathbb{C}^2) : \mp \sigma_1 \psi(\pm 1) = \psi(\pm 1) \right\}\,.
\]
The properties of this family of operators are gathered in the following proposition whose proof can be found further in Appendix \ref{sec.proofSS}.
\begin{proposition}\label{prop.straightstrip}
	The following holds.
\begin{enumerate}[label=\arabic*.]
    \item \label{pt.1xi0} For all $\xi,\mu\in\R$, the operator $\mathfrak{d}_{0,\xi,\mu}$ is self-adjoint with compact resolvent and is invertible. Its spectrum is symmetric with respect to $0$ and the eigenvalues are simple and denoted by
\[
    {\rm Spec}(\mathfrak{d}_{0,\xi,\mu}) = \{\pm \nu_j(\xi,\mu) \mid j \geq 1\}\,,
\]
where  $(\nu_j(\xi,\mu))_{j\geq 1}\subset (0,+\infty)$ is an increasing sequence that tends to $+\infty$. Moreover, for all $j$, $\nu_j(\xi,\mu)$ are smooth functions of $\xi$ and $\mu$.
\item \label{pt.2xi0}For \(\xi = 0\), $\mu\in\R$ and \(j \geq 1\), we have
\[\begin{aligned}
		&\nu_1(0, \mu) \underset{\mu\to-\infty}{\longrightarrow}0\,,
		\\
		&\nu_1(0, \mu) \underset{\mu\to+\infty}{\sim}|\mu|\,,
		\\
		&\nu_j(0, \mu) \underset{\mu\to\pm\infty}{\sim}|\mu|\,,\quad j\geq 2\,,
	\end{aligned}
	\]
	(see Figure \ref{fig:plotDispCurves}) and around $0$,
\[
\nu_1(0,\mu)
=
\frac{\pi}{4} 
+ \frac{2}{\pi}  \mu
+ \left(- \frac{16}{\pi^{3}} + \frac{2}{\pi^{}}\right) \mu^{2} 
+ \left(\frac{256}{\pi^5} - \frac{80}{3 \pi^3}\right)\mu^{3}  
+\left(- \frac{5120}{\pi^{7}} - \frac{8}{\pi^{3}} + \frac{1792}{3 \pi^{5}}\right)\mu^4 
+ \mathscr{O}\left(\mu^{5}\right)
\]	
with an associated normalized eigenvector
\[
\varphi_{0,\mu,1}\colon t\longmapsto c\left(\frac{\cos(kt)}{\cos(k)}e_2 - \frac{\sin(kt)}{\sin(k)}e_1\right)\,,
\]
where \(e_1, e_2\) denote the vectors of the canonical basis of \(\mathbb{R}^2\), $k = k(\mu)$ is the first positive solution to
\[\mu = -\frac{k}{\tan(2k)}\,,\]
and 
\[
	c = \left(\frac{k\sin^2(2k)}{4k-\sin(4k)}\right)^{\frac{1}{2}}.
\]
$\varphi_{0,\mu,-1} := \sigma_1\varphi_{0,\mu,1}$ is a normalized eigenvector associated to $-\nu_1(0, \mu)$.
\item  \label{pt.3xi0}
For $\xi,\mu\in\R$, $j\geq 1$, we have
\[
	\nu_j(\xi, \mu) = \sqrt{\xi^2 + \nu_j(0, \mu)^2}\,,
\]
and 
\[\begin{split}
	\varphi_{\xi,\mu,1} &= c_\xi\left(I_2 +\frac{\xi}{\sqrt{\xi^2+\nu(0,\mu)^2} + \nu(0,\mu)}\sigma_1\right)\varphi_{0,\mu,1}\,,
	\\
	\varphi_{\xi,\mu,-1} &= c_\xi\left(\sigma_1 -\frac{\xi}{\sqrt{\xi^2+\nu(0,\mu)^2} + \nu(0,\mu)}I_2\right)\varphi_{0,\mu,1}\,,
\end{split}\]
are normalized eigenvectors associated with $\pm \nu_1(\xi, \mu)$ where
\[
c_\xi = \left(\frac{\sqrt{\xi^2+\nu_1(0, \mu)^2} + \nu_1(0, \mu)}{2\sqrt{\xi^2+\nu_1(0, \mu)^2}}\right)^{\frac{1}{2}}\,.
\] 
\item We have
\[
	{\rm Spec}(\mathfrak{D}_0) = (-\infty, -\nu_1(0,\mu)]\cup[\nu_1(0,\mu), +\infty)\,,
\]
and
\[\mathrm{sp}_{\mathrm{ess}}(\mathscr{D}_\epsilon)
=
\epsilon^{-1}\mathrm{sp}(\mathfrak{D}_0)
=
(-\infty,-\epsilon^{-1}\nu_1(0,m\epsilon)]\cup[\epsilon^{-1}\nu_1(0,m\epsilon),+\infty)\,.\]	
where $\mathscr{D}_\epsilon$ is defined in \eqref{Def.Diracmain}.

\end{enumerate}
\end{proposition}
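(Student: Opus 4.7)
The approach is to reduce everything to the one-dimensional fiber operator $\mathfrak{d}_{0,\xi,\mu}$, solve it explicitly at $\xi=0$, lift to general $\xi$ by a purely algebraic device based on the anticommutation of the Pauli matrices, and then pass to the full $\mathfrak{D}_0$ through a direct integral in $s$. Self-adjointness of $\mathfrak{d}_{0,\xi,\mu}$ is a routine integration-by-parts check: the boundary form generated by $\sigma_2 D_t$ vanishes because the boundary condition forces $\phi(\pm 1)$ and $\psi(\pm 1)$ into the $\mp 1$-eigenspaces of $\sigma_1$, which $\sigma_2$ swaps. Compact resolvent follows from $H^1((-1,1))\hookrightarrow L^2((-1,1))$, and smoothness of eigenvalues in $(\xi,\mu)$ from Kato's theory of analytic families of type (A). For the spectral symmetry about $0$, I would introduce the unitary involution $U\psi(t):=\sigma_2\psi(-t)$: the identities $(\sigma_2\pm i\sigma_3)\ker(I\mp\sigma_1)=\{0\}$ show that $U$ preserves the domain, and the anticommutation relations yield $U\mathfrak{d}_{0,\xi,\mu}U^{-1}=-\mathfrak{d}_{0,\xi,\mu}$. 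Invertibility is handled through the identity $\mathfrak{d}_{0,\xi,\mu}^2=D_t^2+\xi^2+\mu^2$ (a direct consequence of $\sigma_i\sigma_j+\sigma_j\sigma_i=2\delta_{ij}I$): for $(\xi,\mu)\neq(0,0)$ the kernel is immediately trivial, and for $(\xi,\mu)=(0,0)$ the equation $\sigma_2 D_t\psi=0$ forces $\psi$ to be constant, which the twisted boundary conditions at $\pm 1$ allow only if $\psi\equiv 0$. Simplicity is automatic since the ODE system has a two-dimensional solution space cut to one dimension by the boundary condition at $t=-1$.

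For point 2, I would solve $\sigma_2 D_t\varphi+\mu\sigma_3\varphi=\nu\varphi$ componentwise: both components satisfy $\varphi_i''=(\mu^2-\nu^2)\varphi_i$, hence are trigonometric in $k:=\sqrt{\nu^2-\mu^2}$, and $\varphi_2$ is determined from $\varphi_1$ through the first-order relation. Imposing the two boundary conditions yields a $2\times 2$ linear system whose solvability factors as the product of two transcendental relations; the first-eigenvalue branch reads $\mu=-k/\tan(2k)$ with $k_0=\pi/4$ at $\mu=0$, and $\nu_1(0,\mu)=k/\sin(2k)$ follows from the pair of relations $\nu-\mu=k\cot k$, $\nu+\mu=k\tan k$ combined with standard trigonometric identities. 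The asymptotics as $\mu\to\pm\infty$ are read directly off the transcendental equation (branches near zeros of $\sin(2k)$, and $k\to 0^+$ along a hyperbolic branch giving $\nu_1\to 0$ as $\mu\to-\infty$). The Taylor expansion at $\mu=0$ is the analytic implicit function theorem applied to $\mu=-k/\tan(2k)$ around $(k,\mu)=(\pi/4,0)$, followed by a mechanical series reversion and substitution into $\nu=k/\sin(2k)$. Point 3 is then a two-dimensional algebraic corollary of $\sigma_1\mathfrak{d}_{0,0,\mu}\sigma_1=-\mathfrak{d}_{0,0,\mu}$: if $\mathfrak{d}_{0,0,\mu}\varphi=\nu\varphi$, then $\mathfrak{d}_{0,\xi,\mu}$ leaves the two-dimensional subspace spanned by $\{\varphi,\sigma_1\varphi\}$ invariant and acts on it as the matrix $\bigl(\begin{smallmatrix}\nu & \xi \\ \xi & -\nu\end{smallmatrix}\bigr)$, whose eigenvalues $\pm\sqrt{\xi^2+\nu^2}$ and normalized eigenvectors are exactly those in point 3. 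Applying this to every $\nu_j(0,\mu)$ exhausts the spectrum of $\mathfrak{d}_{0,\xi,\mu}$ in the correct order, since $\nu\mapsto\sqrt{\xi^2+\nu^2}$ is strictly increasing.

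For point 4, translation invariance of $\mathfrak{D}_0$ in $s$ gives the direct integral $\mathfrak{D}_0\cong\int^{\oplus}_{\mathbb{R}}\mathfrak{d}_{0,\epsilon\xi,\mu}\,d\xi$; taking the union of fiber spectra and using point 3 yields $\mathrm{sp}(\mathfrak{D}_0)=(-\infty,-\nu_1(0,\mu)]\cup[\nu_1(0,\mu),+\infty)$. For $\mathrm{sp}_{\mathrm{ess}}(\mathscr{D}_\epsilon)$, Theorem \ref{thm.normal} reduces the problem to $\epsilon^{-1}\mathfrak{D}_\epsilon$, and a Weyl singular-sequence argument placing test functions where the curvature of $\gamma$ is small (in the open-curve case; the closed case has purely discrete spectrum) identifies the essential spectrum with that of $\epsilon^{-1}\mathfrak{D}_0$ evaluated at $\mu=\epsilon m$. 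The main obstacle I anticipate is the bookkeeping in the Taylor expansion of paragraph two; the algebraic parts of the other points become short once the correct symmetry operator $U=\sigma_2 T$ (with $T\psi(t)=\psi(-t)$) and the two-dimensional invariant subspace $\mathrm{span}\{\varphi,\sigma_1\varphi\}$ have been identified.
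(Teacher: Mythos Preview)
Your proposal is correct and matches the paper's approach closely: the paper also solves the $\xi=0$ fiber by squaring to $-\partial_t^2+\mu^2$, derives the transcendental equation $\mu=-k/\tan(2k)$ case-by-case, and obtains Point~3 by exactly your two-dimensional invariant-subspace argument (the paper phrases it in the language of supersymmetry with grading operator $\sigma_1$ and cites Thaller's Theorem~5.13, but the computation is literally the $2\times 2$ matrix $\xi\sigma_1+\nu\sigma_3$ in the basis $\{\varphi_+,\sigma_1\varphi_+\}$). The only cosmetic difference is the symmetry witness: you use $U=\sigma_2 T$ to get spectral symmetry directly at every $\xi$, whereas the paper first uses $\sigma_1\mathfrak{d}_{0,0,\mu}\sigma_1=-\mathfrak{d}_{0,0,\mu}$ at $\xi=0$ and then reads symmetry off the formula $\nu_j(\xi,\mu)=\sqrt{\xi^2+\nu_j(0,\mu)^2}$; both work.
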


Figure \ref{fig:plotDispCurves} plots some dispersion curves $\mu\mapsto \nu_j(0,\mu)$.
\begin{figure}[htbp]
    \centering
    \includegraphics[width=1\textwidth]{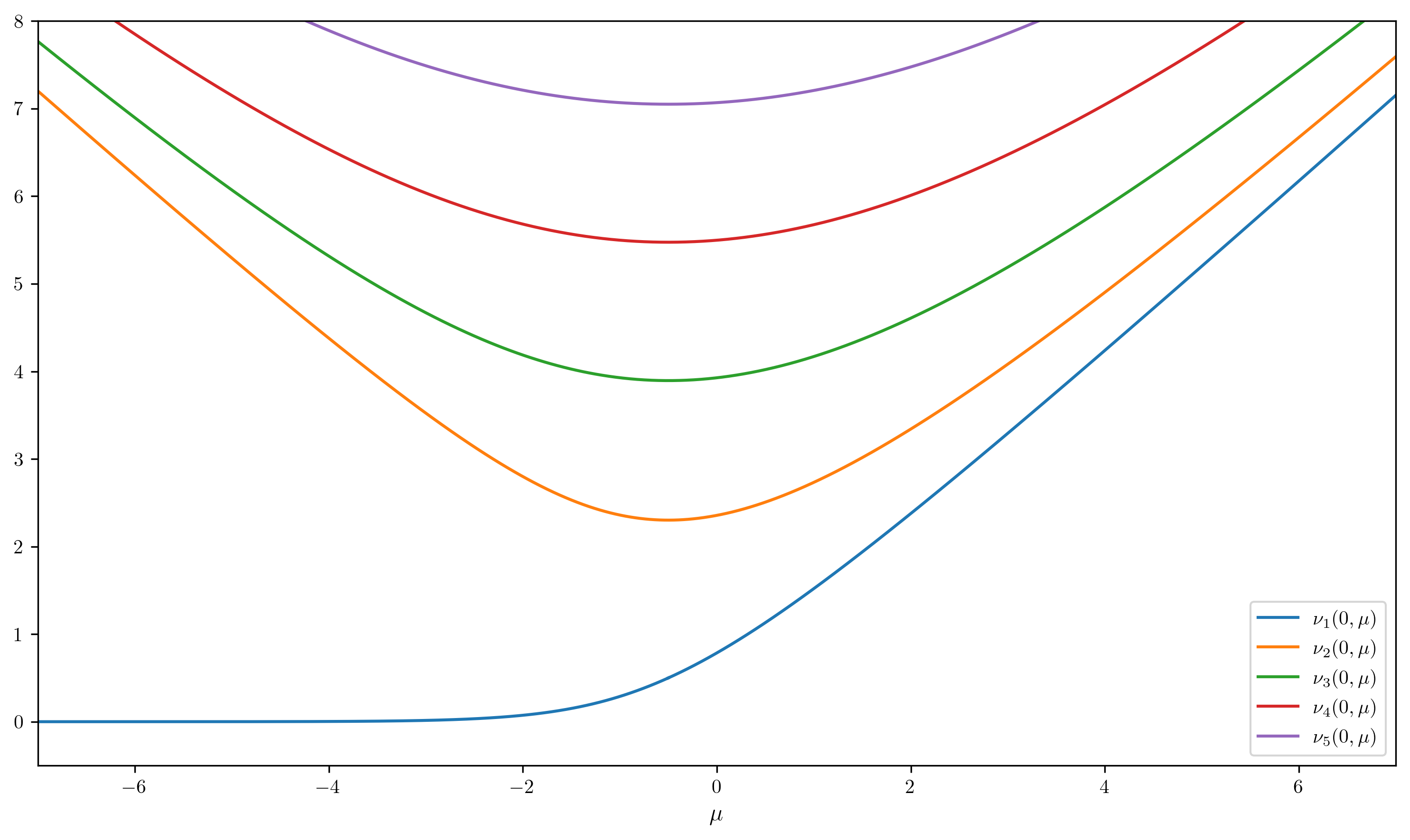}
    \caption{Curves of the eigenvalues \((\nu_j(0,\mu))_{j=1,\dots,5}\)}
    \label{fig:plotDispCurves}
\end{figure}

	\section{Dimensional reduction}\label{sec.4}
Let us now focus on the description of the discrete spectrum. Since the spectrum is symmetric with respect to $0$, we may focus on the possible (necessarily discrete) positive spectrum of $\mathfrak{D}_\epsilon$ in $[0,\nu_1(0,m\epsilon))$.
We recall that
\[\mathfrak{D}_\epsilon=\epsilon\frac{\sigma_1}{2}\left[(1-\epsilon t\kappa)^{-1}\left(D_s +\frac{\pi}{\ell}\right)+\left(D_s +\frac{\pi}{\ell}\right)(1-\epsilon t\kappa)^{-1}\right]+\sigma_2D_t+m\epsilon\sigma_3\,.\]

\subsection{A pseudodifferential view point}
The operator $\mathfrak{D}_\epsilon$ can be regarded as an $\epsilon$-pseudo-differential operator with an operator-valued (Weyl) symbol relative to the curvilinear coordinate $s$. In this article, we employ the Weyl quantization. The formal expression for a symbol  $p$  is given by\begin{equation}\label{eq.pW}
p^W\psi(s)=\frac{1}{2\pi \epsilon}\int_{\R^2}e^{i(s-y)\xi/\epsilon}p\left(\frac{s+y}{2},\xi\right)\psi(y)\mathrm{d}y\mathrm{d}\xi\,.
\end{equation}
For instance, the differential operator $\sigma_1 \epsilon D_s$ is such an operator, and its symbol is simply $\xi \sigma_1$. This symbol belongs to the class of symbols $S^1(\mathbb{R}^2,\mathscr{L}(\mathscr{A},\mathscr{B}))$, where we used the notation for all $\eta \in \left[0, \frac{1}{2}\right)$ (removing the reference to $\eta$ when $\eta = 0$).
\begin{multline*}S^m_\eta(\mathbb{R}^2,\mathscr{L}(\mathscr{A},\mathscr{B}))\\
=\{p\in\mathscr{C}^\infty(\mathbb{R}^2,\mathscr{L}(\mathscr{A},\mathscr{B})) : \forall (\alpha,\beta)\,,\exists C_{\alpha,\beta}>0\,, \|\partial^\alpha_s\partial^\beta_\xi p\|_{\mathscr{L}(\mathscr{A},\mathscr{B})}\leq C_{\alpha,\beta}\epsilon^{-(\alpha+\beta)\eta}\langle \xi\rangle^{m} \}\,,\end{multline*}
with 
\begin{equation}\label{eq.AB}
\mathscr{A}=\{u\in H^1(I,\mathbb{C}^2) : u_2(\pm 1)=\mp u_1(\pm 1)\}\,,\quad \mathscr{B}=L^2(I,\mathbb{C}^2)\,,
\end{equation}
where $\mathscr{A}$ is equipped with the usual $H^1$-norm. When there is no ambiguity we write $S_\eta^m(\R^2) := S^m_\eta(\mathbb{R}^2,\mathscr{L}(\mathscr{A},\mathscr{B}))$ and $S^m(\R^2) := S^m(\mathbb{R}^2,\mathscr{L}(\mathscr{A},\mathscr{B}))$.

\begin{remark}
When $\ell < +\infty$, we additionally assume that the symbols are $\ell$-periodic in the spatial variable and we consider the {\it modified} Weyl quantization
\[
 p^{\widetilde{W}}\psi(s)=\frac{1}{2\pi \epsilon}\int_{\R^2}e^{i(s-y)(\xi-\frac{\pi}\ell)/\epsilon}p\left(\frac{s+y}{2},\xi\right)\psi(y)\mathrm{d}y\mathrm{d}\xi\,.
\]
which quantizes the symbol $\big((s,\xi)\mapsto\xi\big)$ in $(\xi)^{\widetilde{W}}= \epsilon(D_s + \frac\pi\ell)$. We refer to \cite[\S 5.3.1]{Zworski} for a short discussion on the quantization of periodic symbols and, from now on and for the sake of presentation, we focus on the strip case ($\ell = +\infty$). 
\end{remark}

\subsubsection{Standard tools about pseudodifferential operators}

For further uses, we recall several results about $\epsilon$-pseudodifferential operators with operator valued symbols. They can be found in \cite{Keraval}. The first one concerns the composition theorem that can be found, {\it e.g.}, in \cite[Théorème 2.1.12]{Keraval}.
\begin{theorem}[Composition]\label{thm:comp_thm} Let $\mathscr{X}, \mathscr{Y}, \mathscr{Z}$ be three Banach spaces. Let $a \in S_\eta^m(\R^2,\mathscr{L}(\mathscr{Y},\mathscr{Z}))$ and $b \in S_\eta^d(\R^2,\mathscr{L}(\mathscr{X},\mathscr{Y}))$ there holds
\[
	a^W b^W = (a\sharp b)^W
\]
with $a\sharp b \in S_\eta^{m+d}(\R^{2},\mathscr{L}(\mathscr{X},\mathscr{Z}))$ and verifying
\[
	a\sharp b = \sum_{k=0}^N\frac1{k!}\left(\frac{i\epsilon}2\right)^k\sigma(D_X;D_Y)^k(a(X)b(Y))|_{Y=X} + \epsilon^{(N+1)(1-2\eta)}r_{N+1},
\]
with $r_{N+1} \in S_\eta^{m+d}(\R^{2},\mathscr{L}(\mathscr{X},\mathscr{Z}))$. Here $X = (s,\xi), Y= (\varsigma,p)$ and $\sigma(D_X;D_Y) = -\partial_\xi\partial_\varsigma+\partial_s\partial_p$.

In particular, for all $N \in \N$, if $\rm{supp}(a) \cap \rm{supp}(b) = \emptyset$, $a\sharp b = \epsilon^{(N+1)(1-2\eta)}r_{N+1}$.
\end{theorem}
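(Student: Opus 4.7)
The plan is to follow the standard template for the Weyl composition formula, adapted to the operator-valued setting. The first step is to compute the Schwartz kernel of $a^W b^W$ and identify it as the kernel of a Weyl quantization. Starting from \eqref{eq.pW}, I would write out the kernels of $a^W$ and $b^W$ and compose them in the variable $y$:
\[
K_{a^W b^W}(s,z) = \frac{1}{(2\pi\epsilon)^2}\iiint e^{i[(s-y)\xi+(y-z)\eta]/\epsilon}\, a\bigl(\tfrac{s+y}{2},\xi\bigr)\, b\bigl(\tfrac{y+z}{2},\eta\bigr)\, \dd y\, \dd\xi\, \dd\eta.
\]
Introducing centered coordinates $u = (s+z)/2$, $v = s-z$ and Fourier inverting in $v$, one recognizes this as $(a\sharp b)^W$ with the standard oscillatory integral representation
\[
a\sharp b(X) = \frac{1}{(\pi\epsilon)^2}\iint e^{\frac{2i}{\epsilon}\sigma(Y,Z)}\, a(X+Y)\, b(X+Z)\, \dd Y\, \dd Z,
\]
where $X=(s,\xi)$ and the ordering of the $\mathscr{L}$-valued factors is preserved, so that $a\sharp b \in \mathscr{L}(\mathscr{X},\mathscr{Z})$.

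The second step is the asymptotic expansion. The above is equivalent to the exponential form $a\sharp b(X) = \bigl[\exp\bigl(\tfrac{i\epsilon}{2}\sigma(D_{X_1};D_{X_2})\bigr)(a(X_1)b(X_2))\bigr]_{X_1=X_2=X}$, and expanding the exponential with integral Taylor remainder at order $N$ produces the stated partial sum plus a tail of order $\epsilon^{N+1}$. The rescaled remainder $r_{N+1}$ must be shown to lie in $S^{m+d}_\eta$ with constants independent of $\epsilon$, and this is the part I expect to be the main obstacle. Each application of $\sigma(D_{X_1};D_{X_2})$ costs one $X_1$-derivative on $a$ and one $X_2$-derivative on $b$, each producing a factor $\epsilon^{-\eta}$ by the symbol class hypothesis. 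After $N+1$ applications, the $\epsilon^{N+1}$ gain from the Taylor prefactor combines with the $\epsilon^{-2(N+1)\eta}$ loss to yield precisely $\epsilon^{(N+1)(1-2\eta)}$, while the polynomial growth $\langle\xi\rangle^{m+d}$ is preserved because $\xi$-derivatives improve decay. To justify the manipulation of the non-absolutely convergent oscillatory integral, I would split the $(Y,Z)$-integration into a compact region (where Taylor expansion of $a$ and $b$ around $X$ gives the ``main terms'') and a region at infinity (where non-stationary phase integration by parts in $(Y,Z)$ yields arbitrary polynomial decay).

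The disjoint-supports assertion requires no stationary phase: every term of the finite Taylor sum vanishes pointwise, because at any $X_0$ either $X_0 \notin \mathrm{supp}(a)$ (so $a$ and all its derivatives vanish at $X_0$) or $X_0 \notin \mathrm{supp}(b)$ (so analogously for $b$), and in both cases the derivatives of $a(X)b(Y)$ evaluated at $X=Y=X_0$ are zero. Hence $a\sharp b$ reduces entirely to the remainder. Moreover, on the support of $(Y,Z) \mapsto a(X+Y)b(X+Z)$ one has $|Y-Z| \geq \delta := \mathrm{dist}(\mathrm{supp}(a),\mathrm{supp}(b)) > 0$, so either $|Y| \geq \delta/2$ (enabling integration by parts in $Z$, using $|\nabla_Z \sigma(Y,Z)| = |Y|$) or $|Z| \geq \delta/2$ (enabling integration by parts in $Y$); a partition of unity reduces to these two cases, and iteration produces the stated decay for every $N$.
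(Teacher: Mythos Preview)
The paper does not prove this theorem: it is stated as a known result and attributed to \cite[Théorème 2.1.12]{Keraval}. Your sketch follows the standard template for Weyl composition (oscillatory-integral representation, Taylor expansion of the exponential, stationary/non-stationary phase to control the remainder), which is indeed how such results are established; there is nothing in the paper to compare it against beyond the citation.
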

We will also use the following Calder\'on-Vaillancourt theorem adapted for operator-valued pseudodifferential operators (see \cite[Théorème 2.1.16]{Keraval}).
\begin{theorem}[Calder\'on-Vaillancourt]\label{thm.CV} Let $\mathscr{X}, \mathscr{Y}$ be Banach spaces. If $a \in S_\eta(\R^2,\mathscr{L}(\mathscr{X},\mathscr{Y}))$ then
\[
a^W: L^2(\R,\mathscr{X})\to L^2(\R,\mathscr{Y})
\]
and there exists $\epsilon_0,C,M > 0$ such that for all $\epsilon\in(0,\epsilon_0)$ there holds
\[
\|a^W\|_{\mathscr{L}(L^2(\R,\mathscr{X}),L^2(\R,\mathscr{Y}))} \leq C \sum_{|\alpha| \leq M}\epsilon^{\frac{|\alpha|}2}\sup_{(s,\xi)\in \R^2}\|\partial^\alpha a(s,\xi)\|_{\mathscr{L}(\mathscr{X},\mathscr{Y})}.
\] 
\end{theorem}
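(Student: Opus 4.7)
The plan is to follow the classical Calder\'on-Vaillancourt strategy combined with the Cotlar-Stein lemma, adapting it to the operator-valued setting. First, I would reduce to the case $\epsilon = 1$ by the unitary rescaling $(U_\epsilon u)(s) = \epsilon^{1/4} u(\epsilon^{1/2} s)$ on $L^2(\R, \mathscr{X})$, which conjugates the $\epsilon$-Weyl quantization of $a$ into the standard Weyl quantization of $\tilde a(s,\xi) = a(\epsilon^{1/2}s, \epsilon^{1/2}\xi)$. Since $\|\partial^\alpha \tilde a\|_\infty = \epsilon^{|\alpha|/2}\|\partial^\alpha a\|_\infty$, the factors $\epsilon^{|\alpha|/2}$ on the right-hand side of the bound emerge naturally from this rescaling, and it suffices to prove the $\epsilon = 1$ estimate in terms of $\sup_{(s,\xi)} \|\partial^\alpha a(s,\xi)\|_{\mathscr{L}(\mathscr{X},\mathscr{Y})}$.

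Next, I would fix a partition of unity $1 = \sum_{\alpha \in \Z^2} \chi_\alpha$ on $\R^2_{(s,\xi)}$ with $\chi_\alpha(X) = \chi_0(X-\alpha)$ supported in a fixed compact set, and set $a_\alpha = \chi_\alpha a$. Each $a_\alpha$ has all derivatives bounded by a constant multiple of $M_0 := \sum_{|\beta| \leq M} \|\partial^\beta a\|_\infty$, uniformly in $\alpha$. A direct integration by parts in the oscillatory integral defining $a_\alpha^W$, together with the Schur test on the resulting $\mathscr{L}(\mathscr{X},\mathscr{Y})$-valued kernel (with $\|\cdot\|_{\mathscr{L}(\mathscr{X},\mathscr{Y})}$ playing the role of absolute value), yields the uniform bound $\|a_\alpha^W\|_{\mathscr{L}(L^2(\R,\mathscr{X}),L^2(\R,\mathscr{Y}))} \leq C\, M_0$.

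The heart of the argument is the estimation of the cross terms, which I would obtain through Theorem \ref{thm:comp_thm}. The compositions $(a_\alpha^W)^\star a_\beta^W$ and $a_\alpha^W (a_\beta^W)^\star$ have Weyl symbols obtained from $a_\alpha^\star \sharp a_\beta$ and $a_\alpha \sharp a_\beta^\star$; these symbols are essentially supported near $\alpha$ and $\beta$, and the non-stationary phase statement built into the composition formula (the exact support property at the end of Theorem \ref{thm:comp_thm}) combined with the Schur test gives, for every $N\in\N$,
\[
\|(a_\alpha^W)^\star a_\beta^W\|_{\mathscr{L}(L^2(\R,\mathscr{X}))} + \|a_\alpha^W (a_\beta^W)^\star\|_{\mathscr{L}(L^2(\R,\mathscr{Y}))} \leq C_N\, M_0^2\, \langle \alpha - \beta \rangle^{-2N}.
\]
Choosing $N$ large enough so that the two Cotlar-Stein sums $\sup_\alpha \sum_\beta \langle \alpha - \beta\rangle^{-N}$ converge, the Cotlar-Stein lemma then produces $\|a^W\| \leq C M_0$, which is the desired bound.

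The main obstacle is keeping the operator-valued bookkeeping clean throughout: one must ensure that the pointwise adjoint $a_\alpha^\star$ is itself an admissible $\mathscr{L}(\mathscr{Y},\mathscr{X})$-valued symbol (so that the Weyl adjoint of $a_\alpha^W$ is $(a_\alpha^\star)^W$), that the composition asymptotic expansion of Theorem \ref{thm:comp_thm} can be iterated to obtain arbitrary decay in $|\alpha-\beta|$, and that Cotlar-Stein applies, which implicitly requires Hilbert structure on $\mathscr{X}$ and $\mathscr{Y}$ (or a substitute adapted to the Banach setting). Once these technical points are secured, the scalar Calder\'on-Vaillancourt proof transfers essentially verbatim, with $\|\cdot\|_{\mathscr{L}(\mathscr{X},\mathscr{Y})}$ systematically replacing the absolute value.
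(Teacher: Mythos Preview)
The paper does not prove this theorem: it is stated as a tool and attributed to \cite[Th\'eor\`eme 2.1.16]{Keraval}. So there is no ``paper's own proof'' to compare against; your proposal is supplying an argument where the paper simply quotes a reference.

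Your outline is the standard Calder\'on--Vaillancourt/Cotlar--Stein scheme and is essentially sound, but two points deserve tightening. First, invoking Theorem~\ref{thm:comp_thm} for the cross-term decay is imprecise: after the rescaling you are at $\epsilon=1$, and the disjoint-support clause of that theorem then reads $r_{N+1}\in S^{m+d}$ with no small prefactor, so it does not by itself produce $\langle\alpha-\beta\rangle^{-2N}$. What you actually need is a direct non-stationary-phase (integration-by-parts) estimate on the kernel of $a_\alpha^\star\sharp a_\beta$, exploiting that the supports are separated by distance $\sim|\alpha-\beta|$; this is the ingredient \emph{behind} the composition theorem rather than the theorem as stated. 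Second, you correctly flag that Cotlar--Stein requires a Hilbert structure. The statement is written for Banach $\mathscr{X},\mathscr{Y}$, but in every application in the paper the spaces $\mathscr{A},\mathscr{B},\mathbb{C}$ are Hilbert, so restricting to that case loses nothing for the purposes of this article; if you insist on general Banach spaces you would need a different almost-orthogonality device.
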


\subsubsection{A pseudodifferential viewpoint}
The symbol of $\mathfrak{D}_\epsilon$ can be easily computed since it is a differential operator of order $1$ (see \cite[Theorem 4.5]{Zworski}) and we have
\[\mathfrak{D}_\epsilon=\mathfrak{d}^W_\epsilon\,, \mbox{ with }\quad\mathfrak{d}_\epsilon=\sigma_2D_t+(1-t\epsilon\kappa(s))^{-1}\xi \sigma_1+\mu \sigma_3\,.\]
Remark that $\mathfrak{d}_\epsilon \in S^1(\R^2,\mathscr{L}(\mathscr{A},\mathscr{B}))$ because for $\epsilon$ small enough $(1-t\epsilon\kappa(s))^{-1}$ belongs to $S^0(\mathbb{R}^2)$ and by the composition Theorem \ref{thm:comp_thm}$, (1-\epsilon t\kappa(s))^{-1} \sigma_1\epsilon \left(D_s +\frac{\pi}{\ell}\right)$ is an $\epsilon$-pseudodifferential operator whose symbol belongs to $S^1(\mathbb{R}^2)$. Actually, by observing that
\[(1-\epsilon t\kappa(s))^{-1}=\sum_{j=0}^{+\infty} \kappa(s)^jt^j\epsilon^j\,,\]
we have the following.

\begin{proposition}\label{prop.expansiondepsilon}
We can write
\[\mathfrak{D}_\epsilon=\mathfrak{D}_0+\epsilon\mathfrak{D}_1+\ldots+\epsilon^N\mathfrak{R}_N(\epsilon)\,,\]
where
\[\mathfrak{D}_0=\epsilon\sigma_1 ( D_s + \frac\pi\ell)+\sigma_2 D_t+\mu\sigma_3\,,\quad \mathfrak{D}_1=\frac{t}{2}\sigma_1\epsilon(\kappa ( D_s+ \frac\pi\ell)+( D_s+ \frac\pi\ell)\kappa)\,,\]
and $\mathfrak{R}_N(\epsilon)$ being an $\epsilon$-pseudodifferential operator with operator-valued symbol in $S^1(\mathbb{R}^2)$.	

Moreover, we can also write the following expansion (in the topology of $S^1(\mathbb{R}^2,\mathscr{L}(\mathscr{A},\mathscr{B}))$) of the symbol $\mathfrak{d}_\epsilon$ of $\mathfrak{D}_\epsilon$:
\[\mathfrak{d}_\epsilon=\mathfrak{d}_0+\epsilon\mathfrak{d}_1+\ldots+\epsilon^N\mathfrak{r}_N(\epsilon)\,,\]
where
\begin{equation}\mathfrak{d}_0=\xi\sigma_1 +\sigma_2 D_t+\mu\sigma_3\,,\quad \mathfrak{d}_1=t\kappa(s)\xi\sigma_1\,,\quad \mathfrak{d}_2=(t\kappa(s))^2\xi\sigma_1\,.
\label{eqn:defdbarj}
\end{equation}
\end{proposition}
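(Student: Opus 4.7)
The plan is to derive the expansion at the symbol level by a geometric series, then transfer it to the operator level via the Weyl quantization formula.

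First I would set $a(s,t) := t\kappa(s)$ (which is smooth and bounded, together with all its derivatives, on the strip $S_1$) and fix $\epsilon_0>0$ small enough that $\|\epsilon a\|_\infty < 1$ for $\epsilon \in (0,\epsilon_0)$. The standard geometric series identity then gives, for every $N \in \mathbb{N}$,
\[
(1-\epsilon a(s,t))^{-1} = \sum_{j=0}^{N-1}\epsilon^j a(s,t)^j + \epsilon^N a(s,t)^N (1-\epsilon a(s,t))^{-1}.
\]
Substituting this into $\mathfrak{d}_\epsilon = \sigma_2 D_t + (1-t\epsilon\kappa(s))^{-1}\xi\sigma_1 + \mu\sigma_3$, I immediately read off
\[
\mathfrak{d}_\epsilon = \mathfrak{d}_0 + \sum_{j=1}^{N-1}\epsilon^j \mathfrak{d}_j + \epsilon^N \mathfrak{r}_N(\epsilon),
\]
with $\mathfrak{d}_0$, $\mathfrak{d}_1$, $\mathfrak{d}_2$ as in \eqref{eqn:defdbarj} (and more generally $\mathfrak{d}_j = (t\kappa(s))^j \xi\sigma_1$), and the remainder symbol
\[
\mathfrak{r}_N(\epsilon)(s,\xi) = \bigl(t\kappa(s)\bigr)^N \bigl(1-\epsilon t\kappa(s)\bigr)^{-1}\xi\sigma_1,
\]
viewed as a multiplication operator from $\mathscr{A}$ to $\mathscr{B}$.

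Next I would check that $\mathfrak{r}_N(\epsilon)\in S^1(\mathbb{R}^2,\mathscr{L}(\mathscr{A},\mathscr{B}))$ uniformly in $\epsilon \in (0,\epsilon_0)$. Because $\kappa$ and all its $s$-derivatives are bounded, $t$ lies in the bounded interval $(-1,1)$, and $(1-\epsilon t\kappa)^{-1}$ is uniformly bounded with bounded derivatives in $(s,t)$, every $(s,\xi)$-derivative of the prefactor is bounded in $\mathscr{L}(\mathscr{A},\mathscr{B})$, so the only $\xi$-dependence is the linear factor $\xi$, which gives the $\langle \xi\rangle^1$ growth required by $S^1$. The same verification applies to $\mathfrak{d}_j$ for $j\geq 1$, showing that each term in the expansion belongs to $S^1(\mathbb{R}^2,\mathscr{L}(\mathscr{A},\mathscr{B}))$.

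Finally, I would pass to operators by Weyl quantization. Since $\mathfrak{d}_\epsilon$ is a differential symbol of order $1$ in $\xi$, the quantization is explicit: the symbol $\xi$ maps to $\epsilon(D_s+\pi/\ell)$ under the modified Weyl rule, and symbols of the form $b(s)\xi$ are quantized as the symmetrized combination
\[
(b(s)\xi)^{\widetilde W} = \tfrac{1}{2}\Bigl(b(s)\,\epsilon(D_s + \tfrac{\pi}{\ell}) + \epsilon(D_s + \tfrac{\pi}{\ell})\,b(s)\Bigr).
\]
Applied to $\mathfrak{d}_1 = t\kappa(s)\xi\sigma_1$, and using that $t\sigma_1$ commutes with $D_s$, this yields exactly the stated expression for $\mathfrak{D}_1$. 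Setting $\mathfrak{R}_N(\epsilon) := \mathfrak{r}_N(\epsilon)^{\widetilde W}$, which belongs to the same symbol class $S^1(\mathbb{R}^2)$ by the previous step, concludes the expansion for $\mathfrak{D}_\epsilon$. The only mildly subtle point is the uniform-in-$\epsilon$ control of the $S^1$ seminorms of the remainder, but this reduces to Leibniz-differentiating the explicit formula for $\mathfrak{r}_N(\epsilon)$ and using that $\|\epsilon t\kappa\|_\infty < 1$; everything else is a direct computation.
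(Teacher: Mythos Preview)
Your proposal is correct and follows exactly the approach the paper indicates: the paper simply observes the geometric series $(1-\epsilon t\kappa)^{-1}=\sum_{j\ge 0}(\epsilon t\kappa)^j$ just before stating the proposition and leaves the remaining verifications implicit. Your write-up fills in precisely those details (remainder in $S^1$, explicit Weyl quantization of $b(s)\xi$ yielding the symmetrized $\mathfrak{D}_1$), so there is nothing to add.
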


\subsection{Microlocalization estimates}\label{sec.microloc}
This section is devoted to the proofs of Lemmas \ref{lem.xiborne} to \ref{lem.xiborne0}. Lemma \ref{lem.xiborne} establishes elliptic estimates. Lemma \ref{lem.xiborne0} further states that these eigenfunctions are microlocalized near $\xi = 0$. These considerations will be instrumental in estimating the remainders in our approximate parametrix construction in Section \ref{sec.parametrix}. The reader may skip the proofs on a first reading.

\begin{lemma}[Elliptic estimates]\label{lem.xiborne}
	Let $M>0$ and $k\in \N^*$. There exist $\epsilon_0>0$ and $C > 0$ such that for all $\epsilon\in(0,\epsilon_0)$ and all eigenfunctions $\psi$ of $\mathfrak{D}_\epsilon$ associated with an eigenvalue $0\leq \lambda\leq M$ we have	
	\[
    		\|(\langle \xi \rangle^k)^W \psi\|_{L^2(\mathbb{R}, \mathscr{A})} \leq C \|\psi\|\,.
   	 \]
where the Weyl quantization is defined in \eqref{eq.pW}. 
\end{lemma}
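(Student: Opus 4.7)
The approach I would take exploits the ellipticity of the principal symbol $\mathfrak{d}_\epsilon(s,\xi)$ through pseudodifferential calculus on operator-valued symbols, combined with the resolvent identity applied to the eigenfunction. The first step is to establish symbol-level ellipticity. Using the anticommutation relations of the Pauli matrices, a direct computation yields
\[
\mathfrak{d}_\epsilon(s,\xi)^2 = (1-\epsilon t\kappa(s))^{-2}\xi^2 + D_t^2 + \mu^2 - \epsilon\kappa(s)(1-\epsilon t\kappa(s))^{-2}\xi\,\sigma_3,
\]
which, for $\epsilon$ small and $\kappa$ bounded, is bounded below by $c\,\xi^2 - C$ as a self-adjoint operator on $\mathscr{B}$, uniformly in $s$. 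Since $\mathfrak{d}_\epsilon(s,\xi)$ is self-adjoint with real spectrum (cf. Proposition \ref{prop.straightstrip}), $\mathfrak{d}_\epsilon - i$ is boundedly invertible for every $(s,\xi)$, and the above coercivity places $(\mathfrak{d}_\epsilon - i)^{-1}$ in $S^{-1}(\R^2,\mathscr{L}(\mathscr{B},\mathscr{B}))$. Extracting $D_t u$ from the identity $\mathfrak{d}_\epsilon u = f + i u$ recovers one order of $t$-regularity, so $(\mathfrak{d}_\epsilon - i)^{-1} \in S^0(\R^2,\mathscr{L}(\mathscr{B},\mathscr{A}))$ as well.

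Next, I would promote this symbolic inverse to an operator statement. By Theorem \ref{thm:comp_thm}, the Weyl quantization yields $((\mathfrak{d}_\epsilon - i)^{-1})^W \circ (\mathfrak{D}_\epsilon - i) = I + \epsilon\,\rho_\epsilon^W$ with $\rho_\epsilon \in S^0$; iterating the Neumann correction then produces a parametrix $Q_\epsilon$ showing that $(\mathfrak{D}_\epsilon - i)^{-1}$ is itself the Weyl quantization of a symbol in $S^{-1}(\R^2,\mathscr{L}(\mathscr{B},\mathscr{B})) \cap S^0(\R^2,\mathscr{L}(\mathscr{B},\mathscr{A}))$ with leading term $(\mathfrak{d}_\epsilon - i)^{-1}$.

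For $\psi$ an eigenfunction with $\mathfrak{D}_\epsilon\psi = \lambda\psi$, $\lambda\in[0,M]$, the resolvent identity iterates to
\[
\psi = (\lambda - i)^{k+1}(\mathfrak{D}_\epsilon - i)^{-(k+1)}\psi.
\]
Applying $(\langle\xi\rangle^k)^W = \langle\epsilon D_s\rangle^k$ and using the composition formula $k+1$ times, the symbol of $\langle\epsilon D_s\rangle^k \circ (\mathfrak{D}_\epsilon - i)^{-(k+1)}$ belongs to $S^0(\R^2,\mathscr{L}(\mathscr{B},\mathscr{A}))$: each of the $k+1$ factors of $(\mathfrak{d}_\epsilon - i)^{-1}$ contributes a $\langle\xi\rangle^{-1}$ decay as a $\mathscr{B}$-valued symbol, one such factor is spent to re-land in $\mathscr{A}$, and the remaining $k$ compensate the growth $\langle\xi\rangle^k$. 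The Calder\'on-Vaillancourt theorem (Theorem \ref{thm.CV}) then yields $L^2(\R,\mathscr{B}) \to L^2(\R,\mathscr{A})$ boundedness, and combined with $|\lambda - i|^{k+1} \le (1+M^2)^{(k+1)/2}$ this gives
\[
\|(\langle\xi\rangle^k)^W\psi\|_{L^2(\R,\mathscr{A})} \le |\lambda-i|^{k+1}\,\|\langle\epsilon D_s\rangle^k(\mathfrak{D}_\epsilon - i)^{-(k+1)}\psi\|_{L^2(\R,\mathscr{A})} \le C\|\psi\|,
\]
uniformly for $\epsilon$ small.

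The main technical obstacle is the bookkeeping between the classes $S^m(\R^2,\mathscr{L}(\mathscr{B},\mathscr{B}))$ and $S^m(\R^2,\mathscr{L}(\mathscr{B},\mathscr{A}))$ in the iterated composition, coupled with verifying that the derivatives of $(\mathfrak{d}_\epsilon - i)^{-1}$ satisfy the corresponding uniform symbol estimates in $\epsilon$. Since $(\mathfrak{d}_\epsilon - i)^{-1}$ only gains $\langle\xi\rangle^{-1}$ as a $\mathscr{B}$-valued symbol and costs one $\xi$-order to re-land in $\mathscr{A}$, one is forced to raise the resolvent to power $k+1$ rather than $k$; checking that the Neumann-series parametrix closes in these mixed classes is the delicate point, but it is standard in the calculus developed in \cite{Keraval}.
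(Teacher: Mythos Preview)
Your approach is correct in outline but proceeds quite differently from the paper. The paper does not build a parametrix for $\mathfrak{D}_\epsilon - i$; instead it splits $\mathfrak{D}_\epsilon = \mathfrak{D}_0 + \epsilon(\xi\sigma_1 r_1^\epsilon)^W$ with $r_1^\epsilon = t\kappa(1-\epsilon t\kappa)^{-1}$ and exploits that $\mathfrak{D}_0$ is a Fourier multiplier in $s$, hence \emph{commutes} with $(\langle\xi\rangle^k)^W$. The key coercivity is the elementary identity $\|\mathfrak{D}_0 u\|^2 = \|\epsilon D_s u\|^2 + \|(\sigma_2 D_t+\mu\sigma_3)u\|^2$, obtained by integration by parts, which gives $\|\mathfrak{D}_0 u\| \geq c\big(\|(\langle\xi\rangle)^W u\| + \|u\|_{L^2(\R,\mathscr{A})}\big)$. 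Applying this to $(\langle\xi\rangle^k)^W\psi$ and bounding the perturbation by $\|\epsilon(\langle\xi\rangle^k)^W(\xi\sigma_1 r_1^\epsilon)^W\psi\|\leq C\epsilon\|(\langle\xi\rangle^{k+1})^W\psi\|$ yields, after absorbing the $\epsilon$-term,
\[
\|(\langle\xi\rangle^{k+1})^W\psi\| + \|(\langle\xi\rangle^k)^W\psi\|_{L^2(\R,\mathscr{A})} \leq C\|(\langle\xi\rangle^k)^W\psi\|\,,
\]
and the lemma follows by induction on $k$. This route is more elementary: it sidesteps the question of whether $(\mathfrak{D}_\epsilon - i)^{-1}$ is itself a genuine pseudodifferential operator (which, as you correctly flag, requires either a Beals-type characterization or control of the Neumann series in the symbol topology) and avoids the mixed $\mathscr{A}/\mathscr{B}$ bookkeeping entirely. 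Your resolvent-power argument is more systematic and would adapt to perturbations that are not small in $\epsilon$, but here the paper's induction is shorter precisely because $\mathfrak{D}_0$ is constant-coefficient in $s$.
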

\begin{proof}
	Let $M > 0$, $\lambda \in [0,M]$ an eigenvalue of the operator $\mathfrak{D}_\epsilon$ and $\psi$ an associated eigenfunction. 
	As there holds $\mathfrak{D}_\epsilon \psi = \lambda \psi$, it gives 
	\begin{equation}
		\mathfrak{D}_0 \psi = \lambda\psi -\epsilon \left(\xi\sigma_1r_1^\epsilon\right)^W\psi\,,
		\label{eqn:quasivp}
	\end{equation}
	with $r_1^\epsilon\colon(s,t) \mapsto \frac{t\kappa}{1-\epsilon t \kappa}$.
	The proof is divided into several steps.
	\begin{enumerate}[label=\it Step \arabic*.,  leftmargin=0cm, itemindent = 1.5cm]
	\item Let us start with some estimates on \(\epsilon\left(\xi\sigma_1r_1^\epsilon\right)^W\psi\).
	Since $\xi\sigma_1r_1^\epsilon\in S_0^1(\R^2,\mathscr{L}(\mathscr{B}))$ and $\langle\xi\rangle^s\in S_0^s(\R^2,\mathscr{L}(\mathscr{B}))$ for all $s$, Theorem \ref{thm:comp_thm} and the Calderón-Vaillancourt theorem (Theorem \ref{thm.CV}) ensure that there exists a positive constant $C$ such that
	\begin{equation}\label{eq.35}
		\|\epsilon(\langle\xi\rangle^k)^W \left(\xi\sigma_1r_1^\epsilon\right)^W\psi\| 
		= 
		\|\epsilon(\langle\xi\rangle^k)^W \left(\xi\sigma_1r_1^\epsilon\right)^W(\langle\xi\rangle^{-(k+1)})^W(\langle\xi\rangle^{(k+1)})^W\psi\|
		\leq C\epsilon\|(\langle\xi\rangle^{(k+1)})^W\psi\|\,.
	\end{equation}
%
%
	\item
	
	Let $f \in L^2(S_1, \mathbb{C}^2)$ be such that $(\langle \xi \rangle^k)^W f \in L^2(S_1, \mathbb{C}^2)$. Define $\psi = \mathfrak{D}_0^{-1} f \in \mathrm{Dom}(\mathfrak{D}_0)$, where $\mathrm{Dom}(\mathfrak{D}_0)$ is as specified in Theorem \ref{thm.normal} (Proposition \ref{prop.straightstrip} ensures that $\mathfrak{D}_0$ is invertible). We assert that $(\langle \xi \rangle^k)^W \psi \in \mathrm{Dom}(\mathfrak{D}_0)$. Indeed, for any $\varphi \in \mathrm{Dom}(\mathfrak{D}_0) \cap \mathscr{C}^\infty_c(\overline{S_1})$ and almost any $\xi \in \mathbb{R}$, since $\mathfrak{d}_0$ is self-adjoint, we have
\[
    \langle \mathfrak{d}_0 \widehat{\varphi}, \widehat{\psi} \rangle 
    =
    \langle \widehat{\varphi}, \mathfrak{d}_0 \widehat{\psi} \rangle 
    =
    \langle \widehat{\varphi}, \widehat{f} \rangle.
\]
Therefore, since $(\langle \xi \rangle^k)^W f \in L^2(S_1, \mathbb{C}^2)$, we obtain the following estimate for the distributional pairing:
\[
\begin{split}
    |\langle (\langle \xi \rangle^k)^W \mathfrak{D}_0 \varphi, \psi \rangle_{\mathcal{D}, \mathcal{D}'}| 
    &= 
    \left| 
    \int_{\mathbb{R}} \langle \xi \rangle^k \langle \mathfrak{d}_0 \widehat{\varphi}, \widehat{\psi} \rangle \, d\xi 
    \right| 
    = 
    \left| 
    \int_{\mathbb{R}} \langle \widehat{\varphi}, \langle \xi \rangle^k \widehat{f} \rangle \, d\xi 
    \right| 
    \\&\leq
    \|\varphi\| \|(\langle \xi \rangle^k)^W f\|,
\end{split}
\]
thus, $(\langle \xi \rangle^k)^W \psi \in \mathrm{Dom}(\mathfrak{D}_0^*) = \mathrm{Dom}(\mathfrak{D}_0)$ and $\mathfrak{D}_0 (\langle \xi \rangle^k)^W \psi = (\langle \xi \rangle^k)^W f$.

	\item  With $\psi$ being a solution to \eqref{eqn:quasivp}, let $f = \lambda \psi - \epsilon (\xi \sigma_1 r_1^\epsilon)^W \psi$. By \eqref{eq.35} and the triangle inequality, we obtain
\begin{equation}\label{eq.47}
\begin{split}
    \|\mathfrak{D}_0 (\langle \xi \rangle^k)^W \psi\|
    &=
    \left\|
        (\langle \xi \rangle^k)^W
        \left(
            \lambda \psi - \epsilon (\xi \sigma_1 r_1^\epsilon)^W \psi
        \right)
    \right\|
    \\
    &\leq 
    M \|(\langle \xi \rangle^k)^W \psi\|
    +
    \epsilon
    \left\|
        (\langle \xi \rangle^k)^W
        (\xi \sigma_1 r_1^\epsilon)^W \psi
    \right\|
    \\
    &\leq 
    M \|(\langle \xi \rangle^k)^W \psi\|
    +
    C \epsilon \|(\langle \xi \rangle^{(k+1)})^W \psi\|
    \,.
\end{split}
\end{equation}
	\item We recall now that for $\psi \in \mathrm{Dom}(\mathfrak{D}_0)$, a standard argument by integration by parts ensures that
\[
    \|\mathfrak{D}_0 \psi\|^2 = \|\epsilon (D_s + \frac\pi\ell) \psi\|^2 + \|(-i \sigma_2 \partial_t + \mu \sigma_3) \psi\|^2\,.
\]
Since $(-i \sigma_2 \partial_t + \mu \sigma_3)$ is invertible and closed, there exists $c > 0$ such that
\[
    \|(-i \sigma_2 \partial_t + \mu \sigma_3) \psi\|^2
    \geq 
    c \left(
        \|\partial_t \psi\|^2 + \|\psi\|^2
    \right)\,.
\]
Therefore, up to taking a different constant $c$, we obtain for all $\psi \in \mathrm{Dom}(\mathfrak{D}_0)$
\begin{equation}\label{eq.SchroLich}
    \|\mathfrak{D}_0 \psi\|
    \geq c \left(
        \|(\langle \xi \rangle)^W \psi\| + \|\psi\|_{L^2(\mathbb{R}, \mathscr{A})}
    \right).
\end{equation}
	\item Combining \eqref{eq.47} and \eqref{eq.SchroLich}, we obtain
\[
    c \left(
        \|(\langle \xi \rangle^{(k+1)})^W \psi\| + \|(\langle \xi \rangle^k)^W \psi\|_{L^2(\mathbb{R}, \mathscr{A})}
    \right)
    \leq
    M \|(\langle \xi \rangle^k)^W \psi\|
    +
    C \epsilon \|(\langle \xi \rangle^{(k+1)})^W \psi\|
\]
so that for $\epsilon$ small enough, there exists $C > 0$ such that
\[
    \|(\langle \xi \rangle^{(k+1)})^W \psi\| + \|(\langle \xi \rangle^k)^W \psi\|_{L^2(\mathbb{R}, \mathscr{A})}
    \leq
    C \|(\langle \xi \rangle^k)^W \psi\|.
\]
We obtain the result by induction on $k$.

\end{enumerate}
\end{proof}

Now, let us turn to the result of microlocalization, which follows from an argument rather similar to that in the article \cite[Section 3.2]{FLTRVN22} devoted to magnetic Schrödinger operators.
\begin{lemma}[Microlocalisation]\label{lem.xiborne0}
Let $K > 0$, $N \in \mathbb{N}$ and $\eta \in \left(0, \frac{1}{2}\right)$. Consider $\Xi_0$, a smooth bounded function that equals $0$ near $0$, and let $\Xi \colon \xi \mapsto \Xi_0(\epsilon^{-\eta}\xi)$.
There exists $\epsilon_0 > 0$ such that for all $\epsilon \in (0, \epsilon_0)$ and all eigenfunctions $\psi$ of $\mathfrak{D}_\epsilon$ associated with an eigenvalue $0 \leq \lambda \leq \nu_1(0, \mu) + K \epsilon^2$, we have
\begin{equation}\label{eq.microloc}
\|\Xi^W \psi\| = \mathscr{O}(\epsilon^N) \|\psi\|\,,
\end{equation}
and for all $k \in \mathbb{N}$,
\begin{equation}\label{eq.controlH2}
\|(\langle \xi \rangle^{k})^W \Xi^W \psi\| = \mathscr{O}(\epsilon^N) \|\psi\|\,,
\end{equation}
where the Weyl quantization is defined in \eqref{eq.pW}. 
\end{lemma}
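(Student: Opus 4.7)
The plan is to exploit that, on the support of $\Xi$, the principal symbol $\mathfrak{d}_0-\lambda$ is elliptic (with an $\epsilon$-dependent lower bound), and to build a microlocal parametrix cutting off $\psi$ to $\mathrm{supp}\,\Xi$. First, I would establish an invertibility estimate for $\mathfrak{d}_0(\xi,\mu)-\lambda$ on $\mathrm{supp}\,\Xi$. By Proposition~\ref{prop.straightstrip}, the spectrum of $\mathfrak{d}_0(\xi,\mu)$ equals $\{\pm\sqrt{\xi^2+\nu_j(0,\mu)^2}:j\geq 1\}$. For $\lambda\in[0,\nu_1(0,\mu)+K\epsilon^2]$ and $|\xi|\geq c\epsilon^\eta$, one has
\[
\sqrt{\xi^2+\nu_1(0,\mu)^2}-\lambda \;\geq\; \frac{c^2\epsilon^{2\eta}}{2\nu_1(0,\mu)} - K\epsilon^2 \;\geq\; c'\epsilon^{2\eta}
\]
for $\epsilon$ small, while the branches with $j\geq 2$ are separated from $\lambda$ by a positive constant. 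Combining with the Schr\"odinger-Lichnerowicz identity used in the proof of Lemma~\ref{lem.xiborne}, this yields $\|(\mathfrak{d}_0(\xi,\mu)-\lambda)^{-1}\|_{\mathscr{L}(\mathscr{B},\mathscr{A})}\lesssim \epsilon^{-2\eta}$ uniformly on $\mathrm{supp}\,\Xi$, so that $q_0:=\Xi\cdot(\mathfrak{d}_0-\lambda)^{-1}$ defines a symbol in $S_\eta^{-1}(\R^2,\mathscr{L}(\mathscr{B},\mathscr{A}))$.

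Using Proposition~\ref{prop.expansiondepsilon} and the composition Theorem~\ref{thm:comp_thm}, I would next construct inductively correction terms $q_1,\ldots,q_N$ with $q_j\in S_\eta^{-1-j}$ such that
\[
q^{(N)}\#(\mathfrak{d}_\epsilon-\lambda) \;=\; \Xi \;+\; \epsilon^{(N+1)(1-2\eta)}\rho_N, \qquad q^{(N)}:=\sum_{j=0}^N \epsilon^j q_j,
\]
with $\rho_N$ of bounded seminorms in $S_\eta^0(\R^2,\mathscr{L}(\mathscr{A},\mathscr{B}))$. Each $q_j$ is obtained by left-inverting $\mathfrak{d}_0-\lambda$ on $\mathrm{supp}\,\Xi$ against the remainder of the previous step; the gain $\epsilon^{1-2\eta}$ per iteration comes from the Moyal correction in Theorem~\ref{thm:comp_thm} (each $\partial_\xi$ acting on $\Xi_0(\epsilon^{-\eta}\xi)$ costs $\epsilon^{-\eta}$, while each composition bracket brings an $\epsilon$).

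To derive \eqref{eq.microloc}, I would apply $(q^{(N)})^W$ to the eigenvalue equation $(\mathfrak{D}_\epsilon-\lambda)\psi=0$ and obtain
\[
\Xi^W\psi \;=\; -\epsilon^{(N+1)(1-2\eta)}\rho_N^W\psi,
\]
then use Theorem~\ref{thm.CV} to bound $\|\rho_N^W\psi\|\leq C\|\psi\|$; since $\eta<1/2$, taking $N$ large enough yields the prescribed $\mathscr{O}(\epsilon^N)$. For \eqref{eq.controlH2}, I would rerun the very same construction with the weighted cut-off $\langle\xi\rangle^k\Xi\in S_\eta^k$, whose support coincides with $\mathrm{supp}\,\Xi$ and whose associated leading parametrix $\langle\xi\rangle^k\Xi\cdot(\mathfrak{d}_0-\lambda)^{-1}$ lies in $S_\eta^{k-1}$; this yields $\|(\langle\xi\rangle^k\Xi)^W\psi\|=\mathscr{O}(\epsilon^N)\|\psi\|$, and the composition Theorem~\ref{thm:comp_thm} together with Lemma~\ref{lem.xiborne} handles the discrepancy with $(\langle\xi\rangle^k)^W\Xi^W\psi$.

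The main obstacle I expect is the careful bookkeeping of the $\epsilon^{-\eta}$ losses throughout the iteration: the resolvent $(\mathfrak{d}_0-\lambda)^{-1}$ contributes $\epsilon^{-2\eta}$ on $\mathrm{supp}\,\Xi$, and each $\xi$-differentiation of the rapidly varying cut-off contributes $\epsilon^{-\eta}$, so one must check that the $\epsilon^{1-2\eta}$ gain of every successive term in the Moyal expansion genuinely wins the race. The assumption $\eta\in(0,\tfrac{1}{2})$ is precisely what closes the induction on the symbol classes $S_\eta^{-1-j}$ and ensures a strictly positive gain per step; verifying the uniform bounds on $\rho_N$ in the relevant $S_\eta^m$-seminorms is the technical heart of the argument.
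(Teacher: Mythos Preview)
Your approach---build a microlocal elliptic parametrix for $\mathfrak{d}_\epsilon-\lambda$ on $\mathrm{supp}\,\Xi$ and apply it to the eigenvalue equation---is a legitimate alternative to the paper's proof, which instead runs a commutator-plus-nested-cutoff bootstrap. Both are standard microlocal devices. But there is a genuine gap in your bookkeeping that prevents your argument, as stated, from covering the full range $\eta\in(0,\tfrac12)$.

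You claim $q_0=\Xi\cdot(\mathfrak{d}_0-\lambda)^{-1}\in S_\eta^{-1}$. This is not correct. From the resolvent identity $\partial_\xi(\mathfrak{d}_0-\lambda)^{-1}=-(\mathfrak{d}_0-\lambda)^{-1}\sigma_1(\mathfrak{d}_0-\lambda)^{-1}$, each $\xi$-derivative costs an extra resolvent factor, hence an extra $\epsilon^{-2\eta}$ near the edge of $\mathrm{supp}\,\Xi$ where $|\xi|\simeq\epsilon^\eta$. Thus the resolvent lies in $\epsilon^{-2\eta}S_{2\eta}^{-1}$, not in $S_\eta^{-1}$ (the paper records exactly this in Lemma~\ref{lem.A1}). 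In your last paragraph you attribute the $\epsilon^{-\eta}$ losses solely to differentiating the cutoff; the resolvent derivatives are in fact the dominant loss. Working in $S_{2\eta}$, each Moyal step in Theorem~\ref{thm:comp_thm} gains only $\epsilon^{1-4\eta}$, so your iteration closes only for $\eta<\tfrac14$.

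The paper sidesteps this by never composing with the resolvent at the symbol level. It replaces $\mathfrak{d}_0$ by a globally invertible $\mathfrak{d}_0^\pm$ (substituting for $\xi$ a regularized $\Xi_{1,\pm,\epsilon}$ coinciding with $\xi$ on $\mathrm{supp}\,\Xi$), applies $(\mathfrak{D}_0^\pm-\lambda)^{-1}$ only as an operator (Fourier multiplier of norm $\lesssim\epsilon^{-2\eta}$), and then iterates via the commutator $[\Xi_\pm^W,(\sigma_1\xi r_1^\epsilon)^W]$. This commutator is $\mathscr{O}(\epsilon)$ in $S_\eta^1$ and, crucially, is supported in $\mathrm{supp}\,\Xi'$, so one can insert a new cutoff $\widetilde\Xi_\pm$ (equal to $1$ on $\mathrm{supp}\,\Xi_\pm$, still vanishing near $0$) and bootstrap: each step yields $\|\Xi_\pm^W\psi\|\lesssim\epsilon^{2(1-\eta)}\|\widetilde\Xi_\pm^W\psi\|+\epsilon^M\|\psi\|$, and the gain $\epsilon^{2(1-\eta)}$ is positive for all $\eta<1$. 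This nested-cutoff trick is the ingredient you are missing for $\eta\in[\tfrac14,\tfrac12)$.
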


\begin{proof}
Let us fix $N \in \mathbb{N}$. Without loss of generality, any smooth bounded function that vanishes near $0$ can be replaced by a smooth cutoff function $\Xi_0\colon \R \to [0,1]$ such that for $\xi\in\R$,
\begin{equation}\label{def.Xi}
	\Xi_0(\xi) =
	\begin{cases}
		0 & \text{if } |\xi| \leq R, \\
		1 & \text{if } |\xi| \geq 2R,
	\end{cases}
\end{equation}
for some constant $R > 0$.
We set 
\begin{equation}\label{def.truc00}
	\Xi_\pm = \Xi \mathds{1}_{\pm \xi > 0}\,,
\end{equation}
 and aim to prove that
\[
\|\Xi^W_\pm \psi\| = \mathscr{O}(\epsilon^N) \|\psi\|\,,
\]
which implies the desired result.
The proof is divided into several parts.
\begin{enumerate}[label=\it Step \arabic*.,  leftmargin=0cm, itemindent = 1.5cm]
\item
We have 
\begin{equation}
	\begin{split}
	\big(\mathfrak{D}_0-\lambda\big) \Xi^W_\pm \psi 
	&=
	 \Xi^W_\pm\big(\mathfrak{D}_0-\lambda\big) \psi 
	 =
	  \Xi^W_\pm\big(\mathfrak{D}_\epsilon-\lambda\big) \psi - \epsilon \Xi^W_\pm\left(\xi\sigma_1r_1^\epsilon\right)^W\psi
	  \\&=
	  - \epsilon \Xi^W_\pm\left(\xi\sigma_1r_1^\epsilon\right)^W\psi
	  \,,
	\label{eqn:eqvpxi}
	\end{split}
\end{equation}
with $r_1^\epsilon\colon(s,t) \mapsto \frac{t\kappa}{1-\epsilon t \kappa}$.
Let $\Xi_{1,+}$ be an even smooth function satisfying
\[
	\Xi_{1,+} (\xi)= \left\{	\begin{array}{lcl}
						\xi & \text{if} &   \xi > R\\
						 \frac{R}2 & \text{if} & 0\leq  \xi < \frac{R}2
					\end{array}\right.\,,\quad \Xi_{1,+} \geq \frac{R}2\,,
\]
and define its opposite by $\Xi_{1,-}=-\Xi_{1,+}$ where $R$ appears in \eqref{def.Xi}.
We define
\begin{equation}\label{eq.regu111}
\Xi_{1,\pm,\epsilon}(\xi) = \epsilon^{\eta}\Xi_{1,\pm}(\epsilon^{-\eta}\xi)\,,
\end{equation}
\begin{figure}[htbp]
    \centering
    \includegraphics[width=0.8\textwidth]{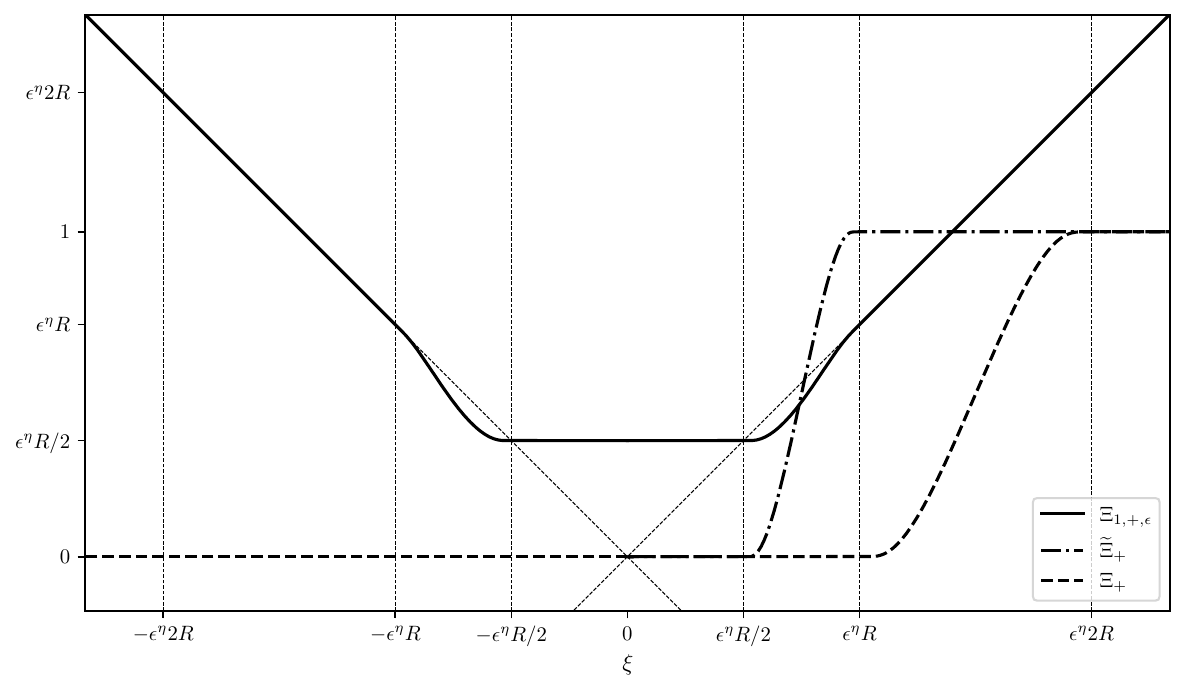}
    \caption{Cutoffs.}
    \label{fig:cutoffs}
\end{figure}
so that $\Xi_{1,\pm,\epsilon} \Xi_\pm = \xi \Xi_\pm$ (see Figure \ref{fig:cutoffs}).
We consider the modified symbol and the associated operator
\begin{equation}\label{eq.d0pm}
	{\mathfrak{d}_0^\pm} := \Xi_{1,\pm,\epsilon}(\xi) \sigma_1 + \sigma_2 D_t + \mu \sigma_3,\quad {\mathfrak{D}_0^\pm} := ({\mathfrak{d}_0^\pm})^W.
\end{equation}
Since the operators are Fourier multipliers, there holds
\[
	\big(\mathfrak{D}_0-\lambda\big) \Xi^W_\pm \psi = \big({\mathfrak{D}_0^\pm}-\lambda\big) \Xi^W_\pm \psi\,,
\]
and \eqref{eqn:eqvpxi} becomes
\begin{equation*}
	\begin{split}
	\big(\mathfrak{D}^\pm_0-\lambda\big) \Xi^W_\pm \psi 
	=
	  - \epsilon \Xi^W_\pm\left(\xi\sigma_1r_1^\epsilon\right)^W\psi
	  \,.
	\end{split}
\end{equation*}
By Lemma \ref{lem.A1}, we obtain
\begin{equation}\label{eq.Xipminverse}
	\Xi_\pm^W \psi = - \epsilon\big({\mathfrak{D}_0^\pm}-\lambda\big)^{-1}\Xi_\pm^W (\sigma_1 \xi r_1^\epsilon)^W\psi\,.
\end{equation}

\item 
Let us focus on the symbol class of the operators involved in \eqref{eq.Xipminverse}. By Lemma \ref{lem.A1}, we have
\begin{equation}\label{eq.classS12}\begin{split}
	&\big({\mathfrak{d}_0^\pm}-\lambda\big)^{-1}\in \epsilon^{-2\eta}S_{2\eta}^{-1}(\R^2,\mathscr{L}(\mathscr{B}))\,,
	\\&
	\Xi_\pm \in S_\eta^0(\R^2,\mathscr{L}(\mathscr{B}))\subset S_{2\eta}^0(\R^2,\mathscr{L}(\mathscr{B}))\,,
	\\&
	(\sigma_1\xi r_1^\epsilon) \in S_0^1(\R^2,\mathscr{L}(\mathscr{B}))\subset S_{2\eta}^1(\R^2,\mathscr{L}(\mathscr{B}))\,.
\end{split}\end{equation} 
Therefore, by the Calderón-Vaillancourt theorem (Theorem \ref{thm.CV}) and \eqref{eq.Xipminverse}, we obtain the estimate
\[
\|\Xi_\pm^W \psi\| \leq C \epsilon^{1-2\eta}\|\psi\|\,,
\]
which is not as strong as the one we expect. To improve this result, it is natural to consider commutators. By \eqref{eq.Xipminverse}, we get
\begin{equation}\label{eq.Xipminverse2}
	\Xi_\pm^W \psi 
	= - \epsilon\big({\mathfrak{D}_0^\pm}-\lambda\big)^{-1} (\sigma_1 \xi r_1^\epsilon)^W\Xi_\pm^W\psi
	-
	\epsilon\big({\mathfrak{D}_0^\pm}-\lambda\big)^{-1}[\Xi_\pm^W, (\sigma_1 \xi r_1^\epsilon)^W]\psi
	\,.
\end{equation}
By the Calderón-Vaillancourt theorem and \eqref{eq.classS12}, we have
\[
	\|- \epsilon\big({\mathfrak{D}_0^\pm}-\lambda\big)^{-1} (\sigma_1 \xi r_1^\epsilon)^W\Xi_\pm^W\psi\|\leq C\epsilon^{1-2\eta}\|\Xi_\pm^W\psi\|\,,
\]
so that \eqref{eq.Xipminverse2} and the triangle inequality imply, for $\epsilon$ small enough:
\begin{equation}\label{eq.Xipminverse3}
	\|\Xi_\pm^W \psi \|
	\leq C\|
	\epsilon\big({\mathfrak{D}_0^\pm}-\lambda\big)^{-1}[\Xi_\pm^W, (\sigma_1 \xi r_1^\epsilon)^W]\psi\|
	\,.
\end{equation}
\item

%
Let us study the commutator that appears in \eqref{eq.Xipminverse3}. By Theorem \ref{thm:comp_thm} with $a := \Xi_\pm$ and $b:= (\sigma_1 \xi r_1^\epsilon)$, the symbol of $[\Xi_\pm^W, (\sigma_1 \xi r_1^\epsilon)^W]$ satisfies for any $M \in \N^*$,
\begin{align*}
	&\Xi_\pm\sharp(\sigma_1 \xi r_1^\epsilon) -(\sigma_1 \xi r_1^\epsilon)\sharp\Xi_\pm  
	\\&=  \sum_{k=1}^{M}\frac{1}{k ! }\left(\frac{i\epsilon}2\right)^k \sigma(D_X;D_Y)^k(a(X)b(Y) - a(Y)b(X))|_{X = Y} + \epsilon^{(M+1)(1-2\eta)}R_\epsilon\\
		& =: \epsilon (\widehat{r}_1) +  \epsilon^{(M+1)(1-2\eta)}R_\epsilon\,,
\end{align*}
with $\widehat{r}_1, R_\epsilon \in S^{1}_{\eta}(\R^2,\mathscr{L}(\mathscr{B}))\subset S^{1}_{2\eta}(\R^2,\mathscr{L}(\mathscr{B}))$. 
The Calderón-Vaillancourt theorem and \eqref{eq.classS12} yield the following estimates 
\begin{equation*}\label{eq.term21}\begin{split}
&\|\epsilon\big({\mathfrak{D}_0^\pm}-\lambda\big)^{-1}\epsilon^{(M+1)(1-2\eta)}R_\epsilon^W\psi\|
\leq C\epsilon^{(M+2)(1-2\eta)}\|\psi\|
\\&
\|\epsilon\big({\mathfrak{D}_0^\pm}-\lambda\big)^{-1}\epsilon (\widehat{r}_1)^W\psi\|\leq C\epsilon^{2(1-\eta)}\|\psi\|\,.
\end{split}\end{equation*}
Therefore, choosing $M$ as $M\geq 2\eta (1-2\eta)^{-1}$ there holds $(M+2)(1-2\eta) \geq 2(1-\eta)$ and for $\epsilon$ small enough, by the triangle inequality and \eqref{eq.Xipminverse3}, we obtain the estimate
\begin{equation}\label{eq.step12}
\begin{split}
\|\Xi_\pm^W \psi \| 
&
\leq
\|\epsilon^2\big({\mathfrak{D}_0^\pm}-\lambda\big)^{-1} (\widehat{r}_1)^W\psi\|
+
C\epsilon^{(M+2)(1-2\eta)}\|\psi\|
\\&
\leq C \epsilon^{2(1-\eta)} \|\psi\|,
\end{split}
\end{equation}
which, again, is not as strong as the one we expect.%
\item
To improve this result, we deepen the study of the term $\epsilon^2\big({\mathfrak{D}_0^\pm}-\lambda\big)^{-1} (\widehat{r}_1)^W\psi$.
For $k \geq 1$, writing $X = (s,\xi)$ and $Y = (\varsigma,p)$, one has
\begin{align*}
	\sigma(D_X;D_Y)^k(a(X)b(Y)) & = (-1)^k(\partial_\xi \partial_\varsigma - \partial_s\partial_p)^k(a(\xi)b(\varsigma,p))\\
	& = (-1)^k\left(\sum_{q=0}^k \begin{pmatrix}k\\q\end{pmatrix} (-1)^q\partial_\xi^{k-q}\partial_\varsigma^{k-q} \partial_s^q\partial_p^q\right)(a(\xi)b(\varsigma,p))\\
	& =(-1)^k\Xi_\pm^{(k)}(\xi)(\partial_s^k b)(\varsigma,p)\,,
	\end{align*}
	and similarily,
\[
	\sigma(D_X;D_Y)^k(a(Y)b(X))  = \Xi_\pm^{(k)}(p)(\partial_s^k b)(s,\xi)\,,
\]	
	where we have used that $\partial_s a = 0$. Therefore, 
	\[
	\sigma(D_X;D_Y)^k(a(X)b(Y) - a(Y)b(X))|_{X = Y} = (\partial_s^k b)(s,\xi)\Xi_\pm^{(k)}(\xi)\left((-1)^k - 1\right)\,,
	\]
	and $\widehat{r}_1$ is supported in the support of $\Xi_\pm'$.

%
%
\item
We introduce $\widetilde{\Xi}_\pm$ which equals $1$ on the support of $\Xi_\pm$ and zero near zero (see Figure \ref{fig:cutoffs}).  Since ${\rm supp}(1- \widetilde{\Xi}_\pm)\cap {\rm supp}(\widehat{r}_1) = \emptyset$, 
\[\begin{split}
	&\widetilde{\Xi}_\pm\in S^0_\eta(\R^2,\mathscr{L}(\mathscr{B}))\subset S^0_{2\eta}(\R^2,\mathscr{L}(\mathscr{B}))\,,
	\\&\widehat{r}_1\in S^{1}_{\eta}(\R^2,\mathscr{L}(\mathscr{B}))\subset S^1_{2\eta}(\R^2,\mathscr{L}(\mathscr{B}))\,,
\end{split}\] 
Theorems \ref{thm:comp_thm}, \ref{thm.CV} and \eqref{eq.classS12} imply that there exists $\widehat{r}_2 \in S^1_{2\eta}(\R^2,\mathscr{L}(\mathscr{B}))$ such that $(\widehat{r}_1)^W (1-\widetilde{\Xi}_\pm)^W = \epsilon^{(M+1)(1-2\eta)}(\widehat{r}_2)^W$ and 
\begin{equation*}
\begin{split}
& \|\epsilon^2\big({\mathfrak{D}_0^\pm}-\lambda\big)^{-1}\widehat{r}_1^W (1- \widetilde{\Xi}_\pm)^W\psi\|
\leq\epsilon^{1+(M+2)(1-2\eta)} \|\psi\|\,,
\\
& \|\epsilon^2\big({\mathfrak{D}_0^\pm}-\lambda\big)^{-1}\widehat{r}_1^W \widetilde{\Xi}_\pm^W\psi\| 
\leq \epsilon^{2(1-\eta)}\| \widetilde{\Xi}_\pm^W\psi\|\,.
\end{split}
\end{equation*}
By  \eqref{eq.step12}, the triangle inequality leads to
\begin{equation}\label{eq.36}
\begin{split}
\|\Xi_\pm^W \psi \| 
&\leq
\|\epsilon^2\big({\mathfrak{D}_0^\pm}-\lambda\big)^{-1} (\widehat{r}_1)^W \widetilde{\Xi}_\pm^W\psi\|
+
C\epsilon^{(M+2)(1-2\eta)}\|\psi\|
\\&
\leq 
\epsilon^{2(1-\eta)}\| \widetilde{\Xi}_\pm^W\psi\|
+
C\epsilon^{(M+2)(1-2\eta)}\|\psi\|\,.
\end{split}\end{equation}

\item We prove by induction on $k \geq 1$ that for any bounded smooth function that vanishes near $0$, there exists a positive constant $C > 0$ such that
\[
\|\Xi_\pm^W \psi \| \leq C \epsilon^{2k(1-\eta)} \|\psi\|.
\]
By \eqref{eq.step12}, this holds for $k = 1$. Assume it is true for some $k \geq 1$. Then, by choosing $M$ in \eqref{eq.36} such that $(M+2)(1-2\eta) \geq 2(k+1)(1-\eta)$, and noting that by the induction hypothesis,
\[
\epsilon^{2(1-\eta)} \| \widetilde{\Xi}_\pm^W \psi \| \leq \epsilon^{2(k+1)(1-\eta)} \|\psi\|,
\]
we obtain the result for $k + 1$. Taking $k$ such that $2(k+1)(1-\eta)>N$, we obtain \eqref{eq.microloc}.
\item
Let $k,N\geq 0$. Replacing in the preceding expressions $(\sigma_1\xi r_1^\epsilon)^W$ by $\langle\xi\rangle^k(\sigma_1\xi r_1^\epsilon)^W\langle\xi\rangle^{-k}$ whose symbol belongs (by the Calder\'on-Vaillancourt theorem) to $S^1_0(\R^2,\mathscr{L}(\mathscr{B}))$, we obtain, up to minor modifications, that there exists a positive constant $C$ such that
\[
\|\langle\xi\rangle^k\Xi_\pm^W \psi \| \leq C \epsilon^{N} \|\langle\xi\rangle^k\psi\|.
\]
Lemma \ref{lem.xiborne} ensures then that \eqref{eq.controlH2} holds.
\end{enumerate}
\end{proof}

\subsection{A parametrix}\label{sec.parametrix}
Let us introduce some notations.
\begin{notation}\label{not.proj1}
Let \(\xi, \mu \in \mathbb{R}\), \(z \in \mathbb{C}\) and $\epsilon>0$.
\begin{enumerate}
\item The operators \(\Pi_{\xi,\mu}\) and \(\Pi_{\xi,\mu}^*\) are defined by
\[
\begin{array}{llll}
	\Pi_{\xi,\mu} \colon & \mathscr{B} & \longrightarrow & \mathbb{C} \\
	& \psi & \longmapsto & \langle \varphi_{\xi,\mu,1}, \psi \rangle_{L^2(I)},
\end{array}
\quad
\text{ and }
\quad
\begin{array}{llll}
	\Pi_{\xi,\mu}^* \colon & \mathbb{C} & \longrightarrow & \mathscr{A} \\
	& c & \longmapsto & c \varphi_{\xi,\mu,1}.
\end{array}
\]
\item The operators \(\mathfrak{P}\) and \(\mathfrak{P}^*\) are defined by \(\mathfrak{P} = (\Pi_{\xi,\mu})^W\) and \(\mathfrak{P}^* = (\Pi_{\xi,\mu}^*)^W\).
 \item The operators \(\mathscr{P}_j, \mathscr{P}_\epsilon \colon \mathscr{A} \times \mathbb{C} \longrightarrow \mathscr{B} \times \mathbb{C}\) are defined, for \(j \geq 0\), by
\[
\mathscr{P}_0 = \begin{pmatrix}
\mathfrak{d}_0 - z & \Pi^*_{\xi,\mu} \\
\Pi_{\xi,\mu} & 0
\end{pmatrix},
\quad
\mathscr{P}_j = \begin{pmatrix}
    \mathfrak{d}_j & 0 \\
    0 & 0
\end{pmatrix},
\]
and
\[
\mathscr{P}_\epsilon = \begin{pmatrix}
\mathfrak{d}_\epsilon - z & \Pi^*_{\xi,\mu} \\
\Pi_{\xi,\mu} & 0
\end{pmatrix}
=  \sum_{j \geq 0} \epsilon^j \mathscr{P}_j.
\]
\item The operators \(\mathfrak{M}_\epsilon\) are defined by
\[
\mathfrak{M}_\epsilon = \left(\mathscr{P}_\epsilon\right)^W=\begin{pmatrix}
\mathfrak{D}_\epsilon - z & \mathfrak{P}^* \\
\mathfrak{P} & 0
\end{pmatrix}.
\]
 \end{enumerate}
The function \(\varphi_{\xi,\mu,1}\) is defined in Proposition \ref{prop.straightstrip}, the operators \(\mathfrak{d}_j\) and \(\mathfrak{D}_\epsilon\) are given in Proposition \ref{prop.expansiondepsilon}, and $\mathscr{A} $and $\mathscr{B}$ are defined in \eqref{eq.AB}.
\end{notation}

The purpose of this section is to construct a parametrix for the operator \(\mathfrak{M}_\epsilon\), \emph{i.e.}, an operator \(\mathscr{Q}_\epsilon\) such that \(\mathscr{Q}_\epsilon = \mathscr{Q}_0 + \epsilon \mathscr{Q}_1 + \epsilon^2 \mathscr{Q}_2\) and
\begin{equation}\label{eq.param}
\mathscr{Q}_\epsilon^W \begin{pmatrix}
\mathfrak{D}_\epsilon - z & \mathfrak{P}^* \\
\mathfrak{P} & 0
\end{pmatrix} = \mathrm{Id} + \epsilon^3 \mathfrak{R}_\epsilon,
\end{equation}
where the remainder \(\mathfrak{R}_\epsilon\) is a pseudo-differential operator whose symbol class is adequate for the problem at aim.

In the following lemma, we gather results concerning the symbol classes of the symbols introduced in Notation \ref{not.proj1}, from Propositions \ref{prop.straightstrip}, \ref{prop.expansiondepsilon}, and Lemma \ref{lem.A3}, as well as their implications for the associated pseudodifferential operators when applying Theorem \ref{thm.CV}.

\begin{lemma}\label{lem.symbolclass}
    We have
    \begin{enumerate}[label = \rm (\roman*)]
    	\item \(\nu_1(\cdot,\mu)\in  S^1(\mathbb{R}^2, \mathscr{L}(\C, \C))\), $\mathfrak{d}_j\in S^1(\R^2,\mathscr{L}(\mathscr{A},\mathscr{B}))$, for $j\geq 0$,
        \item \(\Pi_{\xi,\mu} \in S^0(\mathbb{R}^2, \mathscr{L}(\mathscr{B}, \mathbb{C}))\) and \(\Pi_{\xi,\mu}^* \in S^0(\mathbb{R}^2, \mathscr{L}(\mathbb{C}, \mathscr{A}))\),
        \item \(\mathfrak{P} \in \mathscr{L}(L^2(\mathbb{R}, \mathscr{B}), L^2(\mathbb{R}, \mathbb{C}))\) and \(\mathfrak{P}^* \in \mathscr{L}(L^2(\mathbb{R}, \mathbb{C}), L^2(\mathbb{R}, \mathscr{A}))\),
        \item \(\mathscr{P}_j, \mathscr{P}_\epsilon \in S^1\left(\mathbb{R}^2, \mathscr{L}(\mathscr{A} \times \mathbb{C}, \mathscr{B} \times \mathbb{C})\right)\), for $j\geq 0$.
    \end{enumerate}
\end{lemma}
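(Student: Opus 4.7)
The plan is to verify each of the four items essentially by direct computation from the explicit formulas gathered in Propositions \ref{prop.straightstrip} and \ref{prop.expansiondepsilon} (and the auxiliary Lemma \ref{lem.A3}); items (iii) and (iv) then follow as bookkeeping consequences of (i) and (ii). Throughout, the pivotal simplifying observation is that the principal symbol $\mathfrak{d}_0$ and the spectral projector $\Pi_{\xi,\mu}$ are \emph{independent of $s$}, so that all $\partial_s$-derivatives of these pieces vanish identically; only the $s$-dependence entering through $\kappa(s)$ in $\mathfrak{d}_j$ for $j \geq 1$ requires estimation, and $\kappa$ is smooth and bounded together with all its derivatives.

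For (i), I would start with $\mathfrak{d}_0 = \xi\sigma_1 + \sigma_2 D_t + \mu\sigma_3$: its $\mathscr{L}(\mathscr{A},\mathscr{B})$-norm is bounded by $C\langle\xi\rangle$ by the triangle inequality (using $\|D_t u\|_{L^2} \leq \|u\|_{H^1}$), its $\xi$-derivative is the constant $\sigma_1$, and higher derivatives vanish. For $\mathfrak{d}_j = (t\kappa(s))^j\xi\sigma_1$ with $j \geq 1$, the estimate $\|\mathfrak{d}_j\|_{\mathscr{L}(\mathscr{A},\mathscr{B})} \leq C|\xi|$ follows from $|t| \leq 1$ and uniform boundedness of $\kappa$; each $\partial_s^\alpha$-derivative merely pulls in factors $\kappa^{(k)}$ (still bounded), while $\partial_\xi$ removes the factor $\xi$. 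For $\nu_1(\xi,\mu) = \sqrt{\xi^2 + \nu_1(0,\mu)^2}$, the bound $\nu_1 \leq |\xi| + \nu_1(0,\mu) \lesssim \langle\xi\rangle$ is immediate, and a direct calculation shows that every $\partial_\xi^\beta \nu_1$ is bounded (in fact decaying in $|\xi|$ for $\beta \geq 2$).

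For (ii), I would use the explicit formula $\varphi_{\xi,\mu,1} = c_\xi(I_2 + f(\xi)\sigma_1)\varphi_{0,\mu,1}$ with $f(\xi) = \xi/(\sqrt{\xi^2 + \nu_1(0,\mu)^2} + \nu_1(0,\mu))$ and $c_\xi$ as given in Proposition \ref{prop.straightstrip}. The key fact is that $c_\xi$ and $f$ are smooth, bounded scalar functions of $\xi$ whose iterated derivatives are also bounded (a routine exercise, since $\sqrt{\xi^2 + a^2}$ avoids vanishing denominators for $a > 0$). Since $\varphi_{0,\mu,1}$ is a fixed element of $\mathscr{A}$ independent of $\xi$, the map $\xi \mapsto \varphi_{\xi,\mu,1}$ is smooth into $\mathscr{A}$ with all iterated derivatives uniformly bounded in $\xi$. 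One then reads off $\|\Pi_{\xi,\mu}\|_{\mathscr{L}(\mathscr{B},\C)} = \|\varphi_{\xi,\mu,1}\|_{L^2} = 1$ and $\|\Pi_{\xi,\mu}^*\|_{\mathscr{L}(\C,\mathscr{A})} = \|\varphi_{\xi,\mu,1}\|_{H^1}$, both uniformly bounded in $\xi$, together with analogous bounds for all derivatives. Since nothing depends on $s$, the $\partial_s$-derivatives vanish, yielding the $S^0$ membership.

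Once (i) and (ii) are in hand, (iii) is immediate: $S^0$ symbols with values in $\mathscr{L}(\mathscr{B},\C)$ or $\mathscr{L}(\C,\mathscr{A})$ yield bounded operators between the corresponding $L^2$-spaces via the Calder\'on--Vaillancourt theorem (Theorem \ref{thm.CV}). Item (iv) is an assembly step: the matrix symbol $\mathscr{P}_\epsilon$ has diagonal entry $\mathfrak{d}_\epsilon - z \in S^1(\mathbb{R}^2,\mathscr{L}(\mathscr{A},\mathscr{B}))$ (Proposition \ref{prop.expansiondepsilon}), off-diagonal entries in $S^0 \subset S^1$, and a zero block, so the operator-matrix seminorms are controlled by the maximum of the component $S^1$-seminorms. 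The only point requiring care is the induction-on-order verification in (ii) that every $\partial_\xi^\beta \varphi_{\xi,\mu,1}$ remains $H^1$-bounded; but once one writes this derivative as a finite linear combination of $\varphi_{0,\mu,1}$ and $\sigma_1\varphi_{0,\mu,1}$ with scalar coefficients built from derivatives of $c_\xi$ and $f$, the claim reduces to the scalar smoothness already noted.
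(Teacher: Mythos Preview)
Your proposal is correct and follows the same approach as the paper, which presents this lemma as a direct gathering of results from Propositions \ref{prop.straightstrip}, \ref{prop.expansiondepsilon}, and Lemma \ref{lem.A3}, together with an application of the Calder\'on--Vaillancourt theorem (Theorem \ref{thm.CV}). Your write-up is in fact more detailed than the paper's, which simply cites these ingredients without spelling out the verifications; the core observations you make (that $\mathfrak{d}_0$ and $\Pi_{\xi,\mu}$ are $s$-independent, that $c_\xi$ and $f(\xi)$ are bounded smooth scalars, and that $\varphi_{0,\mu,1}\in\mathscr{A}$ is a fixed vector) are exactly the content of the proof of Lemma \ref{lem.A3}.
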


\subsubsection{Leading order term}
The initial step involves constructing an inverse for the symbol $\mathscr{P}_0$ when $\epsilon = 0$. This can be achieved by utilizing Lemma \ref{lem.A4} and performing a straightforward computation, leading to the following lemma.



\begin{lemma}\label{lem.parametrix}
There exist $\delta, \mu_0 > 0$ such that for all $z \in \left(0, \frac{\pi}{4} + \delta\right)$, all $\mu \in (-\mu_0, \mu_0)$, and all $\xi \in \mathbb{R}$, the operator $\mathscr{P}_0 \colon \mathscr{A} \times \mathbb{C} \to \mathscr{B} \times \mathbb{C}$ is bijective and
\[
\mathscr{Q}_0 =\begin{pmatrix}
Q_0&Q_0^+	\\
Q_0^-&Q_0^\pm	
\end{pmatrix}:= (\mathscr{P}_0(z))^{-1} = \begin{pmatrix}
0 \oplus (\mathfrak{d}_0 - z)_\perp^{-1} & \Pi^*_{\xi, \mu} \\
\Pi_{\xi, \mu} & z - \nu_1(\xi, \mu)
\end{pmatrix},
\]
where
\[
(\mathfrak{d}_0 - z)_\perp := (\mathfrak{d}_0 - z)(\mathrm{Id} - \Pi_{\xi, \mu}^* \Pi_{\xi, \mu})
\]
is invertible when viewed as an endomorphism of $\{\varphi_{\xi, \mu, 1}\}^\perp$ and
\[
\begin{array}{lcllcl}
0 \oplus (\mathfrak{d}_0 - z)_\perp^{-1} \colon
& \mathscr{B} = \mathrm{span}\{\varphi_{\xi, \mu, 1}\} & \oplus \{\varphi_{\xi, \mu, 1}\}^\perp & \longrightarrow & \mathrm{span}\{\varphi_{\xi, \mu, 1}\} & \oplus \{\varphi_{\xi, \mu, 1}\}^\perp \\
& c \varphi_{\xi, \mu, 1} & + \psi & \longmapsto & 0 & + (\mathfrak{d}_0 - z)_\perp^{-1} \psi.
\end{array}
\]
Moreover, the symbol $(\mathscr{P}_0(z))^{-1}$ is an element of $S^{1}\left(\mathbb{R}^2, \mathscr{L}(\mathscr{B} \times \mathbb{C}, \mathscr{A} \times \mathbb{C})\right)$ and
 \begin{enumerate}[label = \rm \alph*)]
    	\item $Q_0 \in S^{-1}(\R^2,\mathscr{L}(\mathscr{B},\mathscr{B}))\cap S^{0}(\R^2,\mathscr{L}(\mathscr{B},\mathscr{A}))$,
        \item \(Q_0^- \in S^0(\mathbb{R}^2, \mathscr{L}(\mathscr{B}, \mathbb{C}))\),
        \item \(Q_0^+ \in S^0(\mathbb{R}^2, \mathscr{L}(\mathbb{C}, \mathscr{A}))\),
        \item \(Q_0^\pm \in S^1\left(\mathbb{R}^2, \mathscr{L}(\C)\right)\).
    \end{enumerate}
\end{lemma}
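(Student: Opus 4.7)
The plan is to exhibit the inverse explicitly by a Schur-complement computation and then read off the symbol classes of each block.

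\smallskip

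\noindent\textbf{Step 1: Spectral gap.} First I would determine the $(z,\mu)$-range for which $(\mathfrak{d}_0-z)_\perp$ makes sense. By item \ref{pt.3xi0} of Proposition~\ref{prop.straightstrip}, the spectrum of $\mathfrak{d}_0$ on $\{\varphi_{\xi,\mu,1}\}^\perp$ is $\{-\nu_1(\xi,\mu)\}\cup\{\pm\nu_j(\xi,\mu):j\geq 2\}$, with $\nu_j(\xi,\mu)=\sqrt{\xi^2+\nu_j(0,\mu)^2}$. From items \ref{pt.2xi0}--\ref{pt.3xi0}, $\nu_1(0,0)=\pi/4$ and $\nu_2(0,0)>\pi/4$, with continuous dependence on $\mu$. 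Thus, for $\delta,\mu_0>0$ small enough and $z\in(0,\pi/4+\delta)$, the interval $(0,\pi/4+\delta)$ is separated from $\{-\nu_1(\xi,\mu)\}\cup\{\pm\nu_j(\xi,\mu):j\ge 2\}$ uniformly in $\xi\in\mathbb{R}$, with the gap growing like $\langle\xi\rangle$ for large $|\xi|$. Consequently $(\mathfrak{d}_0-z)_\perp$ is invertible and $\|(\mathfrak{d}_0-z)_\perp^{-1}\|_{\mathscr{L}(\mathscr{B})}\lesssim \langle\xi\rangle^{-1}$.

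\smallskip

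\noindent\textbf{Step 2: Explicit inversion.} Next I would propose $\mathscr{Q}_0$ exactly as in the statement and verify by direct multiplication that $\mathscr{P}_0\mathscr{Q}_0 = \mathrm{Id}$ and $\mathscr{Q}_0\mathscr{P}_0 = \mathrm{Id}$. Using the orthogonal decomposition $\psi = \Pi_{\xi,\mu}^*\Pi_{\xi,\mu}\psi + \psi^\perp$ in $\mathscr{B}$, each of the four block identities reduces to a one-line check using $(\mathfrak{d}_0-z)\varphi_{\xi,\mu,1}=(\nu_1(\xi,\mu)-z)\varphi_{\xi,\mu,1}$, $\Pi_{\xi,\mu}\Pi_{\xi,\mu}^*=\mathrm{Id}_{\mathbb{C}}$, and the fact that both $\mathfrak{d}_0$ and $(\mathfrak{d}_0-z)_\perp^{-1}$ preserve $\{\varphi_{\xi,\mu,1}\}^\perp$ (this is, in essence, the content of Lemma \ref{lem.A4} applied to an isolated simple eigenvalue).

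\smallskip

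\noindent\textbf{Step 3: Symbol classes.} For (d): $\nu_1(\xi,\mu)=\sqrt{\xi^2+\nu_1(0,\mu)^2}$ is smooth on $\mathbb{R}^2$ because $\nu_1(0,\mu)>0$, with $\partial_\xi^k\nu_1=\mathcal{O}(\langle\xi\rangle^{1-k})$, hence $z-\nu_1\in S^1(\mathbb{R}^2,\mathscr{L}(\mathbb{C}))$. For (b)--(c): the explicit formula in item \ref{pt.3xi0} of Proposition~\ref{prop.straightstrip} writes $\varphi_{\xi,\mu,1}$ as a linear combination of $\varphi_{0,\mu,1}$ and $\sigma_1\varphi_{0,\mu,1}$, both in $\mathscr{A}$, with coefficients that are smooth bounded functions of $\xi$ (since $c_\xi\in(0,1]$ and $\xi/(\sqrt{\xi^2+\nu_1(0,\mu)^2}+\nu_1(0,\mu))\in(-1,1)$ with bounded derivatives); this yields $\Pi_{\xi,\mu}^*\in S^0(\mathbb{R}^2,\mathscr{L}(\mathbb{C},\mathscr{A}))$ and, by duality, $\Pi_{\xi,\mu}\in S^0(\mathbb{R}^2,\mathscr{L}(\mathscr{B},\mathbb{C}))$. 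For (a): the $\mathscr{L}(\mathscr{B},\mathscr{B})$ bound from Step 1 gives the $S^{-1}$ estimate on $Q_0$; to upgrade to $S^0$ in $\mathscr{L}(\mathscr{B},\mathscr{A})$ I would apply the supersymmetric identity behind \eqref{eq.SchroLich}: for $u=(\mathfrak{d}_0-z)_\perp^{-1}\psi^\perp$, $\mathfrak{d}_0 u = \psi^\perp + zu$, so $\|D_t u\|\lesssim\|\psi\|$ and thus $\|u\|_{\mathscr{A}}\lesssim\|\psi\|$.

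\smallskip

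\noindent\textbf{Main obstacle.} The delicate point is not the pointwise bound but the control of $\partial_\xi^\alpha\partial_\mu^\beta (\mathfrak{d}_0-z)_\perp^{-1}$ in the symbol classes above, because the spectral projector $\Pi_{\xi,\mu}^*\Pi_{\xi,\mu}$ itself depends on $(\xi,\mu)$. I would handle this with a resolvent-type identity, writing
\[
(\mathfrak{d}_0-z)_\perp^{-1} = (\mathrm{Id}-\Pi_{\xi,\mu}^*\Pi_{\xi,\mu})\bigl(\mathfrak{d}_0-z-(\nu_1-z)\Pi_{\xi,\mu}^*\Pi_{\xi,\mu}\bigr)^{-1},
\]
so that each $\partial_\xi$ produces, by Leibniz, either a factor $\partial_\xi\mathfrak{d}_0=\sigma_1\in S^0$ or a derivative of $\Pi_{\xi,\mu}^*\Pi_{\xi,\mu}\in S^0$, each flanked by two copies of $(\mathfrak{d}_0-z)_\perp^{-1}$ (via the standard derivative-of-inverse formula). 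An induction on $|\alpha|$ then yields the required $\langle\xi\rangle^{-1-|\alpha|}$ decay in $\mathscr{L}(\mathscr{B})$ and the $\langle\xi\rangle^{-|\alpha|}$ decay in $\mathscr{L}(\mathscr{B},\mathscr{A})$, which is exactly what the symbol-class definitions of $S^{-1}$ and $S^0$ demand.
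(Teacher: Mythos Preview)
Your approach matches the paper's essentially step for step: the paper delegates the algebraic inversion to a direct Schur-complement check (your Step~2) and the symbol estimates to its Lemmas~\ref{lem.A3}--\ref{lem.A4}, whose proofs parallel your Steps~1 and~3 (spectral gap giving $\langle\xi\rangle^{-1}$ decay, the $H^1$-upgrade via $\sigma_2D_t=\mathfrak{d}_0-\xi\sigma_1-\mu\sigma_3$, and the projector symbol class read off from the explicit eigenvector formula).

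There is, however, one genuine slip in your ``main obstacle'' paragraph: the operator $\mathfrak{d}_0-z-(\nu_1-z)\Pi_{\xi,\mu}^*\Pi_{\xi,\mu}$ annihilates $\varphi_{\xi,\mu,1}$ (it sends the eigenvalue $\nu_1-z$ to $0$) and is therefore \emph{not} invertible, so the displayed resolvent identity is ill-defined. The fix is easy---replace $-(\nu_1-z)$ by any constant shift, e.g.\ $+1$, that moves the first eigenvalue away from zero uniformly in $\xi$---but the paper avoids the issue altogether: it differentiates directly the two defining identities $\Pi^*\Pi\, Q_0=0$ and $(\mathfrak{d}_0-z)Q_0=\mathrm{Id}-\Pi^*\Pi$ to obtain
\[
\partial_\xi Q_0 = -Q_0\,\sigma_1\, Q_0 - \bigl(\partial_\xi(\Pi^*\Pi)\,Q_0 + Q_0\,\partial_\xi(\Pi^*\Pi)\bigr),
\]
and then inducts. (Incidentally, the symbol classes here are in $(s,\xi)$, not $(\xi,\mu)$; since $\mathfrak{d}_0$ is $s$-independent only $\xi$-derivatives are needed, and the decay required by the definition of $S^{-1}$ is $\langle\xi\rangle^{-1}$ uniformly in $\alpha$, not $\langle\xi\rangle^{-1-|\alpha|}$---though proving more does no harm.)
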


\subsubsection{Subsequent terms}\label{sec.subterms}
For the second step, we seek an expansion \(\mathscr{Q}_\epsilon = \mathscr{Q}_0 + \epsilon \mathscr{Q}_1 + \epsilon^2 \mathscr{Q}_2\) that satisfies \eqref{eq.param}, where $\mathscr{Q}_0$ is defined in Lemma \ref{lem.parametrix}. Let $\mathscr{Q}_j \in S^{1}\left(\mathbb{R}^2, \mathscr{L}(\mathscr{B} \times \mathbb{C}, \mathscr{A} \times \mathbb{C})\right)$ for \(j \in \{1, 2\}\). Initially, the computations are purely formal (see similar computations in \cite[Section 3.1.3]{Keraval}). By Lemma \ref{lem.parametrix}, the leading order term of 
\[
	\left(\mathscr{Q}_\epsilon\right)^W 
	\begin{pmatrix}
		\mathfrak{D}_\epsilon - z & \mathfrak{P}^* \\
		\mathfrak{P} & 0
	\end{pmatrix}
	- \mathrm{Id}
\]
vanishes.
Note that $\mathscr{Q}_0$ and $\mathscr{P}_0$ are Fourier multipliers so that $(\mathscr{Q}_0)^W(\mathscr{P}_0)^W = (\mathscr{Q}_0\mathscr{P}_0)^W = {\rm Id}$.
By Theorem \ref{thm:comp_thm}, the first-order term is given by
\begin{equation}\label{eq.paramT1}
\begin{split}
    (\mathscr{Q}_1)^W (\mathscr{P}_0)^W &+ (\mathscr{Q}_0)^W (\mathscr{P}_1)^W \\
    &= (\mathscr{Q}_1 \mathscr{P}_0 + \mathscr{Q}_0 \mathscr{P}_1)^W 
    + \frac{\epsilon}{2i} \left( \left\{\mathscr{Q}_1, \mathscr{P}_0\right\} + \left\{\mathscr{Q}_0, \mathscr{P}_1\right\} \right)^W 
    + \epsilon^2 r_2,
\end{split}
\end{equation}
where $r_2$ is a remainder term.
To cancel the lower order term, we set
\[
\mathscr{Q}_1 = -\mathscr{Q}_0 \mathscr{P}_1 \mathscr{Q}_0.
\]
By Theorem \ref{thm:comp_thm} and equation \eqref{eq.paramT1}, the second-order term is
\begin{equation}\label{eq.paramT2}
\begin{split}
    &(\mathscr{Q}_2)^W (\mathscr{P}_0)^W + (\mathscr{Q}_1)^W (\mathscr{P}_1)^W + (\mathscr{Q}_0)^W (\mathscr{P}_2)^W 
    + \frac{1}{2i} \left( \left\{\mathscr{Q}_1, \mathscr{P}_0\right\} + \left\{\mathscr{Q}_0, \mathscr{P}_1\right\} \right)^W \\
    &= \left(
    		\mathscr{Q}_2 \mathscr{P}_0 + \mathscr{Q}_1 \mathscr{P}_1 + \mathscr{Q}_0 \mathscr{P}_2
		+
		\frac{1}{2i} \left( \left\{\mathscr{Q}_1, \mathscr{P}_0\right\} + \left\{\mathscr{Q}_0, \mathscr{P}_1\right\} \right)
	\right)^W 
    + \epsilon r_1,
\end{split}
\end{equation}
where $r_1$ is a remainder term. To cancel the lower order term, we set
\[\begin{split}
\mathscr{Q}_2&=-\left(\mathscr{Q}_1 \mathscr{P}_1 + \mathscr{Q}_0 \mathscr{P}_2
		+
		\frac{1}{2i} \left( \left\{\mathscr{Q}_1, \mathscr{P}_0\right\} + \left\{\mathscr{Q}_0, \mathscr{P}_1\right\} \right)
\right)\mathscr{Q}_0
\\&=
\mathscr{Q}_0 \mathscr{P}_1 \mathscr{Q}_0 \mathscr{P}_1\mathscr{Q}_0 
-
\mathscr{Q}_0 \mathscr{P}_2\mathscr{Q}_0 
+
\frac{1}{2i} \left( \left\{\mathscr{Q}_0 \mathscr{P}_1 \mathscr{Q}_0, \mathscr{P}_0\right\} - \left\{\mathscr{Q}_0, \mathscr{P}_1\right\} \right)
\mathscr{Q}_0\,.
%
\end{split}\]
%
For later use, we gather these elements in the following equation:
\begin{equation}\label{eqn:deftermgrushin}
\mathscr{Q}_0 = \mathscr{P}_0^{-1}, \quad \mathscr{Q}_1 = -\mathscr{Q}_0 \mathscr{P}_1 \mathscr{Q}_0, \quad \mathscr{Q}_2 = -\mathscr{Q}_0 \mathscr{P}_2 \mathscr{Q}_0 + \mathscr{Q}_0 \mathscr{P}_1 \mathscr{Q}_0 \mathscr{P}_1 \mathscr{Q}_0 + \frac{1}{2i} \widetilde{\mathscr{Q}}_2.
\end{equation}
with
\[
\widetilde{\mathscr{Q}}_2 = \left( \{\mathscr{Q}_0 \mathscr{P}_1 \mathscr{Q}_0, \mathscr{P}_0\} - \{\mathscr{Q}_0, \mathscr{P}_1\} \right) \mathscr{Q}_0.
\]
%
%
%
%
%
%

Thanks to a straightforward computation using Lemmas \ref{lem.parametrix}, \ref{lem.A3} and \ref{lem.A4} and the facts that $\mathfrak{d}_j\in S^1(\R^2,\mathscr{L}(\mathscr{B}, \mathscr{B}))$ for $j\geq 1$, we obtain the following lemma.

\begin{lemma}\label{lem.explicitQj}
There exist $\delta, \mu_0 > 0$ such that for all $z \in \left(0, \frac{\pi}{4} + \delta\right)$, all $\mu \in (-\mu_0, \mu_0)$, and all $\xi \in \mathbb{R}$, the following holds : 
\begin{enumerate}[label = \rm (\roman*)]
\item We have
\[\mathscr{Q}_1=\begin{pmatrix}
Q_1&Q_1^+	\\
Q_1^-&Q_1^\pm	
\end{pmatrix}:=-\begin{pmatrix}
Q_0\mathfrak{d}_1Q_0&Q_0\mathfrak{d}_1Q_0^+\\
Q_0^-\mathfrak{d}_1Q_0&Q_0^-\mathfrak{d}_1Q_0^+
\end{pmatrix}\,,\]
\[\mathscr{Q}_2=\begin{pmatrix}
Q_2&Q_2^+	\\
Q_2^-&Q_2^\pm	
\end{pmatrix}:=-\begin{pmatrix}
	Q_0\mathfrak{d}_2Q_0&Q_0\mathfrak{d}_2Q_0^+\\
	Q_0^-\mathfrak{d}_2Q_0&Q_0^-\mathfrak{d}_2Q_0^+
\end{pmatrix}+\begin{pmatrix}
Q_0\mathfrak{d}_1Q_0\mathfrak{d}_1Q_0&Q_0\mathfrak{d}_1Q_0\mathfrak{d}_1Q_0^+\\
Q_0^-\mathfrak{d}_1Q_0\mathfrak{d}_1Q_0&Q_0^-\mathfrak{d}_1Q_0\mathfrak{d}_1Q_0^+
\end{pmatrix}+\frac{1}{2i}\widetilde{\mathscr{Q}}_2\,,\]
and, 
\[\begin{split}
\widetilde{\mathscr{Q}}_2&
=\mathscr{Q}_0(\partial_s\mathscr{P}_1)\partial_\xi\mathscr{Q}_0
-\left(\mathscr{Q}_0(\partial_s\mathscr{P}_1)\partial_\xi\mathscr{Q}_0\right)^*
\\
&=\begin{pmatrix}
Q_0\partial_s\mathfrak{d}_1\partial_\xi Q_0&Q_0\partial_s\mathfrak{d}_1\partial_\xi Q_0^+\\
Q_0^{-}\partial_s\mathfrak{d}_1\partial_\xi Q_0&Q_0^-\partial_s\mathfrak{d}_1\partial_\xi Q_0^+
\end{pmatrix}-\begin{pmatrix}
Q_0\partial_s\mathfrak{d}_1\partial_\xi Q_0&Q_0\partial_s\mathfrak{d}_1\partial_\xi Q_0^+\\
Q_0^{-}\partial_s\mathfrak{d}_1\partial_\xi Q_0&Q_0^-\partial_s\mathfrak{d}_1\partial_\xi Q_0^+
\end{pmatrix}^*\,.\end{split}\]
\item The operator symbols $\mathscr{Q}_1$ and $\mathscr{Q}_2$ are elements of $S^1(\mathbb{R}^2,\mathscr{L}(\mathscr{B}\times\C,\mathscr{A}\times\C))$, we have $\widetilde{\mathscr{Q}}_2 = \xi \overline{\mathscr{Q}}_2$ with $\overline{\mathscr{Q}}_2\in S^0(\mathbb{R}^2,\mathscr{L}(\mathscr{B}\times\C,\mathscr{A}\times\C))$, and for $j \in \{1,2\}$, 

 \begin{enumerate}[label = \rm \alph*)]
    	\item $Q_j \in S^{-1}(\R^2,\mathscr{L}(\mathscr{B},\mathscr{B}))\cap 
	S^{0}(\R^2,\mathscr{L}(\mathscr{B},\mathscr{A}))$,
        \item \(Q_j^- 
        \in S^0(\mathbb{R}^2, \mathscr{L}(\mathscr{B}, \mathbb{C}))
        \),
        \item \(Q_j^+ \in S^1(\R^2,\mathscr{L}(\C,\mathscr{A}))\cap S^0(\R^2,\mathscr{L}(\C,\mathscr{B}))
        \),
        \item \(Q_j^\pm \in S^1\left(\mathbb{R}^2, \mathscr{L}(\C)\right)\).
    \end{enumerate}
\end{enumerate}
\end{lemma}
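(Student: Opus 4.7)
The plan is to establish (i) by a direct expansion of the formulas \eqref{eqn:deftermgrushin} in the block form provided by Notation \ref{not.proj1} and Lemma \ref{lem.parametrix}, and then to read off the symbol classes in (ii) from Lemma \ref{lem.parametrix} together with the explicit expressions \eqref{eqn:defdbarj} for $\mathfrak{d}_1,\mathfrak{d}_2$.

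\emph{Step 1: block computations for $\mathscr{Q}_1, \mathscr{Q}_2$.} Since $\mathscr{P}_j = \mathrm{diag}(\mathfrak{d}_j,0)$ for $j \geq 1$, multiplying $\mathscr{Q}_0 \mathscr{P}_j \mathscr{Q}_0$ and $\mathscr{Q}_0\mathscr{P}_1\mathscr{Q}_0\mathscr{P}_1\mathscr{Q}_0$ block-by-block is mechanical: the right-most factor $\mathscr{P}_j\mathscr{Q}_0$ only keeps the top row $(\mathfrak{d}_j Q_0,\ \mathfrak{d}_j Q_0^+)$, and left-multiplication by $\mathscr{Q}_0$ then produces the matrices stated for $\mathscr{Q}_1$ and for the first two summands of $\mathscr{Q}_2$.

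\emph{Step 2: treatment of $\widetilde{\mathscr{Q}}_2$.} The key observation is that $\mathscr{P}_0$ depends only on $\xi$, so that $\partial_s \mathscr{P}_0 = 0$ and, because $\mathscr{Q}_0 = \mathscr{P}_0^{-1}$, also $\partial_s \mathscr{Q}_0 = 0$. The Poisson brackets in \eqref{eqn:deftermgrushin} therefore collapse to
\[
\{\mathscr{Q}_0, \mathscr{P}_1\} = (\partial_\xi \mathscr{Q}_0)(\partial_s \mathscr{P}_1), \qquad \{\mathscr{Q}_0 \mathscr{P}_1 \mathscr{Q}_0, \mathscr{P}_0\} = -\mathscr{Q}_0(\partial_s \mathscr{P}_1)\mathscr{Q}_0(\partial_\xi \mathscr{P}_0).
\]
Differentiating $\mathscr{Q}_0 \mathscr{P}_0 = \mathrm{Id}$ in $\xi$ gives $\mathscr{Q}_0\partial_\xi\mathscr{P}_0 = -(\partial_\xi\mathscr{Q}_0)\mathscr{P}_0$, whence multiplying the second bracket on the right by $\mathscr{Q}_0$ and using $\mathscr{P}_0\mathscr{Q}_0 = \mathrm{Id}$ yields
\[
\widetilde{\mathscr{Q}}_2 = \mathscr{Q}_0(\partial_s\mathscr{P}_1)(\partial_\xi\mathscr{Q}_0) - (\partial_\xi\mathscr{Q}_0)(\partial_s\mathscr{P}_1)\mathscr{Q}_0.
\]
The antisymmetric form $A - A^*$ claimed in the lemma then follows from the self-adjointness of $\mathscr{Q}_0$ (bordered Hermitian matrix at real $z$), hence of $\partial_\xi \mathscr{Q}_0$, together with the self-adjointness of $\partial_s \mathscr{P}_1 = \mathrm{diag}(t\kappa'(s)\xi\sigma_1,0)$. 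The block expansion of $A$ is obtained as in Step~1, replacing $\mathfrak{d}_1$ by $\partial_s \mathfrak{d}_1$ on the left and using $\partial_\xi \mathscr{Q}_0$ on the right.

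\emph{Step 3: symbol classes and the factor of $\xi$.} The symbol-class statements in (ii) are obtained by combining the classes of the blocks of $\mathscr{Q}_0$ from Lemma \ref{lem.parametrix} with $\mathfrak{d}_1,\mathfrak{d}_2 \in S^1(\mathbb{R}^2,\mathscr{L}(\mathscr{B},\mathscr{B}))$, using that composition preserves symbol orders and that $\partial_\xi$ lowers the order by one (so, for instance, $\partial_\xi Q_0^\pm \in S^0$). The factorization $\widetilde{\mathscr{Q}}_2 = \xi \overline{\mathscr{Q}}_2$ with $\overline{\mathscr{Q}}_2 \in S^0$ is immediate because $\mathfrak{d}_1$ and hence $\partial_s \mathfrak{d}_1$ are homogeneous of degree $1$ in $\xi$, so that factoring $\xi$ out of $\partial_s \mathscr{P}_1$ replaces one $S^1$-factor by an $S^0$-factor in each product. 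The only point requiring care is to check, block by block, that the $S^1$-growth coming from $Q_0^\pm \sim -|\xi|$ is compensated either by an adjacent $S^{-1}$-block of $Q_0$ or by the derivative $\partial_\xi$; this is the sole bookkeeping-level obstacle and it is handled uniformly across the four entries of each matrix product.
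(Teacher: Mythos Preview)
Your proposal is correct and follows the same approach as the paper, which simply asserts that the lemma is a ``straightforward computation using Lemmas \ref{lem.parametrix}, \ref{lem.A3} and \ref{lem.A4} and the facts that $\mathfrak{d}_j\in S^1(\R^2,\mathscr{L}(\mathscr{B},\mathscr{B}))$ for $j\geq 1$''. You have in fact supplied more detail than the paper does: the reduction of the Poisson brackets via $\partial_s\mathscr{P}_0=0$ and the identity $\mathscr{Q}_0\partial_\xi\mathscr{P}_0=-(\partial_\xi\mathscr{Q}_0)\mathscr{P}_0$ are exactly what is needed to reach the stated antisymmetric form of $\widetilde{\mathscr{Q}}_2$, and your block-by-block order counting for (ii) is the intended mechanism. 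One small cosmetic point: in your last sentence the block $Q_0^\pm$ never actually enters the products defining $\widetilde{\mathscr{Q}}_2$ (the diagonal $\partial_s\mathscr{P}_1$ kills the second row/column before it can contribute), so no compensation is needed there.
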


\subsubsection{Consequences of the parametrix construction}
With Lemma \ref{lem.explicitQj} at hand, we get the  following crucial proposition and corollary.
\begin{proposition}\label{prop.parametrix}
There exist $\delta, \mu_0 > 0$ such that for all $z \in \left(0, \frac{\pi}{4} + \delta\right)$, all $\mu \in (-\mu_0, \mu_0)$, and all $\xi \in \mathbb{R}$, the following holds : 
\begin{enumerate}[label = (\rm \roman*)]
\item \label{pt.1paramId}We have
\[
\mathscr{Q}_\epsilon^W\mathscr{P}_\epsilon^W 
= \mathrm{Id} + \epsilon^3 (\mathscr{R}_\epsilon)^W,
\]
where $\mathscr{P}_\epsilon$ is defined in Notation \ref{not.proj1}, 
$\mathscr{Q}_\epsilon = \mathscr{Q}_0 + \epsilon \mathscr{Q}_1 + \epsilon^2 \mathscr{Q}_2$ with $\mathscr{Q}_0$, $\mathscr{Q}_1$, and $\mathscr{Q}_2$ defined in \eqref{eqn:deftermgrushin}, 
and the remainder $\mathscr{R}_\epsilon$ is an element of $S^2(\mathbb{R}^2, \mathscr{L}(\mathscr{A} \times \mathbb{C}, \mathscr{A} \times \mathbb{C})).$
\item
Let
\begin{equation}\label{eq.neff}
n^{\mathrm{eff}}_\epsilon := \nu_1(\xi, \mu) + \epsilon \kappa(s) \xi \langle t \varphi_{\xi, \mu,1}, \sigma_1 \varphi_{\xi, \mu,1} \rangle,
\end{equation}
and 
\[
\mathscr{Q}_\epsilon =: \begin{pmatrix}
Q & Q^+ \\
Q^- & Q^\pm
\end{pmatrix}.
\]
We have
\begin{equation}\label{eq.Qpm}
Q^{\pm} = z - n^{\mathrm{eff}}_\epsilon + \epsilon^2 \xi r_1(s, \xi) + \epsilon^2 \xi^2 r_2(s, \xi),
\end{equation}
with \(r_1, r_2 \in S^0(\mathbb{R}^2)\).
\end{enumerate}
\end{proposition}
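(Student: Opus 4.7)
The plan is to leverage the algebraic construction carried out in Section \ref{sec.subterms}: by design, the coefficients of $\epsilon^0$, $\epsilon^1$ and $\epsilon^2$ in the symbol of $\mathscr{Q}_\epsilon^W\mathscr{P}_\epsilon^W - \mathrm{Id}$ cancel, so the proof essentially reduces to an application of the composition formula (Theorem \ref{thm:comp_thm}) combined with the symbol-class bookkeeping of Lemmas \ref{lem.symbolclass} and \ref{lem.explicitQj}.

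For item \ref{pt.1paramId}, I first truncate the expansion $\mathscr{P}_\epsilon = \mathscr{P}_0 + \epsilon\mathscr{P}_1 + \epsilon^2\mathscr{P}_2 + \epsilon^3\widetilde{\mathscr{R}}_\epsilon$ provided by Proposition \ref{prop.expansiondepsilon} (the tail being controlled by the geometric expansion of $(1-\epsilon t\kappa)^{-1}$) and then apply Theorem \ref{thm:comp_thm} to each product $\mathscr{Q}_j\sharp\mathscr{P}_k$. The leading term is $\mathscr{Q}_0\mathscr{P}_0 = \mathrm{Id}$ by Lemma \ref{lem.parametrix}; since both $\mathscr{Q}_0$ and $\mathscr{P}_0$ are Fourier multipliers, their composition is exact and no Poisson-bracket corrections arise. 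The first-order coefficient $\mathscr{Q}_1\mathscr{P}_0 + \mathscr{Q}_0\mathscr{P}_1$ cancels by the very choice $\mathscr{Q}_1 = -\mathscr{Q}_0\mathscr{P}_1\mathscr{Q}_0$. The second-order coefficient, including the Poisson-bracket contribution $\tfrac{1}{2i}(\{\mathscr{Q}_1,\mathscr{P}_0\} + \{\mathscr{Q}_0,\mathscr{P}_1\})$, cancels by the definition of $\mathscr{Q}_2$ in \eqref{eqn:deftermgrushin}. Everything else carries a factor $\epsilon^3$, and the fact that the resulting symbol sits in $S^2(\mathbb{R}^2,\mathscr{L}(\mathscr{A}\times\mathbb{C},\mathscr{A}\times\mathbb{C}))$ follows from Lemma \ref{lem.explicitQj} (each $\mathscr{Q}_j\in S^1$) together with the observation that each additional $\mathscr{P}_k$ brings at most one extra power of $\xi$.

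For item (ii), I read the lower-right entry of $\mathscr{Q}_\epsilon$ directly from Lemma \ref{lem.explicitQj}:
\[
Q^\pm = Q_0^\pm + \epsilon Q_1^\pm + \epsilon^2 Q_2^\pm = (z-\nu_1(\xi,\mu)) - \epsilon\, Q_0^-\mathfrak{d}_1 Q_0^+ + \epsilon^2 Q_2^\pm.
\]
A direct computation using $Q_0^+c = c\varphi_{\xi,\mu,1}$, $Q_0^-\psi = \langle\varphi_{\xi,\mu,1},\psi\rangle$ and $\mathfrak{d}_1 = t\kappa(s)\xi\sigma_1$ yields $Q_0^-\mathfrak{d}_1 Q_0^+ = \kappa(s)\xi\langle t\varphi_{\xi,\mu,1},\sigma_1\varphi_{\xi,\mu,1}\rangle$, hence $Q_0^\pm + \epsilon Q_1^\pm = z - n^{\mathrm{eff}}_\epsilon$. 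To identify the structure of $Q_2^\pm$, I inspect the three contributions provided by Lemma \ref{lem.explicitQj}: $-Q_0^-\mathfrak{d}_2 Q_0^+$ is proportional to $\xi$ because $\mathfrak{d}_2 = (t\kappa(s))^2\xi\sigma_1$, $Q_0^-\mathfrak{d}_1 Q_0\mathfrak{d}_1 Q_0^+$ is proportional to $\xi^2$, and the Poisson-bracket contribution $\tfrac{1}{2i}\widetilde{\mathscr{Q}}_2^\pm$ is proportional to $\xi$ thanks to the factorisation $\widetilde{\mathscr{Q}}_2 = \xi\overline{\mathscr{Q}}_2$ with $\overline{\mathscr{Q}}_2\in S^0$ from Lemma \ref{lem.explicitQj}(ii). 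Gathering these contributions produces the announced form \eqref{eq.Qpm}.

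The main obstacle I anticipate lies in keeping precise track of the symbol classes of the $\epsilon^3$ remainder in item \ref{pt.1paramId}: the different blocks of $\mathscr{Q}_j$ and $\mathscr{P}_k$ do not all lie in the same class (compare the refined statements in Lemmas \ref{lem.symbolclass} and \ref{lem.explicitQj}), so the loss of one power of $\xi$ per composition has to be tracked block by block in order to justify that the overall remainder truly lives in $S^2$ rather than in some larger class. Once this bookkeeping is done, the identities (i) and (ii) follow at once from the formulas of Lemma \ref{lem.explicitQj}.
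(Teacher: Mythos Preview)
Your proposal is correct and follows essentially the same approach as the paper: truncate $\mathscr{P}_\epsilon$ at order $\epsilon^3$, invoke the composition formula and the cancellations of Section~\ref{sec.subterms} to get item~\ref{pt.1paramId}, and then read off $Q^\pm$ block by block from Lemma~\ref{lem.explicitQj} for item~(ii). In fact your write-up is more explicit than the paper's own proof, which simply cites Section~\ref{sec.subterms}, Theorem~\ref{thm:comp_thm} and Lemma~\ref{lem.explicitQj} without unwinding the computations; your identification of the three contributions to $Q_2^\pm$ (the $\xi$ term from $\mathfrak{d}_2$, the $\xi^2$ term from $\mathfrak{d}_1 Q_0\mathfrak{d}_1$, and the $\xi$ term from $\widetilde{\mathscr{Q}}_2=\xi\overline{\mathscr{Q}}_2$) is exactly the content the paper leaves implicit.
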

\begin{proof}
By Lemma \ref{lem.symbolclass} and Notation \ref{not.proj1}, we have
\[
\mathscr{P}_\epsilon 
=\mathscr{P}_0+\epsilon\mathscr{P}_1+\epsilon^2\mathscr{P}_2+\epsilon^3R_\epsilon\,,\]
with $R_\epsilon\in S^1(\R^2,\mathscr{L}(\mathscr{A}\times\C,\mathscr{B}\times\C))$.
By the computations performed in Section \ref{sec.subterms}, in particular, equations \eqref{eq.paramT1}, \eqref{eq.paramT2}, \eqref{eqn:deftermgrushin}, Theorem \ref{thm:comp_thm}, and Lemma \ref{lem.explicitQj}, we obtain
%
\[
\mathscr{Q}_\epsilon \sharp \mathscr{P}_\epsilon - {\rm Id}
= \epsilon^3\mathscr{R}_\epsilon,
\]
with $\mathscr{R}_\epsilon\in S^2(\R^2,\mathscr{L}(\mathscr{A}\times\C, \mathscr{A}\times\C))$. 
By using the explicit choices of the $\mathscr{Q}_j$ in \eqref{eqn:deftermgrushin} and the explicit expressions of their coefficients in Lemma \ref{lem.explicitQj}, the conclusion follows.

\end{proof}

\begin{corollary}\label{cor.ineq}
There exist $\delta, \mu_0,C > 0$ such that for all $z \in \left(0, \frac{\pi}{4} + \delta\right)$, and all $\mu \in (-\mu_0, \mu_0)$, the following holds : 
\begin{enumerate}[label = (\rm \roman*)]
\item
For all $\psi\in {\rm dom}(\mathfrak{D}_\epsilon)$ such that $\mathfrak{D}_\epsilon\psi \in {\rm dom}(\mathfrak{D}_\epsilon)$, we have
	\[\begin{split}
	\|\psi\|_{L^2(\R,\mathscr{B})}
	&\leq C\left(
		\|(\mathfrak{D}_\epsilon-z)\psi\|_{L^2(\R,\mathscr{B})}
		+\|\mathfrak{P}\psi\|_{L^2(\R)}
		+\epsilon^3\|(\langle\xi\rangle^2)^W\psi\|_{L^2(\R,\mathscr{A})}
		\right)
	\\
	\|({Q}^\pm)^W\mathfrak{P}\psi\|_{L^2(\R)}
	&\leq C\left(\|(\mathfrak{D}_\epsilon-z)\psi\|_{L^2(\R,\mathscr{B})}+C\epsilon^3\|(\langle\xi\rangle^2)^W\psi\|_{L^2(\R,\mathscr{A})}\right)\,.
	\end{split}\]
\item
For all $f\in H^2(\R,\mathbb{C})$, we have
\begin{equation}\label{eq.rightinverse}
	\|(\mathfrak{D}_\epsilon-z)\mathscr{Q}_+^Wf\|_{L^2(\R,\mathscr{B})}
	\leq C\left(\|({Q}^\pm)^Wf\|_{L^2(\R)}
	+\epsilon^2\|(\xi\mathscr{R})^Wf\|_{L^2(\R,\mathscr{A})}
	+\epsilon^3\|(\langle\xi\rangle^2)^Wf\|_{L^2(\R)}\right)\,,
\end{equation}
with $ \mathscr{R}\in S^0(\R^2, \mathscr{L}(\C, \mathscr{A}))$.
\end{enumerate}
	\end{corollary}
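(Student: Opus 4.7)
The strategy is to unfold the parametrix identity of Proposition \ref{prop.parametrix}(i) in block form, extract scalar identities by projecting on $\binom{\psi}{0}$ (for part (i)) and on $\binom{0}{f}$ (for part (ii)), and then absorb the $O(\epsilon^3)$ remainder via the Calderón--Vaillancourt theorem. The key device throughout is that, whenever a remainder $\mathscr{R}\in S^m$ must be bounded, we factor $\mathscr{R}^W = (\mathscr{R}\sharp\langle\xi\rangle^{-m})^W(\langle\xi\rangle^m)^W$ by Theorem \ref{thm:comp_thm}, with $\mathscr{R}\sharp\langle\xi\rangle^{-m}\in S^0$ bounded by Theorem \ref{thm.CV}.

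For part (i), applying $\mathscr{Q}_\epsilon^W\mathscr{P}_\epsilon^W = \mathrm{Id}+\epsilon^3(\mathscr{R}_\epsilon)^W$ to $\binom{\psi}{0}$ and reading off the two components yields
\[
\psi = Q^W(\mathfrak{D}_\epsilon - z)\psi + (Q^+)^W\mathfrak{P}\psi -\epsilon^3\bigl[(\mathscr{R}_\epsilon)^W\tbinom{\psi}{0}\bigr]_1,
\]
\[
0 = (Q^-)^W(\mathfrak{D}_\epsilon - z)\psi + (Q^\pm)^W\mathfrak{P}\psi -\epsilon^3\bigl[(\mathscr{R}_\epsilon)^W\tbinom{\psi}{0}\bigr]_2.
\]
By Lemmas \ref{lem.parametrix} and \ref{lem.explicitQj}, $Q^W$ and $(Q^+)^W$ (viewed into $\mathscr{B}$) arise from symbols in $S^0$, so they are bounded by Theorem \ref{thm.CV}; the factorization trick applied to $\mathscr{R}_\epsilon\in S^2(\R^2,\mathscr{L}(\mathscr{A}\times\C,\mathscr{A}\times\C))$ bounds the remainder by $C\epsilon^3\|(\langle\xi\rangle^2)^W\psi\|_{L^2(\R,\mathscr{A})}$. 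Taking norms in the two identities above produces both inequalities announced in part (i).

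For part (ii), one establishes the right-sided companion of the parametrix, namely
\[
\mathscr{P}_\epsilon^W\mathscr{Q}_\epsilon^W = \mathrm{Id} + \epsilon^2(\xi\mathscr{R})^W + \epsilon^3(\widetilde{\mathscr{R}}_\epsilon)^W
\]
with $\mathscr{R}\in S^0$ and $\widetilde{\mathscr{R}}_\epsilon\in S^2$, by direct inspection of $\mathscr{P}_\epsilon\sharp\mathscr{Q}_\epsilon$ via Theorem \ref{thm:comp_thm}. The order $\epsilon^0$ term equals $\mathscr{P}_0\mathscr{Q}_0=\mathrm{Id}$; the order $\epsilon^1$ Poisson brackets $\{\mathscr{P}_0,\mathscr{Q}_0\}$ vanish since $\mathscr{P}_0,\mathscr{Q}_0$ are $s$-independent, and the algebraic piece $\mathscr{P}_0\mathscr{Q}_1+\mathscr{P}_1\mathscr{Q}_0$ cancels by the choice \eqref{eqn:deftermgrushin}. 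The residual $\epsilon^2$ contribution reduces, after a similar algebraic cancellation, to the Poisson brackets involving $\partial_s\mathscr{P}_1$ and $\widetilde{\mathscr{Q}}_2$; since $\mathfrak{d}_1 = t\kappa(s)\xi\sigma_1$ is linear in $\xi$, every such remainder factors through an explicit $\xi$, yielding the symbol $\xi\mathscr{R}$. Applying this identity to $\binom{0}{f}$ and extracting the first component gives
\[
(\mathfrak{D}_\epsilon - z)(Q^+)^W f = -\mathfrak{P}^*(Q^\pm)^W f + \epsilon^2(\xi\mathscr{R})^W f + \epsilon^3\bigl[(\widetilde{\mathscr{R}}_\epsilon)^W\tbinom{0}{f}\bigr]_1,
\]
and the announced inequality follows by taking norms, using the boundedness of $\mathfrak{P}^*$ (Lemma \ref{lem.symbolclass}) and again the $\langle\xi\rangle^{-2}$ factorization trick for the $S^2$ remainder.

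The main obstacle is the bookkeeping in part (ii): verifying that the $\epsilon^2$ order of $\mathscr{P}_\epsilon\sharp\mathscr{Q}_\epsilon$ really factors through $\xi$. This is not a property of a generic left parametrix but is forced here by two structural features which must both be invoked: the $s$-independence of $\mathscr{P}_0$ and $\mathscr{Q}_0$ (killing the order $\epsilon^1$ terms on both sides symmetrically) and the $\xi$-linearity of $\mathfrak{d}_1$ (extracting the $\xi$ factor out of the order $\epsilon^2$ Poisson bracket contribution). Apart from this, the argument is a routine unpacking of the parametrix identity combined with standard pseudodifferential estimates.
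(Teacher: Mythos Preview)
Your proposal is correct and follows essentially the same approach as the paper: for part (i) you apply the parametrix identity of Proposition~\ref{prop.parametrix}(i) to $(\psi,0)^T$ and control the $S^2$ remainder via the factorization $\mathscr{R}_\epsilon^W=(\mathscr{R}_\epsilon\sharp\langle\xi\rangle^{-2})^W(\langle\xi\rangle^2)^W$, exactly as the paper does; for part (ii) you compute $\mathscr{P}_\epsilon\sharp\mathscr{Q}_\epsilon$ directly, whereas the paper phrases the same computation as ``taking the adjoint'' (i.e.\ redoing the parametrix construction on the right, where the Poisson brackets flip sign), and then identifies the residual $\epsilon^2$ term with $\widetilde{\mathscr{Q}}_2\mathscr{P}_0\binom{0}{1}=\xi\overline{\mathscr{Q}}_2\binom{\mathfrak{P}^*}{0}$ via Lemma~\ref{lem.explicitQj}(ii)---which is precisely your observation that $\partial_s\mathfrak{d}_1=t\kappa'(s)\xi\sigma_1$ carries an explicit factor of $\xi$.
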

\begin{proof}
The identity of Point \ref{pt.1paramId} of Proposition \ref{prop.parametrix} applied to the vector $(\psi,0)^\mathrm{T}$ gives
\[\begin{split}
	&Q^W(\mathfrak{D}_\epsilon-z)\psi+(Q^+)^W\mathfrak{P}\psi=\psi+\epsilon^3 R^W_\epsilon\psi\,,
	\\
	&(Q^-)^W(\mathfrak{D}_\epsilon-z)\psi+(Q^\pm)^W\mathfrak{P}\psi=\epsilon^3 (R^-_\epsilon)^W\psi\,,
\end{split}\]
where $\mathfrak{P}$ is defined in Notation \ref{not.proj1}, \(\mathfrak{D}_\epsilon\) is defined  in Proposition \ref{prop.expansiondepsilon}, $\mathscr{R}_\epsilon, \mathscr{Q}_\epsilon$ are defined in Proposition \ref{prop.parametrix} and
\[
\mathscr{Q}_\epsilon =: \begin{pmatrix}
Q & Q^+ \\
Q^- & Q^\pm
\end{pmatrix}\,,\quad \text{ and }\quad\mathscr{R}_\epsilon =: \begin{pmatrix}
R_\epsilon & R^+_\epsilon \\
R^-_\epsilon & R^\pm_\epsilon
\end{pmatrix},
\]
$R_\epsilon$ is an element of $S^2(\mathbb{R}^2,\mathscr{L}(\mathscr{A},\mathscr{A}))$ and $R_\epsilon^{-}$ is an element of  $S^2(\mathbb{R}^2,\mathscr{L}(\mathscr{A},\C))$. 
By Lemmas \ref{lem.parametrix}, \ref{lem.explicitQj}, Theorems \ref{thm.CV}, and \ref{thm:comp_thm}, the operators $Q^W\colon \mathscr{B}\to \mathscr{A}$, $Q^+\colon \C\to \mathscr{B}$, $Q^-\colon \mathscr{B}\to\C$, $R_\epsilon\sharp \langle\xi\rangle^{-2}\colon \mathscr{A}\to \mathscr{A}$, and $R^-_\epsilon\sharp\langle\xi\rangle^{-2}\colon \mathscr{A}\to \mathscr{A}$, are bounded. More precisely, there exists $C>0$ such that (by the triangle inequality), 
\begin{align*}
	\|\psi\|_{L^2(\R,\mathscr{B})} &\leq \|Q^W(\mathfrak{D}_\epsilon-z)\psi\|_{L^2(\R,\mathscr{B})} + \|(Q^+)^W\mathfrak{P}\psi\|_{L^2(\R,\mathscr{B})} + \epsilon^3 \|R^W_\epsilon\psi\|_{L^2(\R,\mathscr{B})}\\
	& \leq C \|Q^W(\mathfrak{D}_\epsilon-z)\psi\|_{L^2(\R,\mathscr{A})} + C \|\mathfrak{P}\psi\|_{L^2(\R)} + \epsilon^3 \|R^W_\epsilon \psi\|_{L^2(\R,\mathscr{A})}\\
	&= C \|(\mathfrak{D}_\epsilon-z)\psi\|_{L^2(\R,\mathscr{B})} + C \|\mathfrak{P}\psi\|_{L^2(\R)} + \epsilon^3 \|R^W_\epsilon(\langle\xi\rangle^{-2})^W(\langle\xi\rangle^{2})^W \psi\|_{L^2(\R,\mathscr{A})}\\
	& \leq C \|(\mathfrak{D}_\epsilon-z)\psi\|_{L^2(\R,\mathscr{B})} + C \|\mathfrak{P}\psi\|_{L^2(\R)} + C \epsilon^3 \|(\langle\xi\rangle^{2})^W \psi\|_{L^2(\R,\mathscr{A})},
\end{align*}
and,
\[
	\|(Q^\pm)^W(z) \mathfrak{P}\psi\|_{L^2(\R)} \leq C\left(
		\|(\mathfrak{D}_\epsilon - z)\psi\|_{L^2(\R,\mathscr{B})} + \epsilon^3 \|(\langle\xi\rangle^2)^W \psi\|_{L^2(\R,\mathscr{A})}
		\right).
\]
%
The estimate \eqref{eq.rightinverse} follows from similar considerations by taking the adjoint in the operator identity in Proposition \ref{prop.parametrix} applied to the vector $(0,f)$, with the only difference being the sign in front of the Poisson brackets in \eqref{eqn:deftermgrushin}. More precisely, we have by Lemmas \ref{lem.parametrix}, \ref{lem.explicitQj}, Theorems \ref{thm.CV}, and \ref{thm:comp_thm}, and the computations of Section \ref{sec.subterms},
\[
\begin{split}
	\|(\mathfrak{D}_\epsilon-z)\mathscr{Q}_+^Wf\|_{L^2(\R,\mathscr{B})}
	\leq \|\mathfrak{P}({Q}^\pm)^Wf\|_{L^2(\R)}+C\epsilon^3\|(\langle\xi\rangle^2)^Wf\|_{L^2(\R)} + \epsilon^2\|(\widetilde{\mathscr{Q}}_2\mathscr{P}_0)^W\begin{pmatrix}0\\f\end{pmatrix}\|\,,
\end{split}
\]
and
\[
\widetilde{\mathscr{Q}}_2\mathscr{P}_0\begin{pmatrix}0\\1\end{pmatrix}
=
\widetilde{\mathscr{Q}}_2\begin{pmatrix}\mathfrak{P}^*\\0\end{pmatrix}
=\xi \overline{\mathscr{Q}}_2\begin{pmatrix}\mathfrak{P}^*\\0\end{pmatrix}
=\xi \mathscr{R}\,,
\]
with $ \mathscr{R}\in S^0(\R^2, \mathscr{L}(\C, \mathscr{A}))$ and the conclusion follows.
\end{proof}

\subsection{Spectral analysis}\label{sec.spectral}

\subsubsection{Towards the spectrum of the effective operator} 
Let $\mathscr{N}^{\mathrm{eff}}_\epsilon := (n_{\epsilon}^{\mathrm{eff}})^W$. 
The low-lying spectrum of $\mathscr{N}^{\mathrm{eff}}_\epsilon$, whose symbol is given in \eqref{eq.neff}, is easy to estimate by using classical results about pseudodifferential operators in one dimension.
\begin{proposition}\label{prop.lambdajNeff}
Let $N \geq 1$. Assume that the min-max level $\lambda_N$ is negative, where for $j \geq 1$,
\[
\lambda_j = 
\begin{cases}
    \lambda_j\left(D_s^2 - \frac{\kappa^2}{\pi^2}\right)
    & \text{if the curve is unbounded: } I = \mathbb{R}, \\
    \lambda_j\left(\left(D_s + \frac{2-\pi}{\ell}\right)^2 - \frac{\kappa^2}{\pi^2}\right)
    & \text{if the curve is a closed curve: } I = \mathbb{R} / \ell \mathbb{Z}.
\end{cases}
\]
Then, for all $j\in\{1,\dots,N\}$, we have
\begin{equation}\label{eq.lambdajNeff}
\lambda_j(\mathscr{N}^{\mathrm{eff}}_\epsilon)=\nu_1(0,m\epsilon)+\epsilon^2\frac{2\lambda_j}{\pi}+o(\epsilon^2)\,.
\end{equation}
\end{proposition}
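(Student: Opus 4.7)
The strategy is to Taylor-expand the Weyl symbol $n_\epsilon^{\mathrm{eff}}(s,\xi)$ around its minimum at $\xi = 0$, identify the resulting quantized operator with an explicit one-dimensional Schrödinger operator, and conclude by the min-max principle. By Proposition \ref{prop.straightstrip} one has $\nu_1(\xi,\mu) = \sqrt{\xi^2 + \nu_1(0,\mu)^2}$ and $\nu_1(0,0) = \pi/4$, so
\[ \nu_1(\xi, m\epsilon) = \nu_1(0, m\epsilon) + \frac{2}{\pi}\xi^2 + O(\epsilon\xi^2) + O(\xi^4). \]
For the subprincipal matrix element, writing $\varphi_{\xi,\mu,1} = c_\xi(I_2 + \alpha\sigma_1)\varphi_{0,\mu,1}$ from Proposition \ref{prop.straightstrip} and using the identity $(I_2 + \alpha\sigma_1)\sigma_1(I_2 + \alpha\sigma_1) = (1+\alpha^2)\sigma_1 + 2\alpha I_2$ together with the vanishing $\int_{-1}^1 t|\varphi_{0,\mu,1}(t)|^2\,dt = 0$ (by parity), one finds
\[ \langle t\varphi_{\xi,\mu,1},\sigma_1\varphi_{\xi,\mu,1}\rangle = \langle t\varphi_{0,\mu,1},\sigma_1\varphi_{0,\mu,1}\rangle, \]
independent of $\xi$. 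An explicit computation at $\mu = 0$ (where $k(0) = \pi/4$, $c(0) = 1/2$) gives the value $-4/\pi^2$, whence
\[ n_\epsilon^{\mathrm{eff}}(s,\xi) = \nu_1(0, m\epsilon) + \frac{2}{\pi}\xi^2 - \frac{4\epsilon}{\pi^2}\kappa(s)\xi + r_\epsilon(s,\xi), \]
with a remainder $r_\epsilon = O(\xi^4) + O(\epsilon\xi^2) + O(\epsilon^2\xi)$.

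Next I would quantize using $\xi \mapsto \epsilon(D_s + \pi/\ell)$ (taking $\pi/\ell = 0$ in the waveguide case) and complete the square to obtain
\[ \mathscr{N}_\epsilon^{\mathrm{eff}} = \nu_1(0,m\epsilon) + \frac{2\epsilon^2}{\pi}\left[\left(D_s + \frac{\pi}{\ell} - \frac{\kappa(s)}{\pi}\right)^2 - \frac{\kappa(s)^2}{\pi^2}\right] + \mathscr{E}_\epsilon, \]
where $\mathscr{E}_\epsilon$ is a pseudodifferential remainder which will be $o(\epsilon^2)$ once restricted to suitably microlocalized states. The gauge transform $U = e^{-i\phi(s)}$ with $\phi'(s) = \pi/\ell - \kappa(s)/\pi - c_0$ converts the bracketed expression into $(D_s + c_0)^2 - \kappa^2/\pi^2$. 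For $I = \R$ one takes $c_0 = 0$; for $I = \R/\ell\Z$ the requirement that $\phi$ be single-valued, combined with $\int_0^\ell \kappa\,ds = 2\pi$, forces $c_0 = (\pi-2)/\ell$, and the resulting operator $(D_s + (\pi-2)/\ell)^2 - \kappa^2/\pi^2$ has the same spectrum as the reference operator $(D_s + (2-\pi)/\ell)^2 - \kappa^2/\pi^2$ appearing in the statement (complex conjugation sends $D_s$ to $-D_s$ while fixing the real potential). In both cases, $\mathscr{N}_\epsilon^{\mathrm{eff}}$ is, up to unitary equivalence, $\nu_1(0,m\epsilon) + (2\epsilon^2/\pi)P + \widetilde{\mathscr{E}}_\epsilon$, where $P$ is the Schrödinger operator whose $j$-th eigenvalue is $\lambda_j$.

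The conclusion then follows from the min-max principle. Since $\lambda_N < 0$ by hypothesis, $P$ has at least $N$ eigenvalues below the infimum of its essential spectrum; plugging quasi-modes built from the corresponding eigenfunctions into the Rayleigh quotient of $\mathscr{N}_\epsilon^{\mathrm{eff}}$ yields the upper bound $\lambda_j(\mathscr{N}_\epsilon^{\mathrm{eff}}) \leq \nu_1(0, m\epsilon) + (2\epsilon^2/\pi)\lambda_j + o(\epsilon^2)$. The main technical obstacle is the matching lower bound, which requires controlling $\widetilde{\mathscr{E}}_\epsilon$ on true eigenfunctions. Since the principal symbol $\nu_1(\cdot,m\epsilon)$ has a non-degenerate minimum at $\xi = 0$, a microlocalization argument in the spirit of Lemma \ref{lem.xiborne0} (applied now to $\mathscr{N}_\epsilon^{\mathrm{eff}}$) shows that eigenfunctions associated with eigenvalues in $\nu_1(0,m\epsilon) + O(\epsilon^2)$ concentrate at the scale $|\xi| \lesssim \epsilon^{1/2}$, on which $\widetilde{\mathscr{E}}_\epsilon$ acts as $o(\epsilon^2)$. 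Combined with min-max, this gives the matching lower bound and yields \eqref{eq.lambdajNeff}.
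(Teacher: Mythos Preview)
Your approach is essentially the same as the paper's: Taylor-expand $n_\epsilon^{\mathrm{eff}}$, complete the square, gauge away the linear-in-$\xi$ term, and conclude by min-max with a microlocalization argument for the lower bound. Your observation that $\langle t\varphi_{\xi,\mu,1},\sigma_1\varphi_{\xi,\mu,1}\rangle$ is exactly independent of $\xi$ (via $c_\xi^2(1+\alpha^2)=1$) is a pleasant simplification over the paper's Lemma~\ref{lem.moment}, which only records the value up to $O(\xi^2)$.

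There is, however, a genuine gap in your lower-bound sketch. You claim that eigenfunctions concentrate at scale $|\xi|\lesssim\epsilon^{1/2}$ and that the remainder $\widetilde{\mathscr E}_\epsilon$ then acts as $o(\epsilon^2)$. Neither assertion is quite right. First, the microlocal calculus behind Lemma~\ref{lem.xiborne0} requires cutoff scales $\epsilon^\eta$ with $\eta<\tfrac12$ (the symbol classes $S^m_\eta$ and the composition theorem degenerate at $\eta=\tfrac12$), so one only gets $|\xi|\lesssim\epsilon^{\eta}$ for $\eta$ strictly below $\tfrac12$. Second, even at the optimistic scale $|\xi|\sim\epsilon^{1/2}$, your remainders $O(\xi^4)$ and $O(\epsilon\xi^2)$ contribute $O(\epsilon^2)$, not $o(\epsilon^2)$; at scale $\epsilon^\eta$ with $\eta<\tfrac12$ they are even worse. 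The paper circumvents this by passing to the quadratic form: one writes $\xi^4 = \xi^2\cdot\xi^2$ and uses the microlocal bound $\|(\xi^4)^W\psi\|\le C\epsilon^{2\eta}\|(\xi^2)^W\psi\|$ to absorb the quartic term as a small relative perturbation of the leading kinetic term $\tfrac{2}{\pi}(D_s+\ldots)^2$, rather than trying to bound it outright by $o(\epsilon^2)\|\psi\|$. Without this extra step (or an equivalent form-bound argument), your lower bound does not close.
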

\begin{proof}
The proof is divided into several steps.
\begin{enumerate}[label=\it Step \arabic*.,  leftmargin=0cm, itemindent = 1.5cm]
\item 
The principal symbol $\nu_1(\xi,\mu)=\sqrt{\nu_1(0,m\epsilon)^2+\xi^2}$ has a unique minimum at $\xi=0$, which is non-degenerate. This suggests expanding the operator near $\xi=0$. By Lemma \ref{lem.moment}, Proposition \ref{prop.parametrix} and Proposition \ref{prop.straightstrip}, we have that
\[\begin{split}
	n^{\mathrm{eff}}_\epsilon(s,\xi)
	&=\sqrt{\nu_1(0,m\epsilon)^2+\xi^2}+\epsilon\frac{4\kappa(s)\xi}{\pi^2}+\mathscr{O}(\epsilon^2|\xi|+\epsilon\xi^3)\,,
	\\&=
	\nu_1(0,m\epsilon) +\frac{2\xi^2}{\pi} +\epsilon\frac{4\kappa(s)\xi}{\pi^2}+\mathscr{O}(\epsilon^2|\xi|+\epsilon\xi^3+\xi^4)\,,
	\\&=
	\nu_1(0,m\epsilon) 
	+\frac{2}{\pi}\left(\left(\xi+\frac{\epsilon\kappa}{\pi}\right)^2-\epsilon^2\frac{\kappa^2}{\pi^2}\right)
	+\mathscr{O}(\epsilon^2|\xi|+\epsilon\xi^3+\xi^4)\,.
\end{split}\]
More precisely, we can write
\begin{equation}\label{eq.neffdescription}
n^{\mathrm{eff}}_\epsilon(s,\xi)-\nu_1(0,m\epsilon) =\frac{2}{\pi}\left(\left(\xi+\frac{\epsilon\kappa}{\pi}\right)^2-\epsilon^2\frac{\kappa^2}{\pi^2}\right)
+\epsilon^2\xi r_1(s,\xi)+\epsilon\xi^3r_2(s,\xi)+\xi^4r_3 + \xi^6r_4(s,\xi)\,,
\end{equation}
with $r_j\in S^0(\mathbb{R}^2, \mathscr{L}(\C))$ for $j=1,\dots,4$, and $r_3\in \C$.
Note that 
%
\[
\begin{split}
	&e^{iK_1} \frac{2}{\pi} \left(\left(\xi + \frac{\epsilon \kappa}{\pi}\right)^2 - \epsilon^2 \frac{\kappa^2}{\pi^2}\right)^W e^{-iK_1}
	= \epsilon^2 \frac{2}{\pi} \left(\left(D_s + \frac{\pi+2}{\ell}\right)^2 - \frac{\kappa^2}{\pi^2}\right),
\end{split}
\]
where $K_1 \colon s \mapsto \int_0^s \left(\frac{\kappa(\tau)}{\pi} - \frac{2}{\ell}\right) \, \mathrm{d} \tau$, ($\ell =+\infty$ for the strip).
Therefore, we obtain that the min-max levels satisfy, for $j\geq 1$,
\[
\begin{split}
    &\lambda_j\left(\nu_1(0, m\epsilon) 
    + \frac{2}{\pi}\left(\left(\xi + \frac{\epsilon \kappa}{\pi}\right)^2 - \epsilon^2 \frac{\kappa^2}{\pi^2}\right)\right) \\
    &= \begin{cases}
    \nu_1(0, m\epsilon) + \epsilon^2 \frac{2}{\pi} \lambda_j\left(D_s^2 - \frac{\kappa^2}{\pi^2}\right)
    & \text{if the curve is unbounded: } I = \mathbb{R}, \\
    \nu_1(0, m\epsilon) + \epsilon^2 \frac{2}{\pi} \lambda_j\left(\left(D_s + \frac{\pi+2}{\ell}\right)^2 - \frac{\kappa^2}{\pi^2}\right)
    & \text{if the curve is closed: } I = \mathbb{R} / \ell \mathbb{Z}.
    \end{cases}
\end{split}
\]%
\item \label{step2} Let $\eta\in(0,1]$ 
and $ (f_\epsilon)_\epsilon\subset H^\infty(\R)$ such that for all $k\geq 0$, we have $\|(\xi^k)^Wf_\epsilon\| = \mathscr{O}(\epsilon^{\eta k})$.

Let us prove by induction on $k\geq 0$, that for any $r\in S^0(\R^2, \mathscr{L}(\C))$, we have $\|(\xi^kr)^Wf_\epsilon\| = \mathscr{O}(\epsilon^{\eta k})$. By Theorem \ref{thm.CV}, this is true for $k=0$. Assume that there exists $k\geq 1$, such that for all $n\in\{0,\dots,k-1\}$, and all $r\in S^0(\R^2, \mathscr{L}(\C))$, we have $\|(\xi^nr)^Wf_\epsilon\| = \mathscr{O}(\epsilon^{\eta n})$. By Theorem \ref{thm:comp_thm}, there exists $r_{k+1}\in S^{k}(\R^2, \mathscr{L}(\C))$ such that
\[\begin{split}
	 r\sharp\xi^k &= \sum_{n=0}^{k}\frac1{n!}\left(\frac{-\epsilon}{2i}\right)^n(\partial_\xi^n\xi^k\partial_s^n r)+ \epsilon^{(k+1)}r_{k+1}
	\\&=
	\sum_{n=0}^{k}\binom{k}{n}\left(\frac{-\epsilon}{2i}\right)^n(\xi^{k-n}\partial_s^n r)+ \epsilon^{(k+1)}r_{k+1}\,.
\end{split}\]
Therefore, by induction hypothesis and the triangle inequality, there exists $C>0$ such that
\[\begin{split}
	\|(\xi^kr)^Wf_\epsilon\|
	&\leq \|(r)^W(\xi^k)^Wf_\epsilon\|
	+
	C\sum_{n=1}^{k}\epsilon^n\|(\xi^{k-n}\partial_s^n r)^Wf_\epsilon\|+ \epsilon^{(k+1)}\|r_{k+1}f_\epsilon\|
	\\&
	\leq \|(r)^W(\xi^k)^Wf_\epsilon\|
	+ \epsilon^{(k+1)}\|r_{k+1}f_\epsilon\|
	+\mathscr{O}(\epsilon^{1 + \eta (k-1)}) 
	\,.
\end{split}\]
By Theorems \ref{thm:comp_thm} and  \ref{thm.CV}, up to taking a larger $C>0$, we obtain that
\[
	\|(r)^W(\xi^k)^Wf_\epsilon\|\leq C\|(\xi^k)^Wf_\epsilon\| = \mathscr{O}(\epsilon^{\eta k})\,,
\]
and
\[
\epsilon^{(k+1)}\|r_{k+1}f_\epsilon\|\leq C\epsilon^{(k+1)}\|\langle\xi\rangle^{k}f_\epsilon\|= \mathscr{O}(\epsilon^{k+1})\,.
\]
This concludes the proof by induction.

\item Let us focus on the upper bound. Let $j\in\{1,\dots,N\}$. By elliptic regularity, any normalized eigenfunction $f_{j}\in L^2(\R)$ associated with $\lambda_j$ satisfies $\|f_j\|_{H^\infty(\R)}<+\infty$ so that  for any $n\geq 0$, $\|(\xi^n)^W f_j\|_{L^2(\R)} = \epsilon^n\|(-i\partial_s+\frac\pi\ell)^n f_j\|=\mathscr{O}(\epsilon^n)$. The previous step ensures that
\[
	\|\left(\epsilon^2\xi r_1(s,\xi)+\epsilon\xi^3r_2(s,\xi)+\xi^4r_3 + \xi^6r_4(s,\xi)\right)^Wf\| = \mathscr{O}(\epsilon^3)\,.
\]
We deduce that, for all $j\in\{1,\ldots, N\}$,
\[\left\|\left(\mathscr{N}^{\mathrm{eff}}_\epsilon-\left[\nu_1(0,m\epsilon)+\frac{2\epsilon^2}{\pi}\lambda_j\right]\right)f_j\right\|=\mathscr{O}(\epsilon^3)\,,\]
and the spectral theorem provides us with the upper bound.
\item \label{Step4}
Let us focus on the lower bound. Thanks to the upper bound and Proposition \ref{prop.straightstrip}, there exists $\epsilon_0>0$ and for $\epsilon\in(0,\epsilon_0)$, a family $(f_{j,\epsilon})_{j=1,\dots,N}$ of normalized eigenfunctions of $\mathscr{N}^{\rm eff}_\epsilon$ associated with the eigenvalues $(\lambda_j(\mathscr{N}^{\mathrm{eff}}_\epsilon))_{j=1,\dots,N}$. Let $\eta\in(0,1/2)$. The arguments leading to Lemmas \ref{lem.xiborne} and \ref{lem.xiborne0} ensure that for any $k\geq 0$, there exists $C_k>0$ and $\epsilon_0$ such that for any $\epsilon\in(0,\epsilon_0)$,
	\[
    		\|(\langle \xi \rangle^k)^W f_{j,\epsilon}\|_{L^2(\mathbb{R}, \mathscr{A})} \leq C_k\,,
   	 \]
	 and with the notations of the proof of Lemma \ref{lem.xiborne0}, 
	\[
	\|(\langle \xi \rangle^{k})^W \Xi^W f_{j,\epsilon}\| = \mathscr{O}(\epsilon^k) \,.
	\]
	This identity,  the triangle inequality and the fact that $|(\xi^k)(1-\Xi)|\leq C \epsilon^{\eta k}$ (see Figure \ref{fig:cutoffs}) imply that for any $k\geq 0$,
	\[\begin{split}
		\|(\xi^k)^Wf_{j,\epsilon}\| 
		&\leq \|(\xi^k)^W\Xi^Wf_{j,\epsilon}\| + \|(\xi^k)^W(1-\Xi^W)f_{j,\epsilon}\|
		= \mathscr{O}(\epsilon^k+\eta^{k\eta}) = \mathscr{O}(\eta^{k\eta})\,.
	\end{split}\]
	By the induction of \ref{step2}, we get that for $\eta\in(1/3,1/2)$,
\[
	\|\left(\epsilon^2\xi r_1(s,\xi)+\epsilon\xi^3r_2(s,\xi) + \xi^6r_4(s,\xi)\right)^Wf_{j,\epsilon}\| = \mathscr{O}(\epsilon^{2+\eta} + \epsilon^{1+3\eta} + \epsilon^{6\eta}) = o(\epsilon^2)\,.
\]
Note that this argument implies $\|(\xi^4)^W f_{j,\epsilon}\| = \mathscr{O}(\epsilon^{2-})$, which does not allow us to conclude. Nevertheless, the previous argument applied with $(\xi^2)^W f_{j,\epsilon}$ ensures that $\|(\xi^4)^W f_{j,\epsilon}\| = \mathscr{O}(\epsilon^{2\eta})\|(\xi^2)^W f_{j,\epsilon}\|$.

Let 
\[E_N=\underset{1\leq j\leq N}{\mathrm{span}}\, f_{j,\epsilon},\]
and $\psi\in E_N$. We also have $\|(\xi^4)^W \psi\| = \mathscr{O}(\epsilon^{2\eta})\|(\xi^2)^W \psi\|$, and 
\[
	\left\langle \left(\mathscr{N}^{\mathrm{eff}}_\epsilon-\nu_1(0,\epsilon m)\right)\psi,\psi\right\rangle\leq \lambda_N\left(\mathscr{N}^{\mathrm{eff}}_\epsilon-\nu_1(0,\epsilon m)\right)\|\psi\|^2\,. \]
By \eqref{eq.neffdescription}, we have that there exists $C>0$ such that 
\[\begin{split}
	 &\lambda_N\left(\mathscr{N}^{\mathrm{eff}}_\epsilon-\nu_1(0,\epsilon m)\right)\|\psi\|^2\\
	 &\geq 
	\left\langle \left(\frac{2}{\pi}\left(\left(\xi+\frac{\epsilon\kappa}{\pi}\right)^2-\epsilon^2\frac{\kappa^2}{\pi^2}\right)
+\epsilon^2\xi r_1+\epsilon\xi^3r_2+\xi^4r_3 + \xi^6r_4\right)^W\psi,\psi\right\rangle
	\\&\geq
	\left\langle \left(\frac{2}{\pi}\left(\left(\xi+\frac{\epsilon\kappa}{\pi}\right)^2-\epsilon^2\frac{\kappa^2}{\pi^2}\right)
\right)^W\psi,\psi\right\rangle - o(\epsilon^2)\|\psi\|^2 -C\epsilon^{2+2\eta}\|D_s\psi\|^2
	\\&\geq
	\epsilon^2\left(
		\left\langle 
			\frac{2}{\pi}\left(\left(D_s+\frac{\kappa}{\pi}\right)^2-\frac{\kappa^2}{\pi^2}\right)
\psi,\psi\right\rangle - o(1)\|\psi\|^2 -C\epsilon^{2\eta}\|\left(D_s+\frac{\kappa}{\pi}\right)\psi\|^2\right)
	\\&\geq
	\epsilon^2\left(
	(1-\mathscr{O}(\epsilon^{2\eta}))\left\langle 
		\frac{2}{\pi}\left(\left(D_s+\frac{\kappa}{\pi}\right)^2-\frac{\kappa^2}{\pi^2}\right)
	\psi,\psi\right\rangle - o(1)\|\psi\|^2\right)
	\,. 
\end{split}\]
The min-max principle ensures that
\[
\lambda_N\left(\mathscr{N}^{\mathrm{eff}}_\epsilon-\nu_1(0,\epsilon m)\right)
\geq
\epsilon^2\left((1-\mathscr{O}(\epsilon^{2\eta}))\lambda_N - o(1)\right)
\]
and the conclusion follows.
\end{enumerate}

\end{proof}

\subsubsection{Spectral consequences}

Let $N \geq 1$ an integer such that $\lambda_N$ is negative, where $\lambda_N$ is defined in Proposition \ref{prop.lambdajNeff}.
First, we construct quasimodes for $\mathfrak{D}_\epsilon$ by considering the normalized eigenfunctions $(f_{j,\epsilon})_{1\leq j\leq N}$. Using \eqref{eq.rightinverse} with $z=\lambda_j(\mathscr{N}^{\mathrm{eff}}_\epsilon)$ and \eqref{Step4} of the proof of Proposition \ref{prop.lambdajNeff}. This shows that
\begin{equation*}
	\|(\mathfrak{D}_\epsilon-\lambda_j(\mathscr{N}^{\mathrm{eff}}_\epsilon))\mathscr{Q}_+^Wf\|=o(\epsilon^2)\,.
\end{equation*}
With the spectral theorem and Proposition \ref{prop.lambdajNeff}, this implies that, for all $j\in\{1,\ldots,N\}$,
\[\lambda_j^+(\mathfrak{D}_\epsilon)\leq \lambda_j(\mathscr{N}^{\mathrm{eff}}_\epsilon)+o(\epsilon^2)<\nu_1(0,\epsilon m)\,,\]
which shows the existence of at least $N$ positive eigenvalues (with multiplicity) below the threshold of the positive essential spectrum.

For the lower bound, we use the second inequality in Corollary \ref{cor.ineq}, with $z=\lambda^+_j(\mathfrak{D}_\epsilon)$ and $\psi$ in the corresponding eigenspace. Recalling \eqref{eq.Qpm}, Lemma \ref{lem.xiborne}, this gives
	\[\|({\mathscr{N}}^{\mathrm{eff}}_\epsilon-\lambda^+_j(\mathfrak{D}_\epsilon))\mathfrak{P}\psi\|\leq C\epsilon^{3}\|\psi\|+C\epsilon^2\|(\xi r_1+\xi^2 r_2)^W\mathfrak{P}\psi\|\,.\]
With similar considerations as in the proof of Proposition \ref{prop.lambdajNeff}, we have
\[\|(\xi r_1+\xi^2 r_2)^W\mathfrak{P}\psi\|\leq C\epsilon^{\eta}\|\mathfrak{P}\psi\|\,.\]
With the first inequality of Corollary \ref{cor.ineq} and Lemma \ref{lem.xiborne}, we get $\|\psi\|\leq C\|\mathfrak{P}\psi\|$ and thus
	\[\|({\mathscr{N}}^{\mathrm{eff}}_\epsilon-\lambda^+_j(\mathfrak{D}_\epsilon))\mathfrak{P}\psi\|\leq C\epsilon^{2+\eta}\|\mathfrak{P}\psi\|\,.\]
Since $\mathfrak{P}$ is injective on $\ker(\mathfrak{D}_\epsilon-\lambda^+_j(\mathfrak{D}_\epsilon))$, this last estimate and the spectral theorem imply that, for all $j\in\{1,\ldots,N\}$,
\[\lambda_j\left({\mathscr{N}}^{\mathrm{eff}}_\epsilon\right)\leq\lambda^+_j(\mathfrak{D}_\epsilon)+o(\epsilon^2)\,.\]
Theorem \ref{thm.main} follows.

\section*{Acknowledgments}
This work was conducted within the France 2030 framework programme, Centre Henri Lebesgue ANR-11-LABX-0020-01. This work has been partially supported by CNRS International Research Project Spectral Analysis of Dirac Operators – SPEDO. L.~L.T. acknowledges ChatGPT/Claude 3.5 Sonnet for assistance in refining the English and developing Python scripts for visualizations.

\appendix

\section{Proof of Proposition \ref{prop.straightstrip}}\label{sec.proofSS}
\subsection{Proof of Point \ref{pt.2xi0} of Proposition \ref{prop.straightstrip}}
This operator $\mathfrak{d}_{0,0,\mu}$  plays a privileged role in our study. 

Let us divide the proof into several steps.
Let \( \lambda \in \mathbb{R} \) denote an eigenvalue with \( \varphi \) as a corresponding eigenvector of the operator \( \mathfrak{d}_{0,0,\mu}\). Given that \( \left(\mathfrak{d}_{0,0,\mu}\right)^2= -\partial_t^2+ \mu^2 \), it follows that 
\[
\left(\partial^2_t + \lambda^2 - \mu^2\right)\varphi = 0.
\]
To compute the solution, let us introduce a complex number \( k \in \mathbb{C} \) satisfying the relation \( k^2 = \lambda^2 - \mu^2 \).

\begin{description}[style=unboxed,leftmargin=0cm]
\item[Case $k = 0$] 

We have \( \varphi\colon t \mapsto A + tB \), where \( A,B \in \mathbb{C}^2 \). Imposing the boundary conditions results in the system
\[
\begin{aligned}
    -\sigma_1 (A+B) &= A+B, \\
    \sigma_1 (A-B) &= A-B,
\end{aligned}
\]
which implies that \( B = -\sigma_1 A \). Consequently, \( \varphi \) simplifies to \( \varphi\colon t \mapsto (1_2 - \sigma_1 t)A \). Using the equation \( \mathfrak{d}_{0,0,\mu}\varphi - \lambda \varphi = 0 \), we obtain
\[
\begin{aligned}
    (\mu\sigma_3 + i\sigma_2\sigma_1 - \lambda I_2)A &= 0, \\
    -(\mu\sigma_3 - \lambda I_2)\sigma_1 A &= 0,
\end{aligned}
\]
and further simplification leads to \( \mu = -\frac{1}{2} \), \( \lambda = \pm\frac{1}{2} \) and
$
    ( \mu\sigma_3 + \lambda1_2)A = 0.
%
$
Considering \(\lambda = \frac{1}{2}\), we obtain that
\[
A = c e_1,
\]
where \(c \in \mathbb{C} \setminus \{0\}\) and \(e_1, e_2\). 
Therefore, \(u_+\) defined as 
\[
u_+ \colon t \mapsto \sqrt{\frac{3}{8}} \begin{pmatrix}  1 \\ -t \end{pmatrix}
\]
is a normalized eigenvector of \(\mathfrak{d}_{0,0,\mu}\) associated with the eigenvalue \(\frac{1}{2}\). Note that
\[
u_- \colon t \mapsto \sigma_1 u_+(t) = \sqrt{\frac{3}{8}} \begin{pmatrix} -t \\  1 \end{pmatrix},
\]
is a normalized eigenvector of \(\mathfrak{d}_{0,0,\mu}\) associated with the eigenvalue \(-\frac{1}{2}\).
\item[Case $k\ne0$]

We have \( \varphi\colon t \mapsto A\cos(k t)+B\sin(k t) \) with $A,B\in\mathbb{C}^2$.
The boundary conditions read
\[
\begin{aligned}
    -\sigma_1(A\cos(k) + B\sin(k))&= A\cos(k) + B\sin(k), \\
    \sigma_1(A\cos(k) - B\sin(k))&= A\cos(k) - B\sin(k),
\end{aligned}
\]
which simplifies as 
\[
A\cos(k) = -\sigma_1B\sin(k)\,.
\]
Applying the equation \(\mathfrak{d}_{0,0,\mu}\varphi - \lambda \varphi = 0 \), we obtain
\[
\begin{aligned}
    (\mu\sigma_3 - \lambda 1_2)A -i\sigma_2 k B&= 0, \\
    (\mu\sigma_3 - \lambda1_2)B + i\sigma_2k A &= 0.
\end{aligned}
\]
Given that \( k^2 \neq 0 \) (so that $\lambda\ne \mu$),  we deduce from these equations that \( A = 0 \) and \( B = 0 \) are equivalent. These conditions would lead to a contradiction, as they would not permit non-trivial solutions for the eigenvectors. Hence, we conclude that both \( A \) and  \( B \) must be non-zero. By the boundary condition, this implies that \( \sin(k) \), \( \cos(k) \) and \( \sin(2k) \) must also be non-zero.
Note that since  $k\in (i\R\cup \R)$, the numbers $k\tan(k)$ and $\frac{k}{\sin(2k)}$ are real.
We can reformulate the previous equations as
\[
\begin{aligned}
    A + \sigma_1 B \tan(k) &= 0\,, \\
    \left((\mu-k\tan(k))\sigma_3 -  \lambda1_2\right)B&=0\,.
\end{aligned}
\]
The second equation has a non-trivial solution if and only if \( (k\tan(k) - \mu)^2 = \lambda^2 \) which simplifies to
\begin{equation}\label{eq.mu-k}
\mu 
= -\frac{k}{\tan(2k)}\,.
\end{equation}
Note also that
\(
\mu - k\tan(k) = -k\left(\frac{1}{\tan(2k)} +\tan(k)\right) =-\frac{k}{\sin(2k)}\,,
\)
and
\(
\lambda^2 = \mu^2 + k^2 = k^2\left(1+\frac{1}{\tan^2(2k)}\right) = \frac{k^2}{\sin^2(2k)}
\)
so that
\[
B\text{ is colinear to }\begin{cases}
	e_2&\text{ if }\lambda\frac{k}{\sin(2k)}>0\,,
	\\
	e_1&\text{ if }\lambda\frac{k}{\sin(2k)}<0\,.
\end{cases}
\]
We get then that
\[
\varphi\colon t\longmapsto c\begin{cases}
	\frac{\cos(kt)}{\cos(k)}e_1 - \frac{\sin(kt)}{\sin(k)}e_2&\text{ if }\lambda\frac{k}{\sin(2k)}>0\,,
	\\
	\frac{\cos(kt)}{\cos(k)}e_2 - \frac{\sin(kt)}{\sin(k)}e_1&\text{ if }\lambda\frac{k}{\sin(2k)}<0\,,
\end{cases}
\]
is a normalized eigenvector when
\[
	c = \left(\frac{k\sin^2(2k)}{4k-\sin(4k)}\right)^{\frac{1}{2}}.
\]
Note that in the limit $k\to 0$ for $\lambda>0$, $\varphi$ tends to the eigenvector $u_+$  from the case $k = 0$.
\item[Study of equation \eqref{eq.mu-k} when $k^2<0$]
then $k = i\tilde k$ with  $\tilde k\in \R$ and \eqref{eq.mu-k} reduces to
\[
\mu = -\frac{\tilde k}{\tanh(2\tilde k)}.
\]
Examining the image of the function $g\colon \tilde k\mapsto -\frac{\tilde k}{\tanh(2\tilde k)}$, we obtain that $\mu\in(-\infty,-1/2]$ and that for such $\mu$, there exists a unique non negative solution $\tilde k$ (the unique non positive solutions being $-\tilde k$). We denote by
\[
	\begin{array}{llll}
		\tilde k\colon &(-\infty,-1/2]&\longrightarrow&[0,+\infty)
	\end{array}
\]
the inverse map and remark that $\lambda^2 = \mu^2-\tilde k(\mu)^2>0$ for all $\mu\in(-\infty,-1/2]$ and
\[
	\lim_{\mu\to-\infty}\mu^2-\tilde k(\mu)^2 = 0\,,
\]
see Figure \ref{fig:waveguide24}.
\begin{figure}[htbp]
    \centering
    \includegraphics[width=1\textwidth]{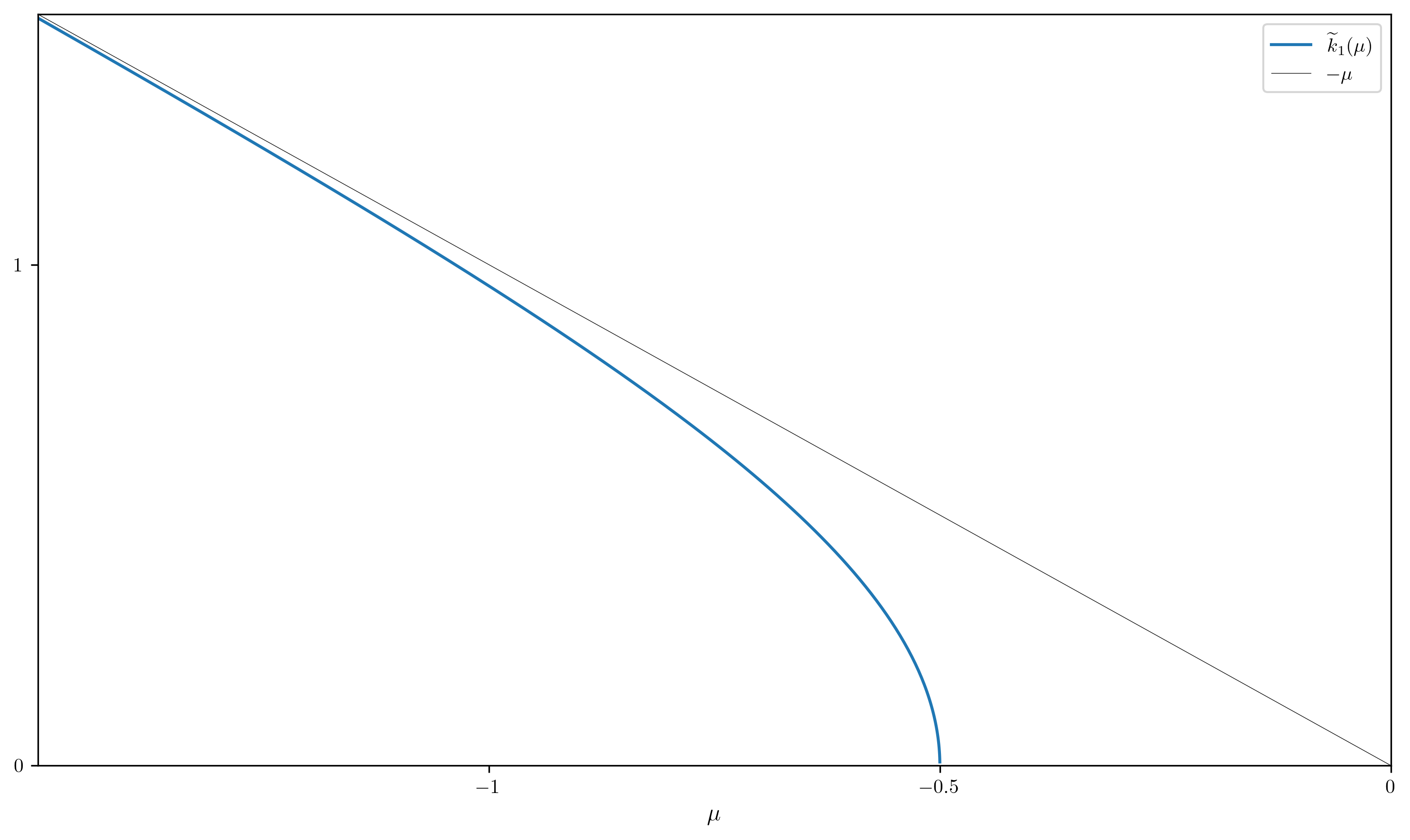}
    \caption{Plot $\mu\mapsto\tilde k_1(\mu)$.}
    \label{fig:waveguide24}
\end{figure}
\item[Study of equation \eqref{eq.mu-k} when $k^2>0$]
  then $k\in \R$. Examining the image of the function $ k\mapsto -\frac{ k}{\tan(2k)}$, we obtain that there exists infinitely many non negative solutions $k$ (the non positive solutions being $-k$). For $\mu>-\frac{1}{2}$, the increasing sequence of the positive solutions to this equation is $(k_n(\mu))_{n\geq 1}$. $(k_n(\mu))_{n\geq 2}$ can be analytically extended to $\mu\in\mathbb{R}$.
Figure \ref{fig:dirac1deig} illustrates the behavior of the curves \( k_1 \), \( k_2 \), and \( k_3 \) along with their respective asymptotic tendencies. Notably, \( k_1 \) vanishes at \( x = -0.5 \). As we traverse towards infinity, \( k_1 \) aligns with \( \frac{\pi}{2} \), while \( k_2 \) and \( k_3 \) respectively approach \( \pi \) and \( \frac{3\pi}{2} \).

\begin{figure}[htbp]
    \centering
    \includegraphics[width=1\textwidth]{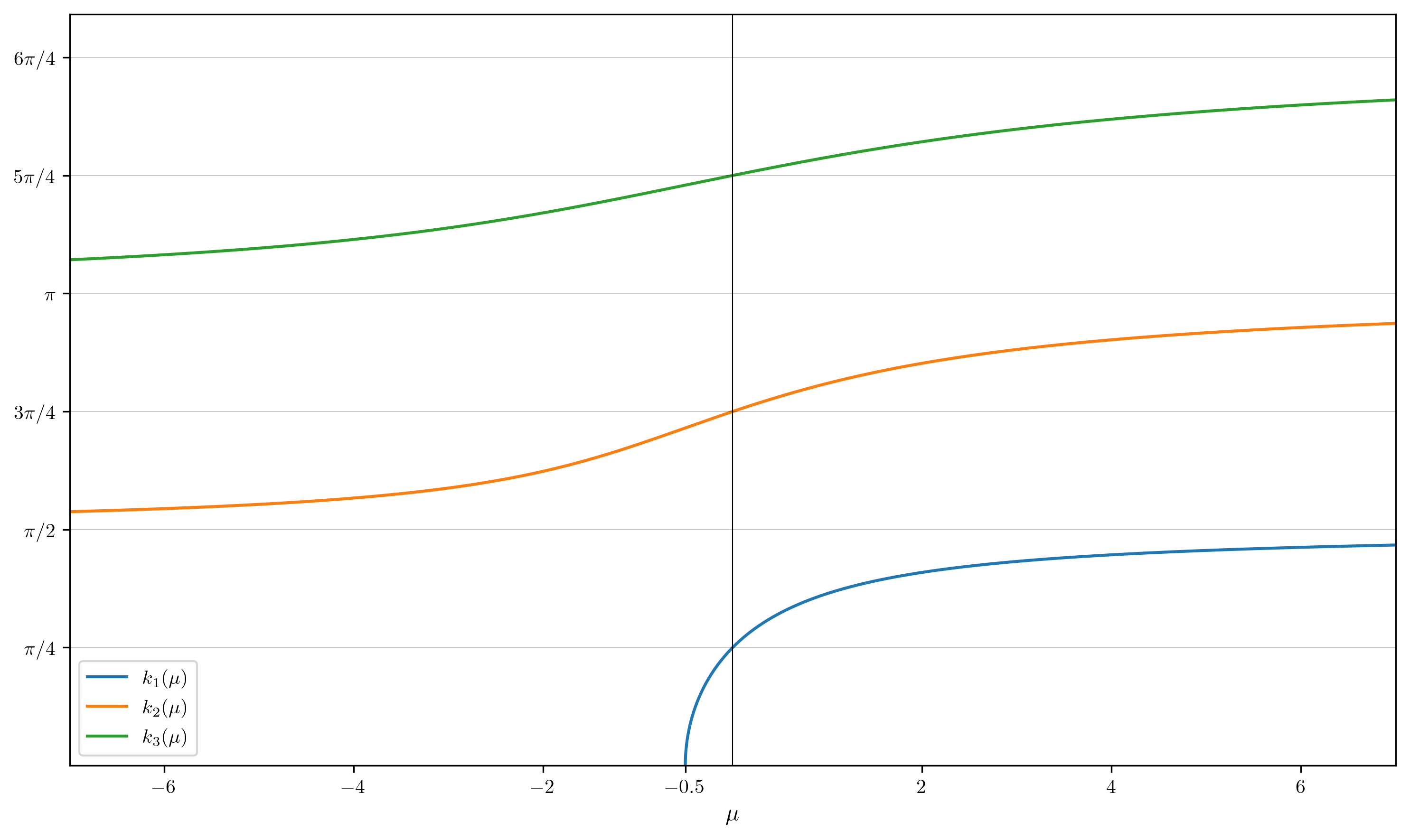}
    \caption{Curves representing \(k_1\), \(k_2\), and \(k_3\) for the one-dimensional Dirac operator.}
    \label{fig:dirac1deig}
\end{figure}

\end{description}

\subsection{Proof of Point \ref{pt.3xi0} of Proposition \ref{prop.straightstrip}}

The operator \(\mathfrak{d}_{0,\xi,\mu} \) is self-adjoint, possesses a discrete real spectrum and satisfies the following intertwining relationship:
\[
	\sigma_1\mathfrak{d}_{0,\xi,\mu}= -\mathfrak{d}_{0,-\xi,\mu}\sigma_1, \quad \sigma_1(\mathrm{Dom}(\mathfrak{d}_{0,\xi,\mu})) = \mathrm{Dom}(\mathfrak{d}_{0,\xi,\mu}) = \mathrm{Dom}(\mathfrak{d}_{0,0,\mu}),
\]
from which it follows that
\[
	\mathrm{sp}(\mathfrak{d}_{0,\xi,\mu}) = -\mathrm{sp}(\mathfrak{d}_{0,-\xi,\mu}).
\]
The previous relationships can be linked to the supersymmetric theory of Dirac operator for which one can find an introduction in \cite[Chapter 5]{MR1219537}. Under this formalism, $\sigma_1$ is called a \emph{grading operator} and $\mathfrak{d}_{0,0,\mu}$ a \emph{supercharge}. Applying \cite[Theorem 5.13]{MR1219537} to $\mathfrak{d}_{0,\xi,\mu} = \mathfrak{d}_{0,0,\mu}+\sigma_1\xi$, we deduce that 
\begin{equation}\label{eq.specxi}
	\mathrm{sp}(\mathfrak{d}_{0,\xi,\mu}) 
	= \left\{
		\pm\sqrt{\xi^2 + \lambda^2}\,,\lambda\in\mathrm{sp}(\mathfrak{d}_{0,0,\mu})
	\right\}\,,
\end{equation}
and Point \ref{pt.1xi0} follows.
Let us give some details of the arguments leading to this result as well as some explicit formulas for the eigenvectors.

Let $\lambda\in\mathrm{sp}(\mathfrak{d}_{0,0,\mu})\cap \mathbb{R}^+$ and $\varphi_+\in  \mathrm{Dom}(\mathfrak{d}_{0,0,\mu})$ an associated normalized eigenvector. We remark that $\varphi_- = \sigma_1 \varphi_+$ is a normalized eigenvector of $\mathfrak{d}_{0,0,\mu}$ associated with the eigenvalue $-\lambda$. Since $\lambda\ne0$, $\mathcal{B} = \{\varphi_\pm\}$ is an orthonormal family that spans a vector space left stable by $\mathfrak{d}_{0,0,\mu}$, $\sigma_1$ and $\mathfrak{d}_{0,\xi,\mu}$. The matrix of the operator $\mathfrak{d}_{0,\xi,\mu}$ in the basis $\mathcal{B}$ is then
\[
	\xi\sigma_1 + \lambda\sigma_3\,,
\]
whose eigenvalues are $\{\pm\sqrt{\xi^2 + \lambda^2}\}$. This shows \eqref{eq.specxi}. We can also verify that the vector whose coordinates in $\mathcal{B}$ are
\[
v_{\xi,+} = c_\xi\begin{pmatrix}
	1\\
	\frac{\xi}{\sqrt{\xi^2+\lambda^2} + \lambda}
\end{pmatrix}\,
\text{ with }
c_\xi = \left(\frac{\sqrt{\xi^2+\lambda^2} + \lambda}{2\sqrt{\xi^2+\lambda^2}}\right)^{\frac{1}{2}}\,,
\]
is a normalized eigenvector of $\mathfrak{d}_{0,\xi,\mu}$ associated with the eigenvalue $\sqrt{\xi^2 + \lambda^2}$. 

%

%
%
%
%

\subsection{Analytic expansions}
The Taylor series expansions of the functions $k_1$ and $\nu_1(0,\mu)$ at $0$ begin as:
\[\begin{split}
k_1(\mu) 
&=
\frac{\pi}{4}
+\frac{2}{\pi} \mu 
- \frac{16}{\pi^3} \mu^2 
+ \left(\frac{256}{\pi^5} - \frac{32}{3 \pi^3}\right)\mu^3
+\left(-\frac{5120}{\pi^7} + \frac{1024}{3 \pi^5}\right)\mu^4
+  \mathcal{O}(\mu^5)
\,,
\\
\nu_1(0,\mu)
&=
\frac{\pi}{4} 
+ \frac{2}{\pi}  \mu
+ \left(- \frac{16}{\pi^{3}} + \frac{2}{\pi^{}}\right) \mu^{2} 
+ \left(\frac{256}{\pi^5} - \frac{80}{3 \pi^3}\right)\mu^{3}  
+ \left(- \frac{5120}{\pi^{7}} - \frac{8}{\pi^{3}} + \frac{1792}{3 \pi^{5}}\right)\mu^{4}
+ O\left(\mu^{5}\right)
\,.
\end{split}\]
We computed these coefficients using a SymPy code based on the fact that \( k_1 \) solves
\[
\begin{cases}
	\frac{dk_1(\mu)}{d\mu} = \frac{k_1(\mu)}{\mu + 2\mu^2 + 2k_1(\mu)^2}, \\
	k_1(0) = \frac{\pi}{4}.
\end{cases}
\]
The code solves the equations for the coefficients of the series expansion of $k_1$ obtained by equating the expansions of \( k_1' \) and \(\frac{k_1}{\mu + 2\mu^2 + 2k_1^2}\).

\subsection{Momentum computation}
Along the proof, we will need the following result.
\begin{lemma}\label{lem.moment}
Let $\xi, \mu\in\R$. We have
\[
\langle\varphi_{\xi,\mu,1}, \sigma_1 t \varphi_{\xi,\mu,1}\rangle
=
\frac{4}{\pi^2} +(|\xi|^2+|\mu|)r\,,
\]
with $r\in S^0(\R^2, \mathscr{L}(\C))$.
\end{lemma}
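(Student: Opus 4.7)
The plan is to use the explicit formulas of Proposition \ref{prop.straightstrip} (Points \ref{pt.2xi0} and \ref{pt.3xi0}) to reduce the bracket to a quantity depending only on $\mu$, and then Taylor-expand in $\mu$ around $0$.

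\textbf{Step 1 (reduction to $\xi=0$).} Recall the factorization
\[
\varphi_{\xi,\mu,1}=c_\xi\bigl(I_2+\alpha(\xi,\mu)\,\sigma_1\bigr)\varphi_{0,\mu,1},\qquad \alpha(\xi,\mu)=\frac{\xi}{\sqrt{\xi^2+\nu_1(0,\mu)^2}+\nu_1(0,\mu)}\in\mathbb{R}.
\]
Using $\sigma_1^{*}=\sigma_1$ and $\sigma_1^{2}=I_2$, a short pointwise calculation gives
\[
\langle\varphi_{\xi,\mu,1},\sigma_1 t\,\varphi_{\xi,\mu,1}\rangle
= c_\xi^{2}(1+\alpha^2)\langle\varphi_{0,\mu,1},\sigma_1 t\,\varphi_{0,\mu,1}\rangle
+2c_\xi^{2}\alpha\,\langle\varphi_{0,\mu,1},t\,\varphi_{0,\mu,1}\rangle.
\]
The cross term vanishes by parity: the explicit formula shows that $|\varphi_{0,\mu,1}(t)|^{2}=c^{2}\bigl(\sin^{2}(kt)/\sin^{2}(k)+\cos^{2}(kt)/\cos^{2}(k)\bigr)$ is even in $t$. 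Setting $S=\sqrt{\xi^2+\nu_1(0,\mu)^2}$ and $\nu=\nu_1(0,\mu)$, one computes $c_\xi^{2}=(S+\nu)/(2S)$ and $1+\alpha^{2}=2S/(S+\nu)$, hence $c_\xi^{2}(1+\alpha^{2})=1$. Consequently the bracket is independent of $\xi$ and equals $I(\mu):=\langle\varphi_{0,\mu,1},\sigma_1 t\,\varphi_{0,\mu,1}\rangle$.

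\textbf{Step 2 (closed-form evaluation at $\mu=0$).} Plugging in $\varphi_{0,\mu,1}$, the pointwise integrand reduces, via $2\sin(kt)\cos(kt)=\sin(2kt)$ and $2\sin(k)\cos(k)=\sin(2k)$, to
\[
\langle\varphi_{0,\mu,1}(t),\sigma_1\varphi_{0,\mu,1}(t)\rangle_{\mathbb{C}^{2}}
=-\frac{2c(\mu)^{2}\sin\bigl(2k(\mu)t\bigr)}{\sin\bigl(2k(\mu)\bigr)},
\]
and one integration by parts yields a closed form in $k(\mu)$ and $c(\mu)$. At $\mu=0$, we have $k(0)=\pi/4$ (from $\mu=-k/\tan(2k)$) and $c(0)^{2}=1/4$ (from the normalization $c^{2}=k\sin^{2}(2k)/(4k-\sin(4k))$), yielding $I(0)=4/\pi^{2}$.

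\textbf{Step 3 (smoothness and symbol class).} The function $\mu\mapsto k(\mu)$ is smooth near $0$ by the implicit function theorem applied to $\mu=-k/\tan(2k)$ (the derivative of the RHS at $k=\pi/4$ is nonzero), and consequently so is $c(\mu)$ and $I(\mu)$. Therefore
\[
I(\mu)=\frac{4}{\pi^{2}}+\mu\,g(\mu),
\]
with $g$ smooth and bounded on a neighborhood of $0$. Since $I$ does not depend on $\xi$, this $\mu g(\mu)$ is automatically bounded by $(|\xi|^{2}+|\mu|)$ times a symbol in $S^{0}(\mathbb{R}^{2},\mathscr{L}(\mathbb{C}))$, which gives the claim.

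\textbf{Main obstacle.} There is no substantial difficulty; the only nontrivial point is the algebraic identity $c_\xi^{2}(1+\alpha^{2})=1$ of Step 1, which eliminates the $\xi$-dependence and reduces the problem to an elementary one-variable computation.
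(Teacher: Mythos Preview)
Your proof is correct and follows essentially the same approach as the paper: both use the factorization of $\varphi_{\xi,\mu,1}$ from Proposition~\ref{prop.straightstrip} to reduce to the $\xi=0$ case (with the cross term $\langle\varphi_{0,\mu,1},t\,\varphi_{0,\mu,1}\rangle$ vanishing by parity), then compute the remaining integral explicitly and Taylor-expand in $\mu$. Your algebraic identity $c_\xi^{2}(1+\alpha^{2})=1$ is in fact slightly sharper than the paper's estimate $|c_\xi|^{2}(1+|c_\xi^{1}|^{2})=1+\mathscr{O}(|\xi|^{2})$, showing that the bracket is exactly $\xi$-independent; this is a pleasant simplification but not a different method.
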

\begin{proof}
By Proposition \ref{prop.straightstrip}, we have
\[\begin{split}
\langle\varphi_{\xi,\mu,1}, \sigma_1 t \varphi_{\xi,\mu,1}\rangle
=|c_\xi|^2\left(
(1+|c^1_\xi|^2)\langle\varphi_{0,\mu,1}, \sigma_1 t \varphi_{0,\mu,1}\rangle
+2c^1_\xi \langle\varphi_{0,\mu,1},  t \varphi_{0,\mu,1}\rangle
\right)\,,
\end{split}\]
with $c_\xi^1 = \frac{\xi}{\sqrt{\xi^2+\nu(0,\mu)^2} + \nu(0,\mu)}$. Using the explicit formula for $\varphi_{0,\mu,1}$, we get that 
\[\begin{split}
	&\langle\varphi_{0,\mu,1},  t \varphi_{0,\mu,1}\rangle =0\,,
	\\&\langle\varphi_{0,\mu,1},  \sigma_1t \varphi_{0,\mu,1}\rangle =2|c|^2\int_{-1}^1 t\left(
	\frac{\sin(kt)\cos(kt)}{\sin(k)\cos(k)}
	\right){\rm d}t
	=
	2|c|^2\frac{1 - 2k \cot(2k)}{2k^2}\,,
\end{split}\]
where $c = \left(\frac{k\sin^2(2k)}{4k-\sin(4k)}\right)^{\frac{1}{2}}$. The result follows from Proposition \ref{prop.straightstrip}, explicit computations and
\[\begin{split}
	&\langle\varphi_{0,\mu,1},  \sigma_1t \varphi_{0,\mu,1}\rangle = \frac{4}{\pi^2} + \mathscr{O}(|\mu|)\,,
	\\
	&|c_\xi|^2(1+|c^1_\xi|^2) = 1 + \mathscr{O}(|\xi|^2)\,.
\end{split}\]
\end{proof}

\section{Some technical lemmas}

\subsection{The resolvent as a symbol}
We recall that $\mathfrak{d}_0$ is defined in \eqref{eqn:defdbarj}, $\mathfrak{d}^\pm_0$ in \eqref{eq.d0pm} and $\Xi_{1,\pm,\epsilon}$ in \eqref{eq.regu111}.

\begin{lemma}\label{lem.A1}
Let $M>0$. There exists $\epsilon_0,C>0$ such that for all $\mu\in(-1,1)$, all $\lambda\in[0,\nu_1(0,\mu)+M\epsilon^2]$,  and all $\epsilon\in(0,\epsilon_0)$, the following holds.
\begin{enumerate}[label = \rm (\roman*)]
\item
For  all $\xi\in\R$, the operator ${\mathfrak{d}^\pm_0} - \lambda$ is bijective and 
	\begin{equation}\label{eq.estires1}
		\|(\mathfrak{d}^\pm_0-\lambda)^{-1}\|\leq C\epsilon^{-2\eta}\langle\xi\rangle^{-1} \,.
	\end{equation}
	\item \label{pt.symbolD0pm} The operator $({\mathfrak{D}_0^\pm} - \lambda)$ is invertible and $({\mathfrak{D}_0^\pm} - \lambda)^{-1} = \left(({\mathfrak{d}_0^\pm} - \lambda)^{-1}\right)^{W}$ is a Fourier multiplier whose symbol $({\mathfrak{d}_0^\pm} - \lambda)^{-1}$ belongs to $\epsilon^{-2\eta}S^{-1}_{2\eta}(\R^2,\mathscr{L}(\mathscr{B}))$ and
	\begin{equation}\label{eq.estires2}
		\|({\mathfrak{D}_0^\pm} - \lambda)^{-1}\|\leq C\epsilon^{-2\eta}\,.
	\end{equation}
\end{enumerate}
\end{lemma}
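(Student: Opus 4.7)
The plan is to read off the spectrum of $\mathfrak{d}_0^\pm$ from the supersymmetric analysis already carried out in Proposition~\ref{prop.straightstrip}, then combine a distance-to-spectrum estimate with a careful derivative count to obtain the symbol-class statement.

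First I would observe that $\mathfrak{d}_0^\pm$ is exactly the operator $\mathfrak{d}_{0,\xi',\mu}$ of Proposition~\ref{prop.straightstrip} with $\xi'$ replaced by $\Xi_{1,\pm,\epsilon}(\xi)$. Point~\ref{pt.3xi0} of that proposition then yields
\[
{\rm Spec}(\mathfrak{d}_0^\pm) = \left\{\pm\sqrt{\Xi_{1,\pm,\epsilon}(\xi)^2 + \nu_j(0,\mu)^2}\,:\,j\geq 1\right\},
\]
and by self-adjointness \eqref{eq.estires1} is equivalent to a lower bound on ${\rm dist}(\lambda,{\rm Spec}(\mathfrak{d}_0^\pm))$. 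Under the hypothesis $0 \leq \lambda \leq \nu_1(0,\mu) + M\epsilon^2$ with $|\mu| < 1$ (so $\nu_1(0,\mu)$ stays in a fixed compact interval), the closest spectral point is $E := \sqrt{\Xi_{1,\pm,\epsilon}(\xi)^2 + \nu_1(0,\mu)^2}$, and I would use
\[
E - \lambda \;\geq\; \frac{\Xi_{1,\pm,\epsilon}(\xi)^2}{E + \nu_1(0,\mu)} - M\epsilon^2.
\]
Splitting into $|\xi|\leq 2R\epsilon^\eta$ (where $\Xi_{1,\pm,\epsilon}(\xi)^2\geq R^2\epsilon^{2\eta}/4$ and $\langle\xi\rangle\lesssim 1$) and $|\xi|\geq 2R\epsilon^\eta$ (where $\Xi_{1,\pm,\epsilon}(\xi)=\xi$ and so $\Xi^2/(E+\nu_1)\gtrsim \xi^2/\langle\xi\rangle$), an elementary case check yields $E-\lambda\geq c\epsilon^{2\eta}\langle\xi\rangle$ for $\epsilon$ small enough, which is exactly \eqref{eq.estires1}.

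For (ii) I would start from the fact that the symbol $(\mathfrak{d}_0^\pm - \lambda)^{-1}$ is $s$-independent, so its Weyl quantization is a bona fide Fourier multiplier and automatically coincides with $(\mathfrak{D}_0^\pm-\lambda)^{-1}$; the operator bound \eqref{eq.estires2} reduces to the pointwise supremum estimate from~(i). The remaining work is the symbol-class estimate. Iterating $\partial_\xi A^{-1} = -A^{-1}(\partial_\xi A)A^{-1}$ with $A=\mathfrak{d}_0^\pm-\lambda$, the derivative $\partial_\xi^k(\mathfrak{d}_0^\pm-\lambda)^{-1}$ becomes a sum of terms consisting of $l+1$ resolvent factors interleaved with $l$ factors $\sigma_1\Xi_{1,\pm,\epsilon}^{(j_i)}$ with $j_i\geq 1$ and $\sum j_i = k$, for some $1\leq l\leq k$. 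The scaling $\Xi_{1,\pm,\epsilon}(\xi) = \epsilon^\eta\Xi_{1,\pm}(\epsilon^{-\eta}\xi)$ gives $\|\Xi_{1,\pm,\epsilon}^{(j)}\|_\infty \leq C\epsilon^{-\eta(j-1)}$, and for $j\geq 2$ the support is contained in $|\xi|\lesssim\epsilon^\eta$. Combining with~(i), each such term is bounded by $C\epsilon^{-2\eta(l+1)-\eta(k-l)}\langle\xi\rangle^{-(l+1)}$; since the inequality $2(l+1)+(k-l)\leq 2(k+1)$ is equivalent to $l\leq k$ and hence always holds, this fits inside the class $\epsilon^{-2\eta}S^{-1}_{2\eta}$.

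The hard part, and the reason the class $S_{2\eta}$ rather than $S_0$ appears here, is exactly this derivative book-keeping: one must balance the blow-up $\epsilon^{-\eta(j-1)}$ of high derivatives of $\Xi_{1,\pm,\epsilon}$ against the shrinking support $|\xi|\lesssim\epsilon^\eta$ on which they live, and simultaneously against the $\epsilon^{-2\eta}$ resolvent loss in the regime where the spectral gap is smallest. Once this count is set up, the proof uses nothing beyond the triangle inequality and the resolvent derivation formula.
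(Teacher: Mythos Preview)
Your proposal is correct and follows essentially the same strategy as the paper's proof. For (i), both arguments reduce to the same difference-of-squares lower bound $E-\lambda\geq\Xi^2/(E+\nu_1)-M\epsilon^2$; the paper packages the final step via a monotone auxiliary function $C_\xi$ evaluated at the minimum $|\Xi_{1,\pm,\epsilon}|=R\epsilon^\eta/2$, whereas you split into the two regions $|\xi|\lessgtr 2R\epsilon^\eta$ --- these are interchangeable elementary arguments. For (ii), the paper organizes the derivative count through the Fa\`a di Bruno formula applied to the composition $(\mathfrak{d}_0-\lambda)^{-1}\circ\Xi_{1,\pm,\epsilon}$, while you expand $\partial_\xi^k A^{-1}$ directly via the iterated resolvent identity; the resulting terms and bounds coincide. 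One small remark: your parenthetical about the support of $\Xi_{1,\pm,\epsilon}^{(j)}$ for $j\geq2$ is true but not actually needed --- your displayed bound $C\epsilon^{-2\eta(l+1)-\eta(k-l)}\langle\xi\rangle^{-(l+1)}$ already fits into $\epsilon^{-2\eta}S^{-1}_{2\eta}$ by the inequality $l\leq k$ alone, so the closing commentary about ``balancing against the shrinking support'' slightly oversells the combinatorics.
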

	\begin{proof}
	Let us start by noting that there exists a constant $c>0$ such that for $\mu\in(-1,1)$, 
	\[
		c\leq \nu_1(0,\mu)= \sqrt{\mu^2+k_1(\mu)}\leq c^{-1}\,,
	\]
	where the $\nu_n$ are given in Proposition \ref{prop.straightstrip}.
	The proof is structured into several steps.
	\begin{enumerate}[label=\it Step \arabic*.,  leftmargin=0cm, itemindent = 1.5cm]
	
		\item First, we establish a lower bound on $\nu_1(\Xi_{1,\pm,\epsilon}(\xi),\mu)-\lambda$. We have for all $\xi\in\mathbb{R}^*$,
		\[\begin{split}
			&\nu_1(\xi,\mu)-\lambda
			\geq 
			\frac{\nu_1(\xi,\mu)^2-\lambda^2}{\nu_1(\xi,\mu)+\lambda}
			\geq 
			\frac{\nu_1(\xi,\mu)^2-(\nu_1(0,\mu)+M\epsilon^2)^2}{\nu_1(\xi,\mu)+\nu_1(0,\mu)+M\epsilon^2}
			\\&\geq
			\frac{\xi^2+\mu^2+k_1(\mu)-(\sqrt{\mu^2+k_1(\mu)}+M\epsilon^2)^2}{\sqrt{\xi^2+ \mu^2+k_1(\mu)}+\sqrt{\mu^2+k_1(\mu)}+M\epsilon^2}
			\\&\geq
						\frac{
				\xi^2-2\sqrt{\mu^2+k_1(\mu)}M\epsilon^2 - M^2\epsilon^4
			}{
				\sqrt{\xi^2+ \mu^2+k_1(\mu)}+\sqrt{\mu^2+k_1(\mu)}+M\epsilon^2
			}
			\geq\langle\xi\rangle C_\xi\,,
		\end{split}\]
		where, for some constant $C>0$ uniform in $\epsilon$ and $\xi$, we have set
		\[
			C_\xi = \frac{|\xi|}{\langle\xi\rangle}
			\frac{
				1-C\epsilon^2/\xi^2
			}{
				\sqrt{1+ \frac{\mu^2+k_1(\mu)}{\xi^2}}+\sqrt{\mu^2+k_1(\mu)}/ |\xi|+M\epsilon^2/|\xi|
			}\,.
		\]
		The map $\xi\mapsto C_\xi$  is increasing on $[\sqrt{C}\epsilon,+\infty)$ and even. Moreover, as $\eta \in (0,\frac12)$ for $\epsilon$ small enough, one has
		\[
			|\Xi_{1,\pm,\epsilon}|(\R) = \left[\epsilon^\eta \frac{R}2,+\infty\right) \subset [\sqrt{C}\epsilon,+\infty)
		\]
		so that $C_{\Xi_{1,\pm,\epsilon}(\xi)}\geq  C_{\epsilon^\eta R/2}$. A straightforward computation then shows that for some constant $k > 0$ there holds $C_{\epsilon^\eta R/2}\geq k\epsilon^{2\eta}$, which proves that
		\begin{equation}\label{eq.complowerboundnu1}
			\nu_1(\Xi_{1,\pm,\epsilon}(\xi),\mu)-\lambda\geq  \langle\xi\rangle k \epsilon^{2\eta}\,.
		\end{equation}
		\item The spectrum of $\mathfrak{d}_0^\pm$ consists of the values  $(\pm \nu_n(\Xi_{1,\pm,\epsilon}(\xi),\mu))_{n\geq 1}$ ensuring that  \eqref{eq.complowerboundnu1} makes ${\mathfrak{d}^\pm_0} - \lambda$ bijective and 
		\begin{equation}\label{eq.complowerboundnu2}
			\|({\mathfrak{d}_0^\pm} - \lambda)^{-1}\|_{\mathscr{L}(\mathscr{B})} = \frac1{{\rm dist}(\lambda,\rm sp({\mathfrak{d}_0^\pm}))}=\frac{1}{\nu_1(\Xi_{1,\pm,\epsilon}(\xi),\mu)-\lambda}\leq \frac1k\epsilon^{-2\eta}\langle\xi\rangle^{-1}\,.
		\end{equation}
		This implies \eqref{eq.estires1} and \eqref{eq.estires2}
		\item
		Differentiating the identity $({\mathfrak{d}_0} - \lambda)({\mathfrak{d}_0} - \lambda)^{-1} = {\rm Id}$, with respect to $\xi$ gives
		\begin{multline*}
		0 
		= \left(\partial_\xi({\mathfrak{d}_0} - \lambda)\right)({\mathfrak{d}_0} - \lambda)^{-1} 
		+({\mathfrak{d}_0} - \lambda)\partial_\xi({\mathfrak{d}_0} - \lambda)^{-1}
		= \sigma_1({\mathfrak{d}_0} - \lambda)^{-1} +({\mathfrak{d}_0} - \lambda)\partial_\xi({\mathfrak{d}_0} - \lambda)^{-1}
		\end{multline*}
		so that
		\begin{equation*}
			\partial_\xi({\mathfrak{d}_0} - \lambda)^{-1} = -({\mathfrak{d}_0} - \lambda)^{-1}\sigma_1({\mathfrak{d}_0} - \lambda)^{-1}\,,
		\end{equation*}
		and by induction, we get for all $k\in\mathbb{N}$,
		\begin{equation}\label{eq.idRes}
			\partial_\xi^{(k)}({\mathfrak{d}_0} - \lambda)^{-1} = (-1)^kk!\left(({\mathfrak{d}_0} - \lambda)^{-1}\sigma_1\right)^{ k}({\mathfrak{d}_0} - \lambda)^{-1}\,.
		\end{equation}
		\item In this step, we examine properties of the derivatives of  $\Xi_{1,\pm,\epsilon}$. In particular, one remarks that there exist constants $(C_\alpha)_{\alpha\in\mathbb{N}}\subset(0,+\infty)$ such that for all $\xi\in\mathbb{R}$, all $\epsilon\in(0,\epsilon_0)$, all $\alpha\in\mathbb{N}^*$
		\begin{equation}\label{eq.propXipm}
			\begin{split}
				&|\Xi_{1,\pm,\epsilon}(\xi)|\leq C_0\langle\xi\rangle\,,
				\\&
				|\partial_\xi^{(\alpha)}\Xi_{1,\pm,\epsilon}(\xi)| 
				= \epsilon^{\eta(1-\alpha)}|\Xi_{1,\pm}^{(\alpha)}(\epsilon^{-\eta}\xi)|
				\leq C_\alpha\epsilon^{\eta(1-\alpha)}\,.
			\end{split}
		\end{equation}
		\item To obtain Point \ref{pt.symbolD0pm}, we will now combine identities \eqref{eq.estires1}, \eqref{eq.idRes} and \eqref{eq.propXipm} with Faà-Di-Bruno's formula:  we have for all $\xi\in\mathbb{R}$, $\alpha\in\mathbb{N}$,
		\[\begin{split}
			&\partial_\xi^{(\alpha)} (\mathfrak{d}^\pm_0-\lambda)^{-1} 
			=  \partial_\xi^{(\alpha)}\left( (\mathfrak{d}_0-\lambda)^{-1}\circ \Xi_{1,\pm,\epsilon}(\xi)\right)
			\\&= \sum_{k=1}^{\alpha}\left( (\mathfrak{d}_0-\lambda)^{-1}\right)^{(k)}(\Xi_{1,\pm,\epsilon}(\xi)) \cdot B_{\alpha,k} \left(\Xi_{1,\pm,\epsilon}'(\xi), \Xi_{1,\pm,\epsilon}''(\xi), \ldots, \Xi_{1,\pm,\epsilon}^{(\alpha-k+1)}(\xi)\right),
		\end{split}\]
where \(B_{\alpha,k}\) are the Bell polynomials defined as:
\[
B_{\alpha,k}(x_1, x_2, \ldots, x_{\alpha-k+1}) = \sum \frac{\alpha!}{j_1! j_2! \cdots j_{\alpha-k+1}!} \left( \frac{x_1}{1!} \right)^{j_1} \left( \frac{x_2}{2!} \right)^{j_2} \cdots \left( \frac{x_{\alpha-k+1}}{(\alpha-k+1)!} \right)^{j_{\alpha-k+1}},
\]
and the sum is taken over all sequences \(j_1, j_2, \ldots, j_{\alpha-k+1}\) of nonnegative integers such that:
\[
j_1 + j_2 + \cdots + j_{\alpha-k+1} = k \quad \text{and} \quad j_1 + 2j_2 + \cdots + (\alpha-k+1)j_{\alpha-k+1} = \alpha.
\]
For $k\in\{1,\dots,\alpha\}$, and a sequence \(j_1, j_2, \ldots, j_{\alpha-k+1}\) as above,  by \eqref{eq.propXipm},
 there exists $C>0$ such that
\[\begin{split}
	&\left|
		\left( \partial_\xi^{(1)}\Xi_{1,\pm,\epsilon}(\xi) \right)^{j_1} \left(\partial_\xi^{(2)}\Xi_{1,\pm,\epsilon}(\xi) \right)^{j_2} \cdots \left( \partial_\xi^{(\alpha-k+1)}\Xi_{1,\pm,\epsilon}(\xi) \right)^{j_{\alpha-k+1}}
	\right|
	\\&\leq
	C \left[\epsilon^\eta\right]^{\sum_{i = 1}^{\alpha-k+1}j_i(1-i)}
	 = C\left[\epsilon^\eta\right]^{k - \alpha}\,,
\end{split}\]
so that
\begin{equation}\label{eq.Bell}
|B_{\alpha,k} \left(\Xi_{1,\pm,\epsilon}'(\xi), \Xi_{1,\pm,\epsilon}''(\xi), \ldots, \Xi_{1,\pm,\epsilon}^{(\alpha-k+1)}(\xi)\right)|\leq C\epsilon^{\eta(k - \alpha)}\,.
\end{equation}
By \eqref{eq.estires1}, \eqref{eq.complowerboundnu2} and \eqref{eq.idRes}, there exists $C>0$ such that
\begin{equation}\label{eq.FDB1}
	\|\left( (\mathfrak{d}_0-\lambda)^{-1}\right)^{(k)}(\Xi_{1,\pm,\epsilon}(\xi)) \|_{\mathscr{L}(\mathscr{B})}
	\leq
	C\epsilon^{-2\eta(k+1)}<\xi>^{-(k+1)}\,.
\end{equation}
Hence, by \eqref{eq.Bell} and \eqref{eq.FDB1}, we get
\[\begin{split}
	&\left\|
		\left( (\mathfrak{d}_0-\lambda)^{-1}\right)^{(k)}(\Xi_{1,\pm,\epsilon}(\xi)) \cdot B_{\alpha,k} \left(\Xi_{1,\pm,\epsilon}'(\xi), \Xi_{1,\pm,\epsilon}''(\xi), \ldots, \Xi_{1,\pm,\epsilon}^{(\alpha-k+1)}(\xi)\right)
	\right\|_{\mathscr{L}(\mathscr{B})}
	\\&\leq C\left[\epsilon^\eta\right]^{-2(k+1)+k-\alpha}\langle\xi\rangle^{-(k+1)}
	\leq C\left[\epsilon^\eta\right]^{-\alpha-2 - k}\langle\xi\rangle^{-1}
\end{split}\]
Therefore, summing over $k$, we get
\begin{align*}
	\left\|
	\partial_\xi^{(\alpha)} (\mathfrak{d}^\pm_0-\lambda)^{-1}
	\right\|_{\mathscr{L}(\mathscr{B})}
	&\leq
	C\left[\epsilon^{-\eta}\right]^{2+\alpha}\left(\sum_{k=1}^\alpha(\epsilon^{-\eta})^{k}\right)\langle\xi\rangle^{-1}\,\\
	& \leq C' (\epsilon^{-\eta})^{2(1+\alpha)}\langle\xi\rangle^{-1}
\end{align*}
	We then deduce  that $({\mathfrak{d}_0^\pm} - \lambda)^{-1} \in \epsilon^{-2\eta}S^{-1}_{2\eta}(\R^2,\mathscr{L}(\mathscr{B}))$.
	\end{enumerate}
\end{proof}


%
%

\subsection{The orthogonal projector}\label{sec.symbolProj}
	We recall that $\Pi_{\xi,\mu}=\langle\varphi_{\xi,\mu,1},\cdot\rangle_{L^2(I)}$ is defined in Lemma \ref{lem.parametrix} and $\varphi_{\xi,\mu,1}$ in Proposition \ref{prop.straightstrip}.
\begin{lemma}\label{lem.A3}
	For $\mathscr{X} \in \{\mathscr{A},\mathscr{B}\}$, the operator $\Pi_{\xi,\mu}$ belongs to $S^0(\R^2,\mathscr{L}(\mathscr{X},\mathbb{C}))$ and the operator $\Pi_{\xi,\mu}^*$ belongs to $S^0(\R^2,\mathscr{L}(\mathbb{C},\mathscr{X}))$. In particular, $\Pi^*_{\xi,\mu}\Pi_{\xi,\mu}$ belongs to $S^0(\R^2,\mathscr{L}(\mathscr{B},\mathscr{A}))$.
\end{lemma}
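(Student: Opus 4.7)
The plan is to exploit the two key structural facts: that $\Pi_{\xi,\mu}$ has no dependence on the spatial variable $s$, and that the eigenvector $\varphi_{\xi,\mu,1}$ admits the explicit formula given in part \ref{pt.3xi0} of Proposition \ref{prop.straightstrip}. Since $\Pi_{\xi,\mu}$ is $s$-independent, every seminorm $\|\partial_s^\alpha \partial_\xi^\beta \Pi_{\xi,\mu}\|$ with $\alpha\geq 1$ vanishes, and the only estimates required are
\[
\|\partial_\xi^\beta \Pi_{\xi,\mu}\|_{\mathscr{L}(\mathscr{X},\C)} \leq C_\beta, \qquad \|\partial_\xi^\beta \Pi^*_{\xi,\mu}\|_{\mathscr{L}(\C,\mathscr{X})} \leq C_\beta,
\]
uniformly in $\xi\in\R$. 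Since $\partial_\xi^\beta \Pi_{\xi,\mu} = \langle \partial_\xi^\beta \varphi_{\xi,\mu,1},\cdot\rangle_{L^2(I)}$ and $\partial_\xi^\beta \Pi^*_{\xi,\mu}$ is multiplication by $\partial_\xi^\beta\varphi_{\xi,\mu,1}$, using the continuous embedding $\mathscr{A}\hookrightarrow \mathscr{B}$ both estimates reduce to the single task of bounding $\|\partial_\xi^\beta \varphi_{\xi,\mu,1}\|_{\mathscr{A}}$ uniformly in $\xi$.

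I would then invoke Proposition \ref{prop.straightstrip} to write
\[
\varphi_{\xi,\mu,1} = c_\xi\bigl(I_2 + g_\xi\,\sigma_1\bigr)\varphi_{0,\mu,1}, \qquad g_\xi := \frac{\xi}{\sqrt{\xi^2+\nu^2}+\nu},\quad c_\xi := \Bigl(\tfrac{\sqrt{\xi^2+\nu^2}+\nu}{2\sqrt{\xi^2+\nu^2}}\Bigr)^{1/2},
\]
with $\nu := \nu_1(0,\mu)$. Since $\mu$ lies in a range for which $\nu>0$ (which is automatic under the hypothesis $\mu\in(-\mu_0,\mu_0)$ of Lemma \ref{lem.parametrix}), one elementary computation gives $|g_\xi|\leq 1$ and $c_\xi^2 \in [1/2,1]$. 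Moreover, Faà di Bruno combined with the bound $|\partial_\xi^k\sqrt{\xi^2+\nu^2}| \leq C_k \nu^{1-k}$ for $k\geq 1$ shows that $\xi\mapsto c_\xi$ and $\xi\mapsto g_\xi$ are smooth scalar functions whose derivatives of any order are bounded uniformly in $\xi\in\R$. As $\varphi_{0,\mu,1}$ is a fixed element of $\mathscr{A}$, it follows that the map $\xi\mapsto \varphi_{\xi,\mu,1}$ is smooth into $\mathscr{A}$ with every derivative uniformly bounded. Stability of the boundary condition $u_2(\pm 1) = \mp u_1(\pm 1)$ under $\xi$-differentiation (as it is linear) ensures that $\partial_\xi^\beta\varphi_{\xi,\mu,1}\in\mathscr{A}$.

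For the final assertion concerning $\Pi^*_{\xi,\mu}\Pi_{\xi,\mu}$, I would simply note that the symbol class $S^0$ is stable under pointwise composition in the space variable, so the Leibniz rule yields the required seminorm estimates from those already established for $\Pi_{\xi,\mu}$ and $\Pi^*_{\xi,\mu}$. The only potential obstacle is the verification that $c_\xi$ and $g_\xi$ are smooth globally in $\xi\in\R$ despite the apparent square root; but since $\sqrt{\xi^2+\nu^2}\geq \nu>0$, the argument of the square root is uniformly bounded away from $0$, so all derivatives stay bounded and no singular behaviour occurs at $\xi=0$.
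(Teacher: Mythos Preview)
Your proof is correct and follows essentially the same approach as the paper: both arguments exploit the explicit formula $\varphi_{\xi,\mu,1} = c_\xi(I_2 + g_\xi\sigma_1)\varphi_{0,\mu,1}$, observe that the scalar functions $c_\xi$ and $g_\xi$ lie in $S^0$, and use that $\varphi_{0,\mu,1}$ is a fixed element of $\mathscr{A}$. Your version is more detailed—explicitly noting the $s$-independence, verifying the boundedness of derivatives of $c_\xi$ and $g_\xi$, and checking that the boundary condition survives $\xi$-differentiation—whereas the paper simply asserts $c_\xi,g_\xi\in S^0$ and invokes the composition theorem (Theorem~\ref{thm:comp_thm}), which is slightly heavier machinery than the direct Leibniz-rule argument you use.
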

\begin{proof}

In this proof, we adhere to the notation established in Proposition \ref{prop.straightstrip}. 

The mappings $\xi \mapsto c_\xi$ and $\xi \mapsto \frac{\xi}{\sqrt{\xi^2+\nu(0,\mu)^2} + \nu(0,\mu)}$ are elements of $S^0(\mathbb{R}^2,\mathscr{L}(\mathscr{B},\mathscr{B}))$. The function $\varphi_{0,\mu,1}$ is an element of $\mathscr{A}$. 
Consequently, by Theorem \ref{thm:comp_thm}, the following statements hold:
\begin{enumerate}
    \item The operator $\Pi_{\xi,\mu}^*$ is an element of $S^0(\mathbb{R}^2,\mathscr{L}(\mathbb{C},\mathscr{A})) \subset S^0(\mathbb{R}^2,\mathscr{L}(\mathbb{C},\mathscr{B}))$.
    \item The operator $\Pi_{\xi,\mu}$ is an element of $S^0(\mathbb{R}^2,\mathscr{L}(\mathscr{B},\mathbb{C}))\subset S^0(\mathbb{R}^2,\mathscr{L}(\mathscr{A},\mathbb{C})$.
\end{enumerate}
\end{proof}

\begin{lemma}\label{lem.A4}
There exist $\delta,\mu_0>0$ such that for all $z\in(0,\frac{\pi}{4}+\delta)$, all $\mu\in(-\mu_0,\mu_0)$ and all $\xi\in\mathbb{R}$, the operator
\[(\mathfrak{d}_0 - z)_\perp:=(\mathfrak{d}_0 - z)(\mathrm{Id}-\Pi_{\xi,\mu}^*\Pi_{\xi,\mu})\]
is inversible when seen as an endomorphism of $\{\varphi_{\xi,\mu,1}\}^\perp$ and the symbol of 
\[
0\oplus(\mathfrak{d}_0 - z)_\perp^{-1}\colon
\mathscr{B} = {\rm span }\{\varphi_{\xi,\mu,1}\}\oplus \{\varphi_{\xi,\mu,1}\}^\perp\longrightarrow {\rm span}\{\varphi_{\xi,\mu,1}\}\oplus \{\varphi_{\xi,\mu,1}\}^\perp
\]
 is an element of $S^{-1}(\R^2,\mathscr{L}(\mathscr{B},\mathscr{B}))\cap S^{0}(\R^2,\mathscr{L}(\mathscr{B},\mathscr{A}))$.
\end{lemma}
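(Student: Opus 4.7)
The approach exploits that $\varphi_{\xi,\mu,1}$ is an eigenvector of $\mathfrak{d}_0$: the rank-one projector $P_+:=\Pi^*_{\xi,\mu}\Pi_{\xi,\mu}$ commutes with $\mathfrak{d}_0$, so $(\mathfrak{d}_0-z)_\perp$ is the restriction of the self-adjoint $\mathfrak{d}_0-z$ to the invariant subspace $W:=\{\varphi_{\xi,\mu,1}\}^\perp$, and by Proposition \ref{prop.straightstrip} its spectrum is $\{-\nu_1(\xi,\mu)\}\cup\{\pm\nu_j(\xi,\mu):j\geq 2\}$. To prove invertibility I would choose $\mu_0,\delta>0$ small (using continuity of $\mu\mapsto\nu_j(0,\mu)$ and $\nu_2(0,0)=3\pi/4$) so that $\nu_2(0,\mu)>\pi/4+\delta$ for $|\mu|<\mu_0$; combined with $\nu_j(\xi,\mu)=\sqrt{\xi^2+\nu_j(0,\mu)^2}$ this yields the spectral gap
\[
\mathrm{dist}\bigl(z,\mathrm{spec}(\mathfrak{d}_0|_W)\bigr)\geq c\langle\xi\rangle
\]
uniformly in $z\in(0,\pi/4+\delta)$, $|\mu|<\mu_0$, $\xi\in\mathbb{R}$. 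Hence $(\mathfrak{d}_0-z)_\perp$ is invertible with $\|(\mathfrak{d}_0-z)_\perp^{-1}\|_{\mathscr{L}(\mathscr{B})}\leq C\langle\xi\rangle^{-1}$, which is the principal $S^{-1}(\mathbb{R}^2,\mathscr{L}(\mathscr{B},\mathscr{B}))$ bound. The $\mathscr{B}\to\mathscr{A}$ bound then comes from the Dirac structure: for $f\in W$ and $u=(\mathfrak{d}_0-z)_\perp^{-1}f$, rewriting $\sigma_2D_tu=f-\xi\sigma_1u-\mu\sigma_3u+zu$ combined with the previous estimate yields $\|u\|_\mathscr{A}\leq C\|f\|_\mathscr{B}$, which is the principal $S^0(\mathbb{R}^2,\mathscr{L}(\mathscr{B},\mathscr{A}))$ bound.

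For the symbol-class estimates on the iterated $\xi$-derivatives (the symbol is $s$-independent), I would use the decomposition
\[
0\oplus(\mathfrak{d}_0-z)_\perp^{-1}=-\frac{P_-(\xi,\mu)}{z+\nu_1(\xi,\mu)}+\mathcal{S}(\xi,\mu,z),
\]
where $P_-:=\sigma_1P_+\sigma_1$ is the rank-one projector onto the $(-\nu_1)$-eigenspace and $\mathcal{S}:=(\mathfrak{d}_0|_W-z)^{-1}(I-P_+-P_-)$ is defined via the spectral decomposition of $\mathfrak{d}_0$ on the range of $I-P_+-P_-$. For the first term, Lemma \ref{lem.A3} gives $P_-\in S^0(\mathbb{R}^2,\mathscr{L}(\mathscr{B},\mathscr{A}))$ and the explicit formula $\nu_1(\xi,\mu)=\sqrt{\xi^2+\nu_1(0,\mu)^2}$ yields $(z+\nu_1)^{-1}\in S^{-1}(\mathbb{R}^2,\mathscr{L}(\mathbb{C},\mathbb{C}))$, placing the product in the desired symbol classes. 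For $\mathcal{S}$, I would introduce the auxiliary operator $\tilde{\mathfrak{d}}_0:=\mathfrak{d}_0+\alpha_+P_++\alpha_-P_-$ with $\alpha_+>\delta$ and $\alpha_-<0$ fixed so that the spectrum $\{\nu_1+\alpha_+\}\cup\{-\nu_1+\alpha_-\}\cup\{\pm\nu_j:j\geq 2\}$ of $\tilde{\mathfrak{d}}_0$ stays at distance $\geq c\langle\xi\rangle$ from $z$; the identity
\[
\mathcal{S}(z)=(\tilde{\mathfrak{d}}_0-z)^{-1}-\frac{P_+}{\nu_1+\alpha_+-z}-\frac{P_-}{-\nu_1+\alpha_--z}
\]
then reduces the analysis of $\mathcal{S}$ to that of $(\tilde{\mathfrak{d}}_0-z)^{-1}$, which is amenable to the Faà-di-Bruno induction used in Lemma \ref{lem.A1}, since $\partial_\xi\tilde{\mathfrak{d}}_0\in\sigma_1+S^0$ by Lemma \ref{lem.A3}.

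The main obstacle is this iterated differentiation: at each step one must control products such as $(\tilde{\mathfrak{d}}_0-z)^{-1}(\partial_\xi^k P_\pm)(\tilde{\mathfrak{d}}_0-z)^{-1}$ and verify that the $\langle\xi\rangle^{-1}$ decay in operator norm persists, and that the one-derivative gain from the Dirac equation passes to all iterated derivatives for the $\mathscr{B}\to\mathscr{A}$ class. This reduces to systematic application of the composition calculus of Theorem \ref{thm:comp_thm}, the $S^0$-bounds on $P_\pm$ from Lemma \ref{lem.A3}, and Faà-di-Bruno applied to $\nu_1(\xi,\mu)=\sqrt{\xi^2+\nu_1(0,\mu)^2}$, in direct analogy with the proof of Lemma \ref{lem.A1}.
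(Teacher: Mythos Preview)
Your treatment of the invertibility and of the principal $S^{-1}(\mathscr{L}(\mathscr{B},\mathscr{B}))$ and $S^0(\mathscr{L}(\mathscr{B},\mathscr{A}))$ bounds is correct and essentially identical to the paper's Steps~1--2 and Step~5: same spectral gap argument via $\nu_j(\xi,\mu)=\sqrt{\xi^2+\nu_j(0,\mu)^2}$, same extraction of the $H^1$-bound from $\sigma_2 D_t u = f - \xi\sigma_1 u - \mu\sigma_3 u + zu$.

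Where you diverge is in the handling of the $\xi$-derivatives. You split off the rank-one $P_-$ piece, then introduce a shifted operator $\tilde{\mathfrak{d}}_0=\mathfrak{d}_0+\alpha_+P_++\alpha_-P_-$ to make the full resolvent globally well-defined, and plan to differentiate that. This works, but it is heavier than necessary, and the reference to Fa\`a--di--Bruno from Lemma~\ref{lem.A1} is slightly off: there it was needed for the composition with $\Xi_{1,\pm,\epsilon}$, while here only the Leibniz-type resolvent identity is relevant. The paper instead differentiates the two defining relations $P_+\bigl(0\oplus(\mathfrak{d}_0-z)_\perp^{-1}\bigr)=0$ and $(\mathfrak{d}_0-z)\bigl(0\oplus(\mathfrak{d}_0-z)_\perp^{-1}\bigr)=\mathrm{Id}-P_+$ directly, which yields a closed formula for $\partial_\xi\bigl(0\oplus(\mathfrak{d}_0-z)_\perp^{-1}\bigr)$ as a sum of products of $0\oplus(\mathfrak{d}_0-z)_\perp^{-1}$, $\sigma_1$, and $\partial_\xi P_+$. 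One then inducts. This is shorter and has the bonus that every term in the formula carries a factor of $0\oplus(\mathfrak{d}_0-z)_\perp^{-1}$ on the outside, so the $\mathscr{B}\to\mathscr{A}$ bound on all derivatives follows immediately from the base case, whereas in your setup you would have to track this through the auxiliary pieces $P_\pm/(\pm\nu_1+\alpha_\pm-z)$ separately.
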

\begin{proof}
The proof is divided into several parts.
\begin{enumerate}[label=\it Step \arabic*.,  leftmargin=0cm, itemindent = 1.5cm]
\item
Let $\delta_1 := \min_{|\mu|\leq 1}|\nu_1(0,\mu)-\nu_2(0,\mu)|>0$. By continuity, there exists $\mu_0\in(0,1)$ such that for all $|\mu|\leq \mu_0,$
\[
	|\nu_1(0,\mu)-\pi/4|\leq\delta_1/4\,.
\]
Let $z\in(0,\pi/4+ \delta_1/4)$ and $\mu\in(-\mu_0,\mu_0)$.
We have
\[
	\nu_1(0,\mu)-z
	\geq (\nu_1(0,\mu)-\pi/4) - \delta_1/4
	\geq - \delta_1/2\,,
\]
so that
\[
	\nu_2(0,\mu)-z
	= (\nu_2(0,\mu)-\nu_1(0,\mu)) + (\nu_1(0,\mu)-z)
	\geq
	\delta_1 - \delta_1/2 = \delta_1/2\,.
\]
We also have that for $\xi\in\R$,
\[
\nu_2(\xi,\mu)-\nu_2(0, \mu)
 =\frac{\xi^2}{\nu_2(\xi,\mu)-\nu_2(0, \mu)} 
 \geq \frac{\xi^2}{2\nu_2(\xi,\mu)} 
 \geq \frac{1}{2c_0}\xi^2\langle\xi\rangle^{-1}\,,
\] 
where
\[
	c_0 = \max\{1,(\nu_2(0,\mu))_{|\mu|\leq 1}\}>0\,,
\]
so that
\begin{equation}\label{eq.eqnu2}
	\nu_2(\xi,\mu)-z
	\geq \frac{1}{2c_0}\xi^2\langle\xi\rangle^{-1} + \frac{\delta_1}{2}
	\geq \min\left\{\frac{1}{2c_0}, \frac{\delta_1}{2}\right\}\left(\xi^2\langle\xi\rangle^{-1} + 1\right)
	\geq \min\left\{\frac{1}{2c_0}, \frac{\delta_1}{2}\right\}\langle\xi\rangle\,.
\end{equation}
We also have
\begin{equation}\label{eq.eqnu1}
	\nu_1(\xi,\mu)\geq \langle\xi\rangle\min\{1,(\nu_1(0,\mu))_{|\mu|\leq 1}\}\,.
\end{equation}
\item Let $(\varphi_{\xi,\mu,n})_{n\in\Z^*}$ be an orthonormal basis associated with the eigenvalues $({\rm sign}(n)\nu_{|n|}(\xi,\mu))_{n\in\Z^*}$ (this extends the conventions taken in Proposition \ref{prop.straightstrip}). We have 
\[(\mathfrak{d}_0 - z)_\perp=\bigoplus_{n\neq 0,1} (\nu_n(\xi,\mu)-z)|\varphi_{\xi,\mu,n}\rangle\langle\varphi_{\xi,\mu,n}| \,.\]
Therefore, the operator $(\mathfrak{d}_0 - z)_\perp^{-1}\colon \{\varphi_{\xi,\mu,1}\}^\perp\to\{\varphi_{\xi,\mu,1}\}^\perp$ is inversible and by \eqref{eq.eqnu2} and  \eqref{eq.eqnu1},
\begin{equation}\label{eq.d0perpub}
\|(\mathfrak{d}_0 - z)^{-1}_\perp\|
\leq \max\left(\frac{1}{\nu_2(\xi,\mu)-z},\frac{1}{\nu_{1}(\xi,\mu)+z}\right)
\leq \langle \xi\rangle^{-1} C_0\,,
\end{equation}
with $C_0 = \max\left(\frac{1}{\min\left\{\frac{1}{2c_0}, \frac{\delta_1}{2}\right\}},\frac{1}{\min\{1,(\nu_1(0,\mu))_{|\mu|\leq 1}\}}\right)>0$.
\item Let us now consider the derivatives. By taking the derivative of 
\[
\Pi_{\xi,\mu}^*\Pi_{\xi,\mu}\left(0\oplus(\mathfrak{d}_0 - z)_\perp^{-1}\right)=0\,,
\quad \text{ and }\quad 
(\mathfrak{d}_0-z)\left(0\oplus(\mathfrak{d}_0 - z)_\perp^{-1}\right)=\mathrm{Id}-\Pi^*_{\xi,\mu}\Pi_{\xi,\mu}\,,
\]
we get
\begin{equation}\label{eq.eq13}
\begin{split}
&\Pi_{\xi,\mu}^*\Pi_{\xi,\mu}\partial_\xi\left(0\oplus(\mathfrak{d}_0 - z)_\perp^{-1}\right)
=-\partial_\xi(\Pi_{\xi,\mu}^*\Pi_{\xi,\mu})\left(0\oplus(\mathfrak{d}_0 - z)_\perp^{-1}\right)
\,,
\end{split}\end{equation}
and
\begin{equation}\label{eq.eq14}
(\mathfrak{d}_0-z)\partial_\xi\left(0\oplus(\mathfrak{d}_0 - z)_\perp^{-1}\right)
=
-\sigma_1\left(0\oplus(\mathfrak{d}_0 - z)_\perp^{-1}\right)
-\partial_{\xi}(\Pi^*_{\xi,\mu}\Pi_{\xi,\mu})\,.
\end{equation}
From
\[
({\rm Id} - \Pi_{\xi,\mu}^*\Pi_{\xi,\mu})(\mathfrak{d}_0-z) = \left(0\oplus(\mathfrak{d}_0 - z)_\perp\right)({\rm Id} - \Pi_{\xi,\mu}^*\Pi_{\xi,\mu})\,,
\]
and \eqref{eq.eq14}, we deduce that
\[\begin{split}
&
({\rm Id} - \Pi_{\xi,\mu}^*\Pi_{\xi,\mu})(\mathfrak{d}_0-z)\partial_\xi\left(0\oplus(\mathfrak{d}_0 - z)_\perp^{-1}\right)
\\&
=
\left(0\oplus(\mathfrak{d}_0 - z)_\perp\right)({\rm Id} - \Pi_{\xi,\mu}^*\Pi_{\xi,\mu})\partial_\xi\left(0\oplus(\mathfrak{d}_0 - z)_\perp^{-1}\right)
\\&=
-({\rm Id} - \Pi_{\xi,\mu}^*\Pi_{\xi,\mu})
\left(
	\sigma_1\left(0\oplus(\mathfrak{d}_0 - z)_\perp^{-1}\right)
	+\partial_{\xi}(\Pi^*_{\xi,\mu}\Pi_{\xi,\mu})
\right)
\end{split}\]
and
\begin{equation}\label{eq.eq15}
\begin{split}
({\rm Id} - \Pi_{\xi,\mu}^*\Pi_{\xi,\mu})\partial_\xi\left(0\oplus(\mathfrak{d}_0 - z)_\perp^{-1}\right)
=&
-\left(0\oplus(\mathfrak{d}_0 - z)_\perp^{-1}\right)
	\sigma_1\left(0\oplus(\mathfrak{d}_0 - z)_\perp^{-1}\right)\\
&-\left(0\oplus(\mathfrak{d}_0 - z)_\perp^{-1}\right)
	\partial_{\xi}(\Pi^*_{\xi,\mu}\Pi_{\xi,\mu})\,.
\end{split}\end{equation}
By \eqref{eq.eq13} and \eqref{eq.eq15}, we obtain
\begin{equation}\label{eq.eq16}
\begin{split}
\partial_\xi\left(0\oplus(\mathfrak{d}_0 - z)_\perp^{-1}\right)
=
&-\left(0\oplus(\mathfrak{d}_0 - z)_\perp^{-1}\right)
	\sigma_1\left(0\oplus(\mathfrak{d}_0 - z)_\perp^{-1}\right)
	\\
	&-
	\left(
	\partial_\xi(\Pi_{\xi,\mu}^*\Pi_{\xi,\mu})\left(0\oplus(\mathfrak{d}_0 - z)_\perp^{-1}\right)
	+
	\left(0\oplus(\mathfrak{d}_0 - z)_\perp^{-1}\right)
	\partial_{\xi}(\Pi^*_{\xi,\mu}\Pi_{\xi,\mu})
	\right)\,.
\end{split}\end{equation}
By \eqref{eq.d0perpub}, Lemma \ref{lem.A3}, and Theorem \ref{thm:comp_thm}, and there exists \(C_1 > 0\) such that
\begin{equation}\label{eq.eq17}
	\|\partial_\xi\left(0\oplus(\mathfrak{d}_0 - z)_\perp^{-1}\right)\|
	\leq C_1\langle\xi\rangle^{-1}\,.
\end{equation}
\item By \eqref{eq.eq16}, \eqref{eq.eq17}, Lemma \ref{lem.A3}, and an induction argument, for all \(k \geq 2\), there exists \(C_k > 0\) such that
\begin{equation*}
	\|\partial_\xi^k\left(0\oplus(\mathfrak{d}_0 - z)_\perp^{-1}\right)\|
	\leq C_k\langle\xi\rangle^{-1}\,,
\end{equation*}
so that
 \(
0 \oplus (\mathfrak{d}_0 - z)_\perp^{-1} \in S^{-1}(\mathbb{R}^2, \mathscr{L}(\mathscr{B}, \mathscr{B})).
\)
\item

In order to obtain the estimate involving the \(H^1\)-norm, we use the estimate \eqref{eq.d0perpub} and the triangle inequality to derive
	\begin{equation*}
		\begin{split}
		\|D_t\left(0\oplus(\mathfrak{d}_0 - z)_\perp^{-1}\right)\| &=	\|\sigma_2 D_t\left(0\oplus(\mathfrak{d}_0 - z)_\perp^{-1}\right)\| \\
		&\leq \|(\mathfrak{d}_0  - z)\left(0\oplus(\mathfrak{d}_0 - z)_\perp^{-1}\right)\|
		+\left(|\xi| +|\mu|+ |z|\right)\|\left(0\oplus(\mathfrak{d}_0 - z)_\perp^{-1}\right)\|
		\\&\leq 1 + C_0\langle\xi\rangle^{-1}\left(|\xi| +|\mu|+ |z|\right)
		\,.
	\end{split}
	\end{equation*}
Therefore, there exists $\tilde C_0>0$ such that 
	\begin{equation}\label{eq.boundedd0-z}
	\|\left(0\oplus(\mathfrak{d}_0 - z)_\perp^{-1}\right)\|_{\mathscr{L}(\mathscr{B},\mathscr{A})} \leq \tilde C_0\,.
	\end{equation}
The control of the derivatives follows from \eqref{eq.boundedd0-z}, \eqref{eq.eq16}, \eqref{eq.eq17}, Lemma \ref{lem.A3}, and an induction argument.
\end{enumerate}

\end{proof}

\bibliographystyle{abbrv}
\bibliography{le.treust-ourmieres-raymond}

\end{document}